\newif\ifreport\reporttrue
\newcommand{\age}{\Delta}
\newcommand{\ignore}[1]{}
\newtheorem{lemma}{Lemma}
\newtheorem{theorem}{Theorem}
\newtheorem{corollary}{Corollary}
\begin{document}

\title{Sampling  for Data Freshness Optimization: Non-linear Age Functions}



%
\author{
Yin Sun and Benjamin Cyr
\thanks{This paper was presented in part at IEEE SPAWC 2018 \cite{SunSPAWC2018}.}
\thanks{This work was supported in part by NSF grant CCF-1813050 and ONR grant N00014-17-1-2417. Yin Sun  is  with  the  Department  of  ECE,  Auburn  University,  Auburn,  AL 36849 USA (e-mail:  yzs0078@auburn.edu). Benjamin Cyr is with the Department  of EECS,  University of Michigan, Ann Arbor, MI 48109 USA (e-mail:  bac0027@tigermail.auburn.edu).
Benjamin participated in this study when he was an undergraduate student at Auburn University.}
}
\maketitle

\begin{abstract}
In this paper, we study how to take sample at a data source for improving the freshness of received data samples at a remote receiver. 
We use non-linear functions of the age of information to measure data freshness, and provide a  survey of non-linear age functions and their applications. 
The sampler design problem for optimizing these data freshness metrics, possibly with a sampling rate constraint, is studied. This sampling problem is formulated as a constrained Markov decision process (MDP) with a possibly uncountable state space.
We present a {complete} characterization of the optimal  solution to this MDP: The optimal sampling policy is a deterministic or randomized threshold policy, where the threshold and the randomization probabilities are characterized  based on  the optimal objective value of the MDP and the sampling rate constraint. 
The optimal sampling policy can be computed by bisection search, and the curse of dimensionality is circumvented.
These age optimality results hold for (i) general data freshness metrics represented by monotonic functions of the age of information, (ii) general service time distributions of the queueing server, (iii) both continuous-time and discrete-time sampling problems, and (iv) sampling problems both with and without the sampling rate constraint.  Numerical results suggest that the optimal sampling policies can be much better than zero-wait sampling and the classic uniform sampling.

\end{abstract}

\begin{IEEEkeywords}
Age of information, data freshness, Markov decision process, sampling.
\end{IEEEkeywords}
\section{Introduction}

Information usually has the greatest value when it is fresh \cite[p. 56]{Shapiro1999}. For example, real-time knowledge about the location, orientation, and speed of motor vehicles is imperative in autonomous driving, and the access to timely updates about the stock price and interest-rate movements is essential for developing trading strategies on the stock market. In \cite{Song1990,KaulYatesGruteser-Infocom2012}, the concept of \emph{Age of Information} was introduced  to measure the freshness of  information that a receiver has about the status of a remote source. Consider a sequence of source samples that are sent through a queue to a receiver. 
Let $U_t$ be the generation time of the newest sample that has been delivered to the receiver by time $t$. {The age of information, as a function of  $t$, is defined as} 
\begin{align}\label{eq_age_delta}
\Delta_t = t - U_t, 
\end{align}
which is the time elapsed since the newest sample was generated. 
Hence, a small  age $\Delta_t$ indicates that there exists a recently generated sample at the receiver. 



In practice, some information sources (e.g., vehicle location, stock price) vary quickly over time, while others (e.g., temperature, interest-rate) change slowly. Consider again the example of autonomous driving: The location information of motor vehicles collected 0.5 seconds ago could already be quite stale for making control decisions\footnote{A car will travel 15 meters during 0.5 seconds at the speed of 70 mph.}, but the engine temperature measured a few minutes ago is still valid for engine health monitoring. From this example, one can observe that data freshness should be evaluated based on \emph{(i)} the time-varying pattern of the source and \emph{(ii)} how valuable the fresh data is in the specific application. 
However, the age $\Delta_t$ defined in \eqref{eq_age_delta} is the time difference between data generation at the transmitter and data usage at the receiver, which cannot fully describe the source pattern and application context.\footnote{To the best of our knowledge, the issue that ``the actual age $\Delta_t$ is not a good representation of freshness'' was firstly pointed out by Anthony Ephremides in one presentation at the Information Theory and Application (ITA) Workshop in 2015.} This motivated us to seek more appropriate data freshness metrics that can interpret the role of freshness in real-time applications.



%
%

In this paper, we suggest to use a non-linear function $u(\Delta_t)$ of the age $\Delta_t$ as a data freshness metric, 
where $u(\Delta_t)$ could be the utility value of data with age $\Delta_t$, temporal autocorrelation function of the source, estimation error of signal value, or other application-specific performance metrics \cite{Cho:2003,Even:2007,Heinrich:2009,Ioannidis2009,Altman2011,Razniewski:2016,Soleymani2016-1,Soleymani2016-2,Soleymani2016-3,SunInfocom2016,AgeOfInfo2016,Kosta2017,SunSPAWC2018,Ornee2019, Champati2019,Markus2019}. 
A  survey of non-linear age functions and their applications is provided in Section \ref{sec_metrics}. Recently, the age of information has received significant  attention, because of the rapid deployment of real-time applications. A large portion of existing studies on  age  have been devoted to linear functions of the age $\Delta_t$, e.g., \cite{KaulYatesGruteser-Infocom2012,KamTIT2016,KamTIT2018,YatesTIT2018,
Bacinoglu2015,2015ISITYates,Bacinoglu2017,ArafaGLOBECOM2017,Wu2018,ArafaArXiv2018,FengArXiv2018,BacinogluISIT2018,Zhong2016,YatesISIT2017,Ceran2018,Mayekar2018,
HeTIT2018,KadotaAllerton2016,JooTON2018,HsuISIT2017,KadotaINFOCOM2018,Lu:2018,JiangISIT2018,Zhou2018,Xiao2018,Gopal2018,SoleymaniArXiv2018,Sonmez2018}. 
However, the design of efficient data update policies for optimizing non-linear age metrics remains largely unexplored.
To that end, we investigate a problem of sampling an information source, where the samples are forwarded to a remote receiver through a channel that is modeled as a FIFO queue. The optimal sampler design for optimizing non-linear age metrics is obtained. The contributions of this paper are summarized as follows:

\ignore{
Hence, some fundamental questions are: How to evaluate the freshness of data samples collected from different information sources? 
How to achieve the optimal tradeoff between data freshness and  update frequency, where the second is tightly related to update cost?


From an application perspective, the extension from the age $\Delta_{t}$ to non-linear age functions is of great importance.

On the other hand, non-linear age functions are more appropriate metrics for evaluating data freshness.
}

\begin{itemize}

\item We consider a class of data freshness metrics, where the utility for data freshness is represented  by a  \emph{non-increasing} function $u(\Delta_t)$ of the age $\Delta_t$. Accordingly, the penalty for data staleness is denoted by a  \emph{non-decreasing}  function $p(\Delta_t)$ of $\Delta_t$. The sampler design problem for optimizing these data freshness metrics, possibly with a sampling rate constraint, is considered. This sampling problem is formulated as a constrained Markov decision process (MDP) with a possibly uncountable state space.

\item We prove that an optimal sampling solution to this MDP is a deterministic or randomized threshold policy, where 
the threshold is equal to the optimum objective value of the MDP plus the optimal Lagrangian dual variable associated with the sampling rate constraint; see Section \ref{sec_proof_2} for the details. 
The  threshold can be computed by bisection search, and
the randomization probabilities are chosen to satisfy  the sampling rate constraint.
The curse of dimensionality is circumvented in this sampling solution by exploiting the structure of the MDP. 
These age optimality results hold for \emph{(i)} general monotonic age metrics, \emph{(ii)} general service time distributions of the queueing server,  \emph{(iii)} both continuous-time and discrete-time sampling problems, and \emph{(iv)} sampling problems both with and without the sampling rate constraint. Among the technical tools used to prove these results are an extension of Dinkelbach's method for MDP and a geometric multiplier technique for establishing strong duality. These technical tools were recently used in \cite{SunISIT2017,SunTIT2018}, where a quite different sampling problem was solved. In addition, we will also introduce some proof ideas that are specific to the sampling problem that we consider in this paper, which will be 
used to prove Lemma \ref{lem_optimal_eq_opt_stopping22}, Theorem \ref{thm6_strong_duality}, and Lemma \ref{lem_optimal_eq_opt_stopping23} in Section \ref{sec_analysis}. 

\item When there is no  sampling rate constraint, a logical sampling policy is the zero-wait sampling policy \cite{KaulYatesGruteser-Infocom2012,2015ISITYates,AgeOfInfo2016}, which is  throughput-optimal and delay-optimal, but  not necessarily age-optimal. We develop sufficient and necessary conditions for characterizing  the optimality of the zero-wait sampling policy for general monotonic age metrics. Our numerical results show that the optimal sampling policies can be much better than zero-wait sampling and the classic uniform sampling. 

\end{itemize}

The rest of this paper is organized as follows. In Section \ref{sec_related},
we discuss some related work. In Section \ref{sec_model}, we describe the
system model and the formulation of the optimal sampling problem; a short survey of non-linear age functions is also provided. In Section \ref{sec_main_results}, we present the optimal sampling policy for different system settings, as well as a sufficient and necessary condition for the optimality of the zero-wait sampling policy. The proofs are provided In Section \ref{sec_analysis}. The numerical results and the conclusion are presented in Section \ref{sec:numerical} and Section \ref{sec_conclusion}.

\section{Related Work}\label{sec_related}
The age of information was used as a data freshness metric as early as 1990s in the studies of real-time databases \cite{Song1990,Segev:1991,Adelberg:1995,Cho:2000}. Queueing theoretic techniques were introduced to evaluate the age of information in \cite{KaulYatesGruteser-Infocom2012}. The average age, average peak age, and age distribution have been analyzed for various queueing systems in, e.g.,\cite{KaulYatesGruteser-Infocom2012,KamTIT2016,CostaTIT2016,KamTIT2018,YatesTIT2018,2015ISITHuangModiano,InoueISIT2017,Kosta2017,YatesArXiv2018}. 
It was observed that a Last-Come, First-Served (LCFS) scheduling policy can achieve a smaller time-average age  than a few other scheduling policies. The optimality of the LCFS policy, or more generally the Last-Generated, First-Served (LGFS) policy, was first proven in  \cite{Bedewy2016}. This age optimality result holds for several queueing systems with multiple servers, multiple hops, and/or multiple sources  \cite{Bedewy2016,Bedewy2017,BedewyJournal2017,BedewyJournal2017_2,multiflow18}. 

When the transmission power of the source is subject to an energy-harvesting constraint, the age of information was minimized in, e.g., \cite{Bacinoglu2015,2015ISITYates,AgeOfInfo2016,Bacinoglu2017,ArafaGLOBECOM2017,Wu2018,ArafaArXiv2018,FengArXiv2018,BacinogluISIT2018}. Source coding and channel coding schemes for reducing the age were developed in, e.g., \cite{Zhong2016,YatesISIT2017,Ceran2018,Mayekar2018}. Age-optimal transmission scheduling of wireless networks have been investigated in, e.g., \cite{HeTIT2018,KadotaAllerton2016,JooTON2018,HsuISIT2017,TalakAllerton2017,KadotaINFOCOM2018,Talak:2018,Lu:2018,JiangISIT2018,Zhou2018}. Game theoretical perspective of the age was studied in \cite{NguyenWiOpt2017,NguyenINFOCOM2018,Xiao2018,Gopal2018}. The aging effect of channel state information was analyzed in, e.g., \cite{Truong2013,CostaISIT2015,FaraziICASSP2016}. An interesting connection between the age of information and remote estimation error was revealed in \cite{SunISIT2017,SunTIT2018,Ornee2019}, where the optimal sampling policies were obtained for two continuous-time processes. 
The impact of the age to control systems was studied in \cite{ZhangISIT2018,SoleymaniArXiv2018,Champati2019,Markus2019}.
Emulations and measurements of the age were conducted in \cite{Kam2015,KamMilCom2017,Sonmez2018}. An age-based transport protocol was developed in \cite{Shreedhar:2018}. 

In \cite{Ceran2018,Zhou2018}, optimal sampling policies were developed to minimize the time-average age for status updates over wireless channels, where the optimal sampling policies were shown to be randomized threshold policies. Structural properties of the randomized threshold policies were obtained in \cite{Ceran2018,Zhou2018} to simplify the value iteration or policy iteration algorithms therein. The linear age function considered in \cite{Ceran2018,Zhou2018} is a special case of the monotonic age functions considered in this paper, and the channel models in \cite{Ceran2018,Zhou2018} are different from ours. In our study, the optimal sampling policies are characterized semi-analytically and can be computed by bisection search. 
 In a special case of \cite{Ceran2018}, a closed-form optimal sampling solution was obtained. However, it is unclear whether (semi-)analytical or closed-form solutions can be found for the general cases considered  in \cite{Ceran2018,Zhou2018}.

The most relevant prior study to this paper is \cite{AgeOfInfo2016}.  
This paper generalizes \cite{AgeOfInfo2016} in the following aspects: \emph{(i)} The data freshness metrics considered in this paper are more general than those  of \cite{AgeOfInfo2016}. The age penalty function $p(\age_t)$ in \cite{AgeOfInfo2016} is assumed to be non-negative and non-decreasing, which is relaxed in this paper to be an arbitrary non-decreasing function that is more desirable for some applications.
\emph{(ii)} The optimal sampling policies developed in this paper are  simpler and more insightful than those in \cite{AgeOfInfo2016}. A two-layered nested bisection search algorithm
was developed to compute the optimal threshold \cite{AgeOfInfo2016}. In this paper, the optimal threshold can be computed by a single layer of bisection search. \emph{(iii)} In \cite{AgeOfInfo2016}, the optimal sampling strategy was obtained for
continuous-time systems. In this paper, we also develop an optimal sampling strategy for discrete-time systems, without sacrificing from any approximation or sub-optimality. 
\emph{(iv)} It was assumed in \cite{AgeOfInfo2016} that after the previous sample was delivered, the next sample must be generated within a given amount of time. By adopting more insightful proof techniques, we are able to remove such an assumption and greatly simplify the proofs in this paper.

\section{ Model, Metrics, and Formulation} \label{sec_model}
\begin{figure}
\centering
\includegraphics[width=0.35\textwidth]{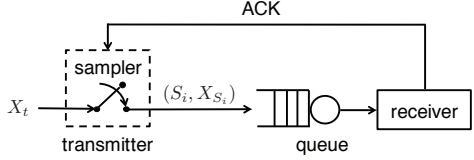}   
\caption{System model.}
\label{fig_model}
\end{figure}    
\subsection{System Model}

We consider the status update system illustrated in Fig. \ref{fig_model}, where samples of a source process $X_t$ are taken and sent to a receiver through a communication channel. The channel is modeled as a single-server FIFO queue with  \emph{i.i.d.} service times. The system starts to operate at time  $t=0$. The $i$-th sample is generated at time  $S_i$ and is delivered to the receiver at time  $D_i$ with a service time $Y_i$, which satisfy  $S_i\leq S_{i+1}$, $S_i +Y_i \leq D_i$, $D_i+Y_{i+1}\leq D_{i+1}$, and $0<\mathbb{E}[Y_i]<\infty$  for all $i$. Each sample packet $(S_i, X_{S_i})$ contains  the sampling time $S_i$ and the sample value $X_{S_i}$. 
Once a sample is delivered, the receiver sends  an acknowledgement (ACK) back to the sampler with zero delay. Hence, the sampler has access to the idle/busy state of the server  in real-time. 

Let $U_t= \max\{S_{i}: D_{i} \leq t\}$ be the generation time   of the freshest sample that has been delivered to the receiver by time  $t$. Then, the \emph{age of information}, or simply  \emph{age}, at time  $t$ is defined by  \cite{Song1990,KaulYatesGruteser-Infocom2012}
\begin{align}\label{eq_age}
\Delta_{t} = t-U_t = t- \max\{S_{i}: D_{i} \leq t\},
\end{align}
which is plotted in Fig. \ref{fig:age1}. 
Because $D_i \leq D_{i+1}$, $\age_t$ can be also written as
\begin{align}\label{eq_age2}
\age_t = t- S_i, \text{ if } D_i \leq t < D_{i+1}.
\end{align}
The initial state of the system is assumed to be $S_0 = 0$, $D_0 = Y_0$, and $\Delta_{0}$ is a finite constant. 

In this paper, we will consider both continuous-time and discrete-time status-update systems. In the continuous-time setting, $t\in[0,\infty)$ can take any positive value. In the discrete-time setting, $t\in\{0,T_s, 2T_s,\ldots\}$ is a multiple of period $T_s$; as a result, $S_i, D_i, Y_i, t,U_t, \Delta_{t}$ are all discrete-time variables. For notational simplicity, we choose $T_s=1$ second such that all the discrete-time variables are integers. The results for other values of $T_s$ can be readily obtained  by time scaling. 

In practice, the continuous-time setting can be used to model status-update systems with a high clock rate, while the discrete-time setting is appropriate for characterizing  sensors that have a very low energy budget and can only wake up periodically from a low-power sleep mode.


\subsection{Data Staleness and Freshness Metrics: A  Survey}\label{sec_metrics}

The dissatisfaction for data staleness (or the eagerness for data refreshing) is represented by 
a penalty function $p(\Delta)$ of the age $\Delta$, where the function $p: [0,\infty) \mapsto \mathbb{R}$ is \emph{non-decreasing}. This non-decreasing requirement on $p(\Delta)$ complies with the observations that stale data is usually less desired than fresh data \cite{Shapiro1999,Cho:2003,Even:2007,Heinrich:2009,Ioannidis2009,Altman2011,Razniewski:2016}. 
This data staleness model is quite general, as it allows $p(\Delta)$ to be non-convex or  dis-continuous. These data staleness metrics are clearly more general than those in \cite{SunInfocom2016,AgeOfInfo2016}, where $p(\Delta)$ was restricted to be \emph{non-negative} and \emph{non-decreasing}.
 
Similarly, data freshness can be characterized by a \emph{non-increasing} utility function $u(\Delta)$ of the age $\Delta$ \cite{Even:2007,Ioannidis2009}. One simple choice is $u(\Delta)=-p(\Delta)$. 
Note that because the age $\Delta_t$ is a function of time $t$,  $p(\Delta_t)$ and $u(\Delta_t)$ are both time-varying, as illustrated in Fig. \ref{fig:age2}. 
In practice, one can choose $p(\cdot)$ and $u(\cdot)$ based on the information source and the application under consideration, as illustrated in the following examples.\footnote{In some  of these examples, the age utility function $u(\age_t)$ is non-negative and non-increasing. The corresponding age penalty function $p(\age_t) = - u(\age_t)$ is non-positive and non-decreasing. Hence, it is desirable to allow the age penalty function $p(\age_t)$ to be negative.}  

\subsubsection{Auto-correlation Function of the Source}
The auto-correlation function $\mathbb{E}[X_t^* X_{t-\Delta_t}]$ can be used to evaluate the freshness of the sample $X_{t-\Delta_t}$ \cite{Kosta2017}. For some stationary sources, $|\mathbb{E}[X_t^* X_{t-\Delta_t}]|$ is a non-negative, non-increasing function of the age $\Delta_t$, which can be considered as an age utility function $u(\Delta_t)$.
For example, in stationary ergodic Gauss-Markov block fading channels, the impact of channel aging can be characterized by the auto-correlation function of fading channel coefficients. When the age $\Delta_t$ is  small, the auto-correlation function and the data rate both decay with respect to the age $\Delta_t$ \cite{Truong2013}.
\begin{figure}
\centering
\begin{tikzpicture}[scale=0.21]
\draw [<-|] (0,11)  -- (0,0) -- (14.5,0);
\draw [|->] (15,0) -- (30.5,0) node [below] {\small$t$};
\draw (-2,12) node [right] {\small$\age_t$};
\draw
(0,0) node [below] {\small$S_0$}
(8,0) node [below] {\small$S_1$}
(17,0) node [below] {\small$S_{j-1}$}
(24,0) node [below] {\small$S_{j}$};
\fill
(8,0)  circle[radius=4pt]
(17,0)  circle[radius=4pt]
(24,0)  circle[radius=4pt]
(0,0)  circle[radius=4pt]
(4,0)  circle[radius=4pt]
(11,0)  circle[radius=4pt]
(20,0)  circle[radius=4pt]
(27,0)  circle[radius=4pt];
\draw
(4,0) node [below] {\small$D_0$}
(11,0) node [below] {\small$D_1$}
(21,0) node [below] {\small$D_{j-1}$}
(27,0) node [below] {\small$D_{j}$};
\draw[ thick, domain=0:4] plot (\x, {\x+3})  -- (4, {4});
 \draw [ thick, domain=4:11] plot (\x, {\x})  -- (11, {3});
\draw[ thick, domain=11:14] plot (\x, {\x-8});
\draw[ thick, domain=16:20] plot (\x, {\x-14}) -- (20, {3});
\draw[ thick, domain=20:27] plot (\x, {\x-17}) -- (27, {3});
\draw[ thick, domain=27:29] plot (\x, {\x-24});
\draw[  thin,dashed,  domain=0:4] plot (\x, {\x})-- (4, 0);
\draw[  thin,dashed,  domain=8:11] plot (\x, {\x-8})-- (11, 0);
\draw[  thin,dashed,  domain=17:20] plot (\x, {\x-17})-- (20, 0);
\draw[  thin,dashed,  domain=24:27] plot (\x, {\x-24})-- (27, 0);
\end{tikzpicture}
\caption{Evolution of the age $\age_t$ over time.}
\label{fig:age1}
\vspace{-0.5cm}
\end{figure}
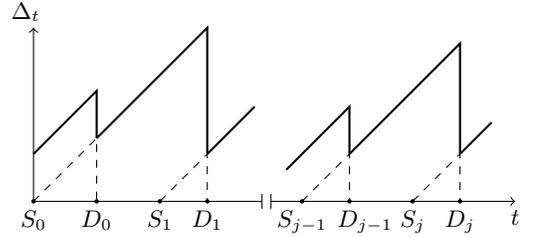

\subsubsection{Estimation Error of Real-time Source Value} 
Consider a status-update system, where samples of a Markov source $X_t$ are forwarded to a remote estimator. The estimator uses causally received samples to reconstruct an estimate $\hat X_t$ of  real-time source value. If the sampling times $S_i$ are independent of the observed source $\{X_t, t\geq 0\}$, 
the mean-squared estimation error at time $t$ can be expressed as an age penalty function $p(\age_t)$ \cite{YatesTIT2018,SunISIT2017,SunTIT2018,Ornee2019}. 
If the sampling times $S_i$ are chosen based on causal knowledge about the source, the estimation error is not a  function of $\Delta_t$ \cite{SunISIT2017,SunTIT2018,Ornee2019}.

The above result can be generalized to the state estimation error of feedback control systems \cite{Champati2019,Markus2019}. Consider a single-loop feedback control system, where a plant and a controller are governed by a Linear Time-Invariant (LTI) system, i.e., 
\begin{align}
X_{t+1} = A X_t + B U_t + N_t,
\end{align}
where $X_t \in \mathbb{R}^n$ is the state of the system at time slot $t$, $n$ is the system dimension, $U_t\in \mathbb{R}^m$  represents the control input, and $N_t \in \mathbb{R}^n$ is the exogenous noise vector having \emph{i.i.d.} Gaussian distributed elements with zero mean  and covariance $\Sigma$. 
The constant matrices $ A\in \mathbb{R}^{n\times n}$  and $B\in \mathbb{R}^{n\times m}$ are the system and input matrices, respectively, where $(A,B)$ is assumed to be controllable. 
Samples of the state process $X_t $ are forwarded to the controller, which determines $U_t$ at time $t$ based on the samples that have been delivered by time $t$. Under some assumptions, the state estimation error can be proven to be independent of the adopted control policy \cite{SoleymaniArXiv2018}. 
Furthermore, if the sampling times $S_i$ are independent of the state process $X_t$, then the state estimation error is an age penalty function $p(\Delta_t)$ that is determined by the system matrix $A$ and the covariance $\Sigma$ of the exogenous noise \cite{Champati2019,Markus2019}.


\begin{figure}
\centering
\subfigure[][Non-decreasing age penalty function $p(\age_t)=e^{0.2\age_t}-1$.]
{
\begin{tikzpicture}[scale=0.21]
\draw [<-|] (0,8)  -- (0,0) -- (14.5,0);
\draw [|->] (15,0) -- (30.5,0) node [below] {\small$t$};
\draw (-3,9) node [right] {\small$p(\age_t)=e^{0.2\age_t}-1$};
\fill
(8,0)  circle[radius=4pt]
(17,0)  circle[radius=4pt]
(24,0)  circle[radius=4pt];
\draw
(0,0) node [below] {\small$S_0$}
(8,0) node [below] {\small$S_1$}
(17,0) node [below] {\small$S_{j-1}$}
(24,0) node [below] {\small$S_{j}$};
\fill
(0,0)  circle[radius=4pt]
(4,0)  circle[radius=4pt]
(11,0)  circle[radius=4pt]
(20,0)  circle[radius=4pt]
(27,0)  circle[radius=4pt];
\draw
(4,0) node [below] {\small$D_0$}
(11,0) node [below] {\small$D_1$}
(21,0) node [below] {\small$D_{j-1}$}
(27,0) node [below] {\small$D_{j}$};
\draw[ thick, domain=0:4] plot (\x, {exp(0.2*(\x+3))-1}) -- (4, {exp(0.2*(4))-1});
\draw[ thick, domain=4:11] plot (\x, {exp(0.2*(\x))-1}) -- (11, {exp(0.2*(11-8))-1});
\draw[ thin,dashed,  domain=0:4] plot (\x, {exp(0.2*(\x))-1}) -- (4,0);
\draw[ thick, domain=11:14] plot (\x, {exp(0.2*(\x-8))-1});
\draw[ thin,dashed,  domain=8:11] plot (\x, {exp(0.2*(\x-8))-1}) -- (11,0);
\draw[ thick, domain=16:20] plot (\x, {exp(0.2*(\x-14))-1}) -- (20, {exp(0.2*(20-17))-1});
\draw[ thick, domain=20:27] plot (\x, {exp(0.2*(\x-17))-1}) -- (27, {exp(0.2*(27-24))-1});
\draw[ thin,dashed,  domain=17:20] plot (\x, {exp(0.2*(\x-17))-1}) -- (20,0);
\draw[ thick, domain=27:29] plot (\x, {exp(0.2*(\x-24))-1});
\draw[ thin,dashed,  domain=24:27] plot (\x, {exp(0.2*(\x-24))-1}) -- (27,0);
\end{tikzpicture}
}

\subfigure[][Non-increasing age utility function $u(\age_t)=10/\age_t$.]
{
\begin{tikzpicture}[scale=0.21]
\draw [<-|] (0,7)  -- (0,0) -- (14.5,0);
\draw [|->] (15,0) -- (30.5,0) node [below] {\small$t$};
\draw (-3,9) node [right] {\small$u(\age_t)=10/\age_t$};
\fill
(8,0)  circle[radius=4pt]
(17,0)  circle[radius=4pt]
(24,0)  circle[radius=4pt];
\draw
(0,0) node [below] {\small$S_0$}
(8,0) node [below] {\small$S_1$}
(17,0) node [below] {\small$S_{j-1}$}
(24,0) node [below] {\small$S_{j}$};
\fill
(0,0)  circle[radius=4pt]
(4,0)  circle[radius=4pt]
(11,0)  circle[radius=4pt]
(20,0)  circle[radius=4pt]
(27,0)  circle[radius=4pt];
\draw
(4,0) node [below] {\small$D_0$}
(11,0) node [below] {\small$D_1$}
(21,0) node [below] {\small$D_{j-1}$}
(27,0) node [below] {\small$D_{j}$};
\draw[ thick, domain=0:4] plot (\x, {10/(\x+3))}) -- (4, {10/4});
\draw[ thin, dashed] (4, {10/7}) -- (4,0);
\draw[ thin, dashed, domain=1.5:4] plot (\x, {10/(\x)});
\draw[ thick, domain=4:11] plot (\x, {10/(\x)}) -- (11, {10/(11-8)});
\draw[ thin, dashed] (11, {10/10}) -- (11,0);
\draw[ thin, dashed, domain=9.5:11] plot (\x, {10/(\x-8)});
\draw[ thin, dashed] (8,0)--(8,7);
\draw[ thin, dashed] (17,0)--(17,7);
\draw[ thin, dashed] (24,0)--(24,7);
\draw[ thick, domain=11:14] plot (\x, {10/(\x-8)});
\draw[ thick, domain=16:20] plot (\x, {10/(\x-14)}) -- (20, {10/(20-17)});
\draw[ thin, dashed] (20, {10/6}) -- (20,0);
\draw[ thin, dashed, domain=18.5:20] plot (\x, {10/(\x-17)});
\draw[ thick, domain=20:27] plot (\x, {10/(\x-17)}) -- (27, {10/(27-24)});
\draw[ thin, dashed, domain=25.5:27] plot (\x, {10/(\x-24)});
\draw[ thick, domain=27:29] plot (\x, {10/(\x-24)});
\draw[ thin, dashed] (27, {10/10}) -- (27,0);
\end{tikzpicture}
}

\caption{Two examples of non-linear age functions.}
\vspace{-0.5cm}
\label{fig:age2}
\end{figure}
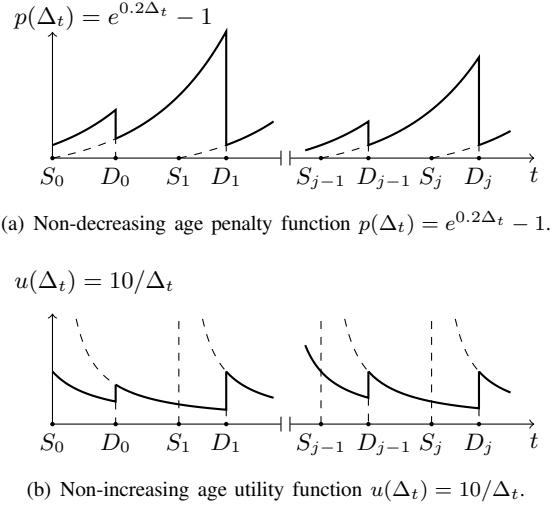

\subsubsection{Information based  Data Freshness Metric} 
Let
\begin{align}\label{eq_samples}
\bm{W}_t = \{(X_{S_i},S_i): D_i \leq t\}
\end{align}
denote the samples that have been delivered to the receiver  by time $t$. One can use the mutual information $I(X_t; \bm{W}_t)$ --- the amount of information that the received samples $\bm{W}_t$ carry about the current source value $X_t$ --- to evaluate the freshness of $\bm{W}_t$. 
If $I(X_t; \bm{W}_t)$ is close to $H(X_t)$, the samples $\bm{W}_t$ contains a lot of information about $X_t$ and is considered to be fresh; if $I(X_t; \bm{W}_t)$ is almost $0$, $\bm{W}_t$ provides little information about $X_t$ and is deemed to be obsolete. 

One way to interpret  $I(X_t; \bm{W}_t)$ is to consider how helpful the received samples $\bm{W}_t$ are for inferring $X_t$. 
By using the Shannon code lengths \cite[Section 5.4]{Cover}, the expected minimum number of bits $L$ required to specify $X_t$ satisfies 
\begin{align}\label{eq_length}
H(X_t) \leq L < H(X_t) + 1,
\end{align}
where $L$ can be interpreted as the expected minimum number of binary tests that are needed to infer $X_t$. 
On the other hand, with the knowledge of $\bm{W}_t$, the expected minimum number of bits $L'$ that are required to specify $X_t$ satisfies
\begin{align}\label{eq_length1}
H(X_t| \bm{W}_t) \leq L' < H(X_t| \bm{W}_t) + 1.
\end{align}
If $X_t$ is a random vector consisting of a large number of symbols (e.g., $X_t$ represents an image containing many pixels or the coefficients of MIMO-OFDM channels), the one bit of overhead in \eqref{eq_length} and \eqref{eq_length1} is insignificant. 
Hence, $I(X_t; \bm{W}_t)$ is approximately the reduction in the description cost for inferring $X_t$ without and with the knowledge of $\bm{W}_t$. 

If $X_t$ is a stationary Markov chain, by data processing inequality \cite[Theorem 2.8.1]{Cover},
it is easy to prove the following lemma:

\begin{lemma}\label{lem1}
If $X_t$ is a stationary (continuous-time or discrete-time) Markov chain, $\bm{W}_t$ is defined in \eqref{eq_samples}, and the sampling times $S_i$ are independent of $\{X_t, t\geq0\}$, then the mutual information
\begin{align}\label{eq_lem1}
I(X_t; \bm{W}_t) = I(X_t; X_{t-\age_t})
\end{align}
is a non-negative and non-increasing function $u(\age_t)$ of $\age_t$.
\end{lemma}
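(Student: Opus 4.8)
The plan is to strip $\bm{W}_t$ down to its freshest component via the data‑processing inequality, and then to use stationarity of the Markov chain twice: once to make the resulting mutual information a function of $\age_t$ alone, and once to make that function non‑increasing. Throughout I would condition on the sampling/delivery schedule $\{(S_i,D_i)\}_i$; this is legitimate because the $S_i$ (hence, together with the i.i.d.\ service times, the $D_i$, the index $M_t:=\max\{i:D_i\le t\}$, and $\age_t$) are independent of the source $\{X_t,t\ge 0\}$, so under this conditioning all the $S_i$ and $\age_t$ become constants while the conditional law of $(X_t,X_{S_0},\dots,X_{S_{M_t}})$ coincides with the unconditional law of the source observed at those now‑fixed times. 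Thus the statement \eqref{eq_lem1} is to be read conditionally on the schedule, with $u(\age_t)$ denoting a fixed function evaluated at $\age_t$.

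First I would pin down the structure of $\bm{W}_t$. By the ordering constraints $S_i\le S_{i+1}$ and the FIFO property $D_i\le D_{i+1}$, the samples delivered by time $t$ are exactly $\{(X_{S_i},S_i):0\le i\le M_t\}$, only finitely many of them, and the freshest one has generation time $S_{M_t}=U_t=t-\age_t$; in particular $S_0\le S_1\le\cdots\le S_{M_t}=t-\age_t\le t$.

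Next I would apply the data‑processing inequality. Since $X_t$ is a Markov chain and $S_0\le\cdots\le S_{M_t-1}\le S_{M_t}=t-\age_t\le t$, the triple $(X_{S_0},\dots,X_{S_{M_t-1}})\to X_{t-\age_t}\to X_t$ is a Markov chain, so $I(X_t;X_{S_0},\dots,X_{S_{M_t}})\le I(X_t;X_{t-\age_t})$ by \cite[Theorem 2.8.1]{Cover}; since $X_{t-\age_t}$ is one coordinate of $\bm{W}_t$ the reverse inequality is immediate, and the (now deterministic) times $S_i$ carry no information, so $I(X_t;\bm{W}_t)=I(X_t;X_{t-\age_t})$. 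By stationarity of $X_t$, $I(X_t;X_{t-\delta})$ does not depend on $t$; call it $u(\delta)$, which is $\ge 0$ because mutual information is non‑negative. Finally, for $0\le\delta_1\le\delta_2$ the chain $X_{t-\delta_2}\to X_{t-\delta_1}\to X_t$ is Markov, so the data‑processing inequality gives $u(\delta_2)=I(X_t;X_{t-\delta_2})\le I(X_t;X_{t-\delta_1})=u(\delta_1)$; hence $u$ is non‑increasing, and $I(X_t;\bm{W}_t)=u(\age_t)$ with $u$ non‑negative and non‑increasing.

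The only genuinely delicate point is the conditioning step and the accompanying claim that it leaves the joint law of $(X_t,X_{S_0},\dots,X_{S_{M_t}})$ equal to that of the source sampled at the corresponding fixed times — this is exactly where the hypothesis that the $S_i$ are independent of $\{X_t\}$ is used. The Markov and data‑processing steps are then routine, with the only extra care being that in continuous time ``Markov chain'' must be read as conditional independence of $\{X_u:u>r\}$ and $\{X_u:u<r\}$ given $X_r$, under which both applications of the data‑processing inequality remain valid.
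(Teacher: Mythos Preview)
Your proposal is correct and follows essentially the same approach as the paper: the paper argues that $X_{t-\Delta_t}$ is a sufficient statistic of $\bm{W}_t$ for $X_t$ (invoking the Markov property and the independence of the $S_i$ from the source) and then applies stationarity and the data-processing inequality to get monotonicity, exactly as you do. Your treatment is more explicit about the conditioning on the sampling/delivery schedule and about why $M_t$ and the $S_i$ may be treated as constants, whereas the paper compresses this into the one-line remark that the sampling times ``contain no information about $X_t$'' and cites the sufficient-statistic identity \cite[Eq.~(2.124)]{Cover}; but the underlying argument is the same.
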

\begin{proof}
See Appendix \ref{app_lem1}. 
\end{proof}

Lemma \ref{lem1} provides an intuitive interpretation of ``information aging'': The amount of information $I(X_t; \bm{W}_t)$ that is preserved in $\bm{W}_t$ for {inferring} the current source value $X_t$ decreases as the age $\age_t$ grows. We note that Lemma \ref{lem1} can be generalized to the case that $X_t$ is a stationary discrete-time Markov chain with memory $k$. In this case, each sample $\bm V_{t} = (X_{t}, X_{t-1}, \ldots, X_{t-k+1})$ should contain the source values at $k$ successive time instants. Let
 $\bm W_t = \{(\bm V_{S_i},S_i): D_i \leq t\}$, then one can show that $\bm V_{t-\Delta_t}$ is a sufficient statistic of $\bm{W}_t$ for inferring $X_t$ and $I(X_t; \bm{W}_t) = I(X_t; \bm V_{t-\age_t})$ is  a non-negative and non-increasing function of $\age_t$. 

If the sampling times $S_i$ are determined by using causal knowledge of $X_t$, $I(X_t; \bm{W}_t)$ is not necessarily a function of the age. One interesting future research direction is how to choose the sampling time $S_i$ based on the signal and utilize the timing information in $S_i$ to improve data freshness.

Next, we provide the closed-form expression of $I(X_t; \bm{W}_t)$ for 
two Markov sources:


\emph{Gauss-Markov Source:} 
Suppose that $X_t$ is a first-order discrete-time Gauss-Markov process, defined by 
\begin{align} \label{eq_markov}
X_t = a X_{t-1} + V_t,
\end{align} 
where $a\in(-1,1)$ and the $V_t$'s are zero-mean \emph{i.i.d.}~Gaussian random variables with variance $\sigma^2$. Because $X_t$ is a Gauss-Markov process, one can show that \cite{Gelfand1959}
\begin{align}\label{eq_MI_Gauss_Markov}
I(X_t; \bm{W}_t) = I\left(X_t; X_{t-\age_t}\right) = -\frac{1}{2} \log_2 \left(1-a^{2\age_t}\right).
\end{align} 
Since $a\in(-1,1)$ and $\age_t\geq 0$ is an integer, $I(X_t; \bm{W}_t)$ is a positive and decreasing function of the age $\age_t$. Note that if $\age_t=0$, then $I(X_t; \bm{W}_t)=H(X_t)=\infty$, because the absolute entropy of a Gaussian random variable is infinite. 

\emph{ Binary Markov Source:} Suppose that $X_t\in\{0,1\}$ is a binary symmetric Markov process defined by
\begin{align}\label{eq_binary}
X_t = X_{t-1} \oplus V_t,
\end{align} 
where $\oplus$ denotes binary modulo-2 addition and the $V_t$'s are \emph{i.i.d.} Bernoulli random variables with mean $q\in[0,\frac{1}{2}]$. One can show that
\begin{align}
I(X_t; \bm{W}_t) &= I(X_t; X_{t-\age_t}) = 1\!-\!h\!\left(\frac{1\!-\!(1-2q)^{\age_t} }{2}\right)\!,\!\!
\end{align} 
where $\Pr[X_t =1 | X_0=0] = \frac{1-(1-2q)^{t} }{2}$ and $h(x)$ is the binary entropy function defined by $h(x)=-x\log_2 x - (1-x) \log_2 (1-x)$
with a domain $x\in[0,1]$ \cite[Eq. (2.5)]{Cover}. Because $h(x)$ is increasing on $[0,\frac{1}{2}]$, $I(X_t; \bm{W}_t)$ is a non-negative and decreasing function of the age $\age_t$.


Similarly, one can also use the conditional entropy $H(X_t| \bm{W}_t)$ to represent the staleness of $\bm{W}_t$ \cite{Soleymani2016-1,Soleymani2016-2,Soleymani2016-3}. In particular, 
$H(X_t| \bm{W}_t)$ can be interpreted as the amount of uncertainty about the  current source value $X_t$ after receiving the samples $\bm{W}_t$. 
 If the $S_i$'s are independent of $\{X_t, t \geq 0\}$ and $X_t$ is a stationary Markov chain, $H(X_t| \bm{W}_t)= H(X_t| \{X_{S_i}:D_i\leq t\})=H(X_t| X_{{t-\age_t}})$  is a non-decreasing function $p(\age_t)$ of the age $\age_t$. 
If the sampling times $S_i$ are determined based on causal knowledge of $X_t$ or $X_t$ is not a Markov chain, $H(X_t| \bm{W}_t)$ is no longer a function of the age.

More usage cases of $p(\cdot)$ and $u(\cdot)$ can be found in \cite{Cho:2003,Even:2007,Heinrich:2009,Ioannidis2009,Altman2011,Razniewski:2016}. Other data freshness metrics that cannot be expressed as functions of    $\age_t$ were discussed in \cite{CostaTIT2016,Bedewy2016,Bedewy2017,BedewyJournal2017,BedewyJournal2017_2,multiflow18}.

\subsection{Formulation of Optimal Sampling Problems}

Let $\pi = (S_1,S_2,\ldots)$ represent a sampling policy 
 and $\Pi$ denote the set of  \emph{causal} sampling policies that satisfy the following two conditions: \emph{(i)} Each sampling time $S_i$ is chosen based on {history and current information of the idle/busy state of the channel}. \emph{(ii)} The inter-sampling times $\{T_i = S_{i+1}-S_i, i=1,2,\ldots\}$ form a {regenerative process} \cite[Section 6.1]{Haas2002}\footnote{We assume that $T_i$ is a regenerative process because we will optimize $\limsup_{T\rightarrow \infty}\mathbb{E}[\int_{0}^T p(\age_t) dt]/T$, but operationally a nicer objective function is $\limsup_{i\rightarrow \infty}\mathbb{E}[\int_{0}^{D_i} p(\age_t) dt]/{\mathbb{E}[D_i]}$. These two criteria are equivalent, if $\{T_1,T_2,\ldots\}$ is a regenerative process, or more generally, if $\{T_1,T_2,\ldots\}$ has only one ergodic class. If no condition is imposed, however, they are different.}: There exists an increasing sequence  $0\leq {k_1}<k_2< \ldots$ of almost surely finite random integers such that the post-${k_j}$ process $\{T_{k_j+i}, i=1,2,\ldots\}$ has the same distribution as the post-${k_1}$ process $\{T_{k_1+i}, i=1,2,\ldots\}$ and is independent of the pre-$k_j$ process $\{T_{i}, i=1,2,\ldots, k_j-1\}$; in addition, $\mathbb{E}[{k_{j+1}}-{k_j}]<\infty$, $\mathbb{E}[S_{k_{1}}]<\infty$, and $0<\mathbb{E}[S_{k_{j+1}}-S_{k_j}]<\infty, ~j=1,2,\ldots$ 
 
 We assume that the sampling times $S_i$ are independent of  the source process $\{X_t, t\geq 0\}$, and
  the service times $Y_i$ of the queue do not change according to the  sampling policy.  We further assume that $\mathbb{E} [p(\Delta+Y_i)]<\infty$ for all  finite $\Delta$.



In this paper, we study the optimal  sampling policy that minimizes (maximizes) the average age penalty (utility) subject to an average sampling rate constraint. In the continuous-time case, we will consider the following problem: 
\begin{align}
\bar p_{\text{opt,1}} = \inf_{\pi\in\Pi}~&\limsup_{T\rightarrow \infty}\frac{1}{T}~\mathbb{E}\left[\int_{0}^T p(\age_t) dt\right]\label{eq_problem2} \\
\text{s.t.}~~& \liminf_{n\rightarrow \infty} \frac{1}{n}{\mathbb{E}\left[S_{n}\right]} \geq \frac{1}{f_{\max}},\label{eq_constraint2}
\end{align}
where $\bar p_{\text{opt,1}}$ is the optimal value of  \eqref{eq_problem2} and $f_{\max}$ is the maximum allowed sampling rate.
\begin{algorithm}
\caption{Bisection method for solving \eqref{thm1_eq22}} \label{alg1}
\begin{algorithmic}[]
\STATE \textbf{given} $l$, $u$, tolerance $\epsilon>0$.
\REPEAT
\STATE $\beta:= (l+u)/2$.
\STATE $o :=\beta-\frac{\mathbb{E}\left[v(D_{i+1}(\beta)-S_i(\beta))-v(Y_i)\right]}{\mathbb{E}[D_{i+1}(\beta)- D_i(\beta)]}$.
\STATE \textbf{if} $o\geq 0$, $u:=\beta$; \textbf{else}, $l:=\beta$.
\UNTIL $u-l\leq \epsilon$.
\STATE \textbf{return} $\beta$.
\end{algorithmic}
\end{algorithm}
In the discrete-time case, we need to solve the following optimal sampling problem:
\begin{align}
\bar p_{\text{opt,2}} = \inf_{\pi\in\Pi}~&\limsup_{n\rightarrow \infty}\frac{1}{n}~\mathbb{E}\left[\sum_{t=1}^n p(\age_{t})\right]\label{eq_problem} \\
\text{s.t.}~~& \liminf_{n\rightarrow \infty} \frac{1}{n}{\mathbb{E}\left[S_{n}\right]} \geq \frac{1}{f_{\max}},\label{eq_constraint}
\end{align}
where $\bar p_{\text{opt,2}}$ is the optimal value of  \eqref{eq_problem}. 
We assume that $\bar p_{\text{opt,1}}$ and $\bar p_{\text{opt,2}}$ are finite. 
The problems for maximizing the average age utility can be readily obtained from \eqref{eq_problem2} and \eqref{eq_problem} by choosing $p(\age) = - u(\age)$.  In practice, the cost for data updates increases with the average sampling rate. 
Therefore, Problems \eqref{eq_problem2} and \eqref{eq_problem} represent a tradeoff between data staleness (freshness) and update cost. 

Problems \eqref{eq_problem2} and \eqref{eq_problem} are constrained MDPs, one with a continuous (uncountable) state space and the other with a countable state space. 
Because of the  \emph{curse of dimensionality} \cite{Bellman1957}, it is quite rare that one can explicitly solve such problems and derive analytical or closed-form solutions that are arbitrarily accurate. 

\section{Main Results: Optimal Sampling Policies}\label{sec_main_results}

In this section, we present a complete characterization of the solutions to \eqref{eq_problem2} and \eqref{eq_problem}. 
Specifically, the optimal sampling policies are either deterministic or randomized threshold policies, depending on the scenario under consideration. 
Efficient computation algorithms of the thresholds and the randomization probabilities are provided. 

\subsection{Continuous-time Sampling  without Rate Constraint}

We first consider the continuous-time sampling problem \eqref{eq_problem2}. When there is no sampling rate constraint (i.e., $f_{\max} = \infty$), a solution to \eqref{eq_problem2} is provided in the following theorem:

\begin{theorem}[Continuous-time Sampling without Rate Constraint]\label{thm1}
If $f_{\max} = \infty$, $p(\cdot)$ is non-decreasing, and the service times $Y_i$ are {i.i.d.} with $0<\mathbb{E}[Y_i]<\infty$, then 
$(S_1(\beta),S_2(\beta),\ldots)$ with a parameter $\beta$ is an optimal solution to \eqref{eq_problem2}, where 
\begin{align}\label{thm1_eq12}
S_{i+1}(\beta)\! &=\!\inf\{ t\geq  D_i(\beta): \mathbb{E}[ p( \age_{t + Y_{i+1}})]\! \geq \beta \},\!\!
\end{align}
$D_i(\beta) = S_i(\beta) +Y_i$, $\age_{t} = t-S_{i}(\beta)$, and $\beta$ is the root  of 
\begin{align}\label{thm1_eq22}
\beta = \frac{\mathbb{E}\left[\int_{D_i(\beta)}^{D_{i+1}(\beta)}\! p(\age_t)dt\right]}{\mathbb{E}[D_{i+1}(\beta)\!-\!D_i(\beta)]}\!. \!\!\!\!
\end{align}
Further, $\beta$ is exactly the optimal value to \eqref{eq_problem2}, i.e., \emph{$\beta =\bar p_{\text{opt,1}}$}. 
\end{theorem}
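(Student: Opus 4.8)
\emph{Proof proposal.} The plan is to recast the time-average problem \eqref{eq_problem2} as a renewal--reward ratio problem, apply a Dinkelbach-type transformation to turn the ratio minimization into a parametrized unconstrained problem, and solve that auxiliary problem by an elementary optimal-stopping argument; the parameter that makes the auxiliary value vanish will turn out to be both the optimal threshold and the optimal objective value.

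First I would exploit the structure of the model. Because the sampling times are independent of the source and the service times $Y_i$ are i.i.d., the age at each delivery epoch satisfies $\age_{D_i}=D_i-S_i=Y_i$, so the only information carried into the choice of $S_{i+1}$ is $Y_i$, and under a stationary policy the intervals $[D_i,D_{i+1})$ are i.i.d.\ regeneration cycles. Using the regenerative assumption built into $\Pi$ together with the renewal--reward theorem --- this is essentially the content of Lemma~\ref{lem_optimal_eq_opt_stopping22} --- I would establish that $\bar p_{\text{opt,1}}$ equals the infimum of $\EE[\int_{D_i}^{D_{i+1}}p(\age_t)\,dt]\big/\EE[D_{i+1}-D_i]$ over all causal choices of the waiting time $Z_{i+1}=S_{i+1}-D_i\ge 0$. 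By an MDP extension of Dinkelbach's method, this ratio infimum equals the unique $\beta$ for which $\inf\,\EE[\int_{D_i}^{D_{i+1}}(p(\age_t)-\beta)\,dt]=0$, and the minimizer of that auxiliary problem is simultaneously ratio-optimal; existence and uniqueness of such $\beta$ follow because $\beta\mapsto\inf\,\EE[\int_{D_i}^{D_{i+1}}(p(\age_t)-\beta)\,dt]$ is concave, continuous, and strictly decreasing, using $\EE[D_{i+1}-D_i]\ge\EE[Y_{i+1}]>0$ together with the finiteness of $\bar p_{\text{opt,1}}$ and of $\EE[p(\age+Y_i)]$ for finite $\age$.

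It then remains to solve the auxiliary per-cycle problem for fixed $\beta$. In a cycle starting with age $\age_{D_i}=Y_i$ the server is idle until the next sample is taken, so no new information arrives and the waiting time may be taken deterministic given $Y_i$. Writing $v(x)=\int_0^x p(s)\,ds$ and substituting $s=t-S_i$ (so that $\age_t=t-S_i$ as in \eqref{thm1_eq12}), the per-cycle cost becomes $\Psi(z)=\EE[v(Y_i+z+Y_{i+1})-v(Y_i)-\beta(z+Y_{i+1})]$, with $\Psi'(z)=\EE[p(Y_i+z+Y_{i+1})]-\beta$. Since $p$ is non-decreasing, $\Psi'$ is non-decreasing, hence $\Psi$ is convex and is minimized at the smallest $z\ge 0$ with $\EE[p(Y_i+z+Y_{i+1})]\ge\beta$. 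Because $Y_i+z+Y_{i+1}$ equals $\age_{t+Y_{i+1}}$ at the candidate sampling time $t=D_i+z$ and this expectation is non-decreasing in $t$, the minimizer is exactly the threshold rule \eqref{thm1_eq12}. Substituting this policy back into the ratio from the first step yields the fixed-point equation \eqref{thm1_eq22}, and the Dinkelbach step identifies its root with $\bar p_{\text{opt,1}}$. Along the way one must check that $S_{i+1}(\beta)$ is almost surely finite, that the induced policy belongs to $\Pi$ (its inter-sampling times are i.i.d.\ and $0<\EE[D_{i+1}-D_i]<\infty$), and that all the integrals are finite --- these follow from $0<\EE[Y_i]<\infty$, $\EE[p(\age+Y_i)]<\infty$, and the finiteness of $\bar p_{\text{opt,1}}$.

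I expect the main obstacle to be the first reduction: rigorously justifying that the $\limsup$ time average in \eqref{eq_problem2}, optimized over the full and possibly history-dependent class $\Pi$, coincides with the per-cycle renewal-ratio problem --- in particular, that restricting attention to stationary per-cycle decisions loses nothing and that the candidate threshold policy is admissible with finite expected cycle length. Once this is in place, the optimal-stopping step is routine thanks to the monotonicity of $p$, and the Dinkelbach bookkeeping is standard.
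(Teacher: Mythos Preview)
Your plan is essentially the paper's argument: reduce to waiting times after delivery, recast as a renewal--reward ratio, apply the Dinkelbach reduction (the paper's Lemma~\ref{lem_ratio_to_minus}), and solve the per-cycle convex problem --- note that Lemma~\ref{lem_optimal_eq_opt_stopping22}, which you cite for the renewal--reward step, is in fact precisely this last convex-minimization step, while the restriction $S_{i+1}\ge D_i$ that you assume tacitly is justified separately by the sample-path argument of Lemma~\ref{lem_zeroqueue}. The paper actually derives Theorem~\ref{thm1} as the $f_{\max}=\infty$ special case of the constrained Theorem~\ref{thm2} (passing through a Lagrangian dual with multiplier $\alpha=0$), but your direct attack on the unconstrained problem is equivalent and arguably cleaner.
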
 
The proof of Theorem \ref{thm1} is relegated to  Section \ref{sec_continuous}. 
The optimal sampling policy in \eqref{thm1_eq12}-\eqref{thm1_eq22} has a nice structure. Specifically, the $(i+1)$-th sample is generated at the earliest time $t$ satisfying two conditions: \emph{(i)} the $i$-th sample has already been delivered by time $t$, i.e., $t\geq D_i(\beta)$, and \emph{(ii)} the expected age penalty $\mathbb{E}[ p( \age_{t + Y_{i+1}})]$ has grown to be no smaller than a pre-determined threshold $\beta$. Notice that if $t = S_{i+1}(\beta)$, then $t + Y_{i+1} =S_{i+1}(\beta) + Y_{i+1} = D_{i+1}(\beta)$ is the delivery time of the $(i+1)$-th sample.
In addition, $\beta$ is equal to the optimum objective value $\bar p_{\text{opt,1}}$ of \eqref{eq_problem2}. Hence, \eqref{thm1_eq12}-\eqref{thm1_eq22} require that the expected age penalty  upon the delivery of the $(i+1)$-th sample is no smaller than $\bar p_{\text{opt,1}}$, i.e., the minimum possible time-average expected age penalty.

Next, we develop an efficient algorithm to find the root $\beta$ of \eqref{thm1_eq22}. 
Because the $Y_i$'s are \emph{i.i.d.}, the expectations on the right-hand side of \eqref{thm1_eq22} are functions of $\beta$ and are irrelevant of $i$. Given $\beta$, these expectations can be evaluated by Monte Carlo simulations or importance sampling. Define
\begin{align}
v(s) = \int_0^s p(t)dt,
\end{align}
then
\begin{align}\label{eq_integral}
\int_{D_i(\beta)}^{D_{i+1}(\beta)}\!\! p(\age_t)dt\! = v(D_{i+1}(\beta)\!-\!S_i(\beta))\!-\!v(Y_i),\!\!
\end{align}
which can be used to simplify the numerical evaluation of the expected integral in \eqref{thm1_eq22}. As proven in Section \ref{sec_continuous}, 
\eqref{thm1_eq22} has a unique solution. We use a simple bisection method to solve \eqref{thm1_eq22}, which is illustrated in Algorithm \ref{alg1}. 
 



\subsubsection{Optimality Condition of  Zero-wait Sampling}
When $f_{\max} = \infty$, one logical sampling policy is the zero-wait sampling policy \cite{2015ISITYates,AgeOfInfo2016,KaulYatesGruteser-Infocom2012}, given by
\begin{align} \label{eq_Zero_wait}
S_{i+1} = S_i+ Y_i.
\end{align} 
This zero-wait sampling policy achieves the maximum throughput and the minimum queueing delay. In the special case of $p(\age_t)=\age_t$, Theorem 5 of \cite{AgeOfInfo2016} provided a sufficient and necessary condition for characterizing the optimality of the zero-wait sampling policy. We now generalize that result to the case of non-linear age functions in the following corollary:

\begin{corollary}\label{coro}
If $f_{\max} = \infty$, $p(\cdot)$ is non-decreasing, and the service times $Y_i$ are {i.i.d.} with $0<\mathbb{E}[Y_i]<\infty$, then  the zero-wait sampling policy in \eqref{eq_Zero_wait}
is optimal for solving \eqref{eq_problem2} if and only if 
\emph{
\begin{align}\label{eq_coro1}
\mathbb{E}\left[p(\text{ess}\inf Y_i +Y_{i+1})\right] \geq \frac{\mathbb{E}\left[\int_{Y_i}^{Y_i+Y_{i+1}}\! p(t)dt\right]}{\mathbb{E}[Y_{i+1}]}\!, \!\!\!\!
\end{align}}\!\!
where \emph{$\text{ess}\inf Y_i = \inf\{y\in[0,\infty): \Pr[Y_i\leq y]>0\}$}. 

\end{corollary}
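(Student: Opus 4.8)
The plan is to use Theorem~\ref{thm1} to reduce Corollary~\ref{coro} to a one-dimensional comparison. First I would rewrite the optimal policy \eqref{thm1_eq12} as a \emph{wait-threshold} policy: because the $Y_i$'s are i.i.d.\ and $\age_t = t-S_i(\beta)$ on $[D_i(\beta),D_{i+1}(\beta))$, the stopping set $\{t\ge D_i(\beta): \mathbb{E}[p(\age_{t+Y_{i+1}})]\ge\beta\}$ coincides with $\{t: t-S_i(\beta)\ge\max(Y_i,\alpha(\beta))\}$, where $\alpha(\beta):=\inf\{a\ge 0:\mathbb{E}[p(a+Y_{i+1})]\ge\beta\}$. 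Thus $S_{i+1}(\beta)=S_i(\beta)+\max(Y_i,\alpha(\beta))$, and the zero-wait policy \eqref{eq_Zero_wait} is precisely the case $\alpha=0$. Since Theorem~\ref{thm1} guarantees that an optimal policy lies in this family with $\beta=\bar p_{\text{opt,1}}$, and every such policy has i.i.d.\ inter-sampling times, the renewal--reward representation used in the problem formulation gives
\begin{align}
\bar p_{\text{opt,1}} \;=\; \min_{\alpha\ge 0} G(\alpha), \qquad G(\alpha):=\frac{\mathbb{E}\!\left[\int_{Y_i}^{\max(Y_i,\alpha)+Y_{i+1}}\! p(t)\,dt\right]}{\mathbb{E}\!\left[\max(Y_i,\alpha)+Y_{i+1}-Y_i\right]},
\end{align}
with $G(0)$ equal to the right-hand side of \eqref{eq_coro1}. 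Hence the zero-wait policy is optimal if and only if $G(0)=\min_{\alpha\ge 0}G(\alpha)$.

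Next I would study $G$. Writing $y_{\min}=\text{ess}\inf Y_i$, we have $Y_i\ge y_{\min}$ almost surely, so $\max(Y_i,\alpha)=Y_i$ for every $\alpha\le y_{\min}$; consequently $G(\alpha)=G(0)$ on $[0,y_{\min}]$, and it suffices to understand $G$ on $[y_{\min},\infty)$. Differentiating the numerator and denominator of $G$ (using $\tfrac{d}{d\alpha}\max(Y_i,\alpha)=\mathbf{1}(\alpha>Y_i)$, $\int_0^s p=v(s)$ with $v'=p$, and the independence of $Y_i$ and $Y_{i+1}$), one finds that for $\alpha>y_{\min}$ the derivative $G'(\alpha)$ has the same sign as $\phi(\alpha)-G(\alpha)$, where $\phi(\alpha):=\mathbb{E}[p(\alpha+Y_{i+1})]$. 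Because $p$ is non-decreasing, $\phi$ is non-decreasing, so this is a single-crossing relation for $G$.

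It then remains to close the equivalence. If \eqref{eq_coro1} holds, i.e.\ $\phi(y_{\min})\ge G(0)$, then $\phi(\alpha)\ge\phi(y_{\min})\ge G(0)$ for all $\alpha\ge y_{\min}$; if $G$ were ever to drop below $G(0)$ on $(y_{\min},\infty)$, then at any point where $G(\alpha)<G(0)\le\phi(\alpha)$ we would have $G'(\alpha)>0$, which, starting from $G(y_{\min})=G(0)$, is impossible, so $G(\alpha)\ge G(0)$ everywhere and zero-wait is optimal. Conversely, suppose \eqref{eq_coro1} fails, so $\phi(y_{\min})<G(0)$; under a mild regularity condition (e.g.\ $p$ right-continuous, so that $\phi$ is right-continuous) we also get $\phi(\alpha)<G(0)$ for $\alpha$ slightly above $y_{\min}$, and on that interval the relation forces $G'(\alpha)<0$ wherever $G(\alpha)\ge G(0)$; since $\Pr[Y_i<\alpha]>0$ there, $G$ strictly decreases below $G(0)$, whence $\bar p_{\text{opt,1}}=\min_\alpha G(\alpha)<G(0)$ and zero-wait is not optimal.

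The main obstacle is the first step: rigorously identifying $\bar p_{\text{opt,1}}$ with $\min_{\alpha\ge 0}G(\alpha)$ (in particular, verifying that the threshold family in \eqref{thm1_eq12} exhausts the relevant waiting levels $\alpha$ and that the renewal--reward formula applies to each of them), followed by the technical bookkeeping around degenerate cases: $\text{ess}\inf Y_i=0$, discontinuities of $p$ (where reducing the pointwise requirement ``$\mathbb{E}[p(y+Y_{i+1})]\ge\beta$ for all $y$ in the support of $Y_i$'' to its value at $\text{ess}\inf Y_i$ uses monotonicity of $\phi$ and must be handled carefully), and justifying differentiation under the expectation. These are where I expect most of the effort to go.
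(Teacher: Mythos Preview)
Your approach is valid but takes a substantially different and longer route than the paper. The paper's argument is a direct application of Theorem~\ref{thm1}: the optimal policy is \eqref{thm1_eq12} with $\beta=\bar p_{\text{opt,1}}$, and this policy collapses to zero-wait exactly when $\mathbb{E}[p(y+Y_{i+1})]\ge\beta$ for every $y$ in the support of $Y_i$, i.e.\ (by monotonicity of $p$) when $\mathbb{E}[p(\text{ess}\inf Y_i+Y_{i+1})]\ge\beta$. Under zero-wait, \eqref{thm1_eq22} reads $\beta=\mathbb{E}\big[\int_{Y_i}^{Y_i+Y_{i+1}}p(t)\,dt\big]/\mathbb{E}[Y_{i+1}]$, and both directions of \eqref{eq_coro1} drop out in a couple of lines---no reparametrization by $\alpha$, no differentiation of $G$, no single-crossing analysis.

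Your identification $\bar p_{\text{opt,1}}=\min_{\alpha\ge0}G(\alpha)$, which you list as the main obstacle, is in fact an easy consequence of Theorem~\ref{thm1}: the optimal threshold policy lies in the $\alpha$-family (with $\alpha=\alpha(\bar p_{\text{opt,1}})$), and every fixed-$\alpha$ policy is feasible with long-run cost $G(\alpha)$, so $\bar p_{\text{opt,1}}\le G(\alpha)$ for all $\alpha$ with equality at $\alpha(\bar p_{\text{opt,1}})$. What your approach costs is the calculus on $G$; what it buys is that you do not need the full description of the optimal set coming from the proof of Theorem~\ref{thm1}. One point in your favor: the regularity caveat you flag for the converse (right-continuity of $\phi(a)=\mathbb{E}[p(a+Y_{i+1})]$ at $y_{\min}$) is a genuine subtlety that the paper's short argument glosses over---the step ``\eqref{thm1_eq12} is zero-wait iff $\phi(\text{ess}\inf Y_i)\ge\beta$'' tacitly uses it, since the infimum in \eqref{thm1_eq12} can be attained at $t=D_i$ even when $\phi(Y_i)<\beta\le\phi(Y_i^{+})$.
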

\begin{proof}
See Appendix \ref{app_coro}. 
\end{proof} 

One can consider $\text{ess}\inf Y_i$ as the minimum possible value of  $Y_i$.
It immediately follows  from Corollary \ref{coro} that


\begin{corollary}\label{coro4}
If $f_{\max} = \infty$, $p(\cdot)$ is non-decreasing, and the service times $Y_i$ are {i.i.d.} with $0<\mathbb{E}[Y_i]<\infty$, then the following assertions are true:
\begin{itemize}
\item [(a).] If $Y_i$ is a constant, then  \eqref{eq_Zero_wait} is optimal for solving \eqref{eq_problem2}. 
\item [(b).] If \emph{$\text{ess}\inf Y_i = 0$} and $p(\cdot)$ is strictly increasing, then  \eqref{eq_Zero_wait} is \textbf{not} optimal for solving \eqref{eq_problem2}. 
\end{itemize}
\end{corollary}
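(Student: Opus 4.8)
The plan is to obtain both assertions as immediate consequences of the necessary-and-sufficient condition \eqref{eq_coro1} in Corollary \ref{coro}: the zero-wait policy \eqref{eq_Zero_wait} is optimal for \eqref{eq_problem2} if and only if \eqref{eq_coro1} holds. So for part (a) I would verify that \eqref{eq_coro1} is satisfied, and for part (b) that it is violated (in fact strictly).

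For part (a), write $c=\mathbb{E}[Y_i]\in(0,\infty)$ for the constant service time, so $Y_i=Y_{i+1}=c$ almost surely and $\text{ess}\inf Y_i=c$. Then the left-hand side of \eqref{eq_coro1} equals $p(c+c)=p(2c)$ while the right-hand side equals $\tfrac1c\int_c^{2c}p(t)\,dt$. Since $p$ is non-decreasing, $p(t)\le p(2c)$ for all $t\in[c,2c]$, hence $\int_c^{2c}p(t)\,dt\le c\,p(2c)$ and the right-hand side is at most $p(2c)$. Thus \eqref{eq_coro1} holds and Corollary \ref{coro} yields the optimality of \eqref{eq_Zero_wait}.

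For part (b), since $\text{ess}\inf Y_i=0$ the condition \eqref{eq_coro1}, multiplied through by $\mathbb{E}[Y_{i+1}]>0$, reads
\[
\mathbb{E}[p(Y_{i+1})]\,\mathbb{E}[Y_{i+1}] \;\ge\; \mathbb{E}\!\left[\int_{Y_i}^{Y_i+Y_{i+1}}p(t)\,dt\right],
\]
and I would show that the reverse \emph{strict} inequality holds unconditionally, so that \eqref{eq_coro1} fails. Because $p$ is non-decreasing, $p(t)\ge p(Y_i)$ for every $t\in[Y_i,Y_i+Y_{i+1}]$, giving the pointwise bound $\int_{Y_i}^{Y_i+Y_{i+1}}p(t)\,dt\ge Y_{i+1}\,p(Y_i)$; and this is strict whenever $Y_{i+1}>0$, since then $p(t)>p(Y_i)$ on the non-degenerate interval $(Y_i,Y_i+Y_{i+1}]$ as $p$ is strictly increasing. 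As $\mathbb{E}[Y_{i+1}]>0$ forces $\Pr[Y_{i+1}>0]>0$, taking expectations and using $Y_i\perp Y_{i+1}$ together with $Y_i\stackrel{d}{=}Y_{i+1}$ gives
\[
\mathbb{E}\!\left[\int_{Y_i}^{Y_i+Y_{i+1}}p(t)\,dt\right] \;>\; \mathbb{E}[Y_{i+1}\,p(Y_i)] \;=\; \mathbb{E}[Y_{i+1}]\,\mathbb{E}[p(Y_{i+1})],
\]
contradicting \eqref{eq_coro1}; hence by Corollary \ref{coro} the zero-wait policy is not optimal for \eqref{eq_problem2}.

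Everything here is routine; the only point requiring a little care is the strictness in (b) --- ensuring the pointwise bound $\int_{Y_i}^{Y_i+Y_{i+1}}p(t)\,dt\ge Y_{i+1}p(Y_i)$ is strict with positive probability, which is precisely where the two hypotheses of (b) are used (that $p$ is strictly increasing, and that $\mathbb{E}[Y_i]>0$ so $Y_i$ is not almost surely zero). Finiteness of all expectations involved is inherited from the standing assumption $\mathbb{E}[p(\Delta+Y_i)]<\infty$ for finite $\Delta$.
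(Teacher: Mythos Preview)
Your proposal is correct and essentially identical to the paper's proof: both parts are obtained by checking the necessary-and-sufficient condition \eqref{eq_coro1} of Corollary~\ref{coro}, with the same monotonicity bound $p(t)\le p(2c)$ on $[c,2c]$ for part~(a) and the same strict lower bound $\int_{Y_i}^{Y_i+Y_{i+1}}p(t)\,dt> Y_{i+1}\,p(Y_i)$ (strict on $\{Y_{i+1}>0\}$) combined with independence and equidistribution for part~(b).
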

\begin{proof}
See Appendix \ref{app_coro4}. 
\end{proof} 
\begin{algorithm}
\caption{Bisection method for solving \eqref{thm2_eq4}} \label{alg2}
\begin{algorithmic}[]
\STATE \textbf{given} $l$, $u$, tolerance $\epsilon>0$.
\REPEAT
\STATE $\beta:= (l+u)/2$.
\STATE $o_1 :=\mathbb{E} [T_{i,\min}(\beta)-S_i(\beta)]$.
\STATE $o_2 :=\mathbb{E} [T_{i,\max}(\beta)-S_i(\beta)]$.
\STATE \textbf{if} $o_1> \frac{1}{f_{\max}}$, $u:=\beta$; 
\STATE \textbf{else if} $o_2< \frac{1}{f_{\max}}$, $l:=\beta$;
\STATE \textbf{else} \textbf{return} $\beta$.
\UNTIL $u-l\leq \epsilon$.
\STATE \textbf{return} $\beta$.
\end{algorithmic}
\end{algorithm}

The condition $\text{ess}\inf Y_i = 0$ is satisfied by many commonly used distributions, such as exponential distribution, geometric distribution, Erlang distribution, and hyperexponential distribution. According to Corollary \ref{coro4}(b), if $p(\cdot)$ is strictly increasing,  the zero-wait sampling policy  \eqref{eq_Zero_wait} is not optimal for these commonly used distributions.

\subsection{Continuous-time Sampling  with Rate Constraint}

When the sampling rate constraint \eqref{eq_constraint2} is imposed, a solution to \eqref{eq_problem2} is presented in the following theorem: 

\begin{theorem}[Continuous-time Sampling with Rate Constraint]\label{thm2}
If $p(\cdot)$ is non-decreasing, $\mathbb{E} [p(t+Y_i)]<\infty$ for all  finite $t$, and the service times $Y_i$ are {i.i.d.} with $0<\mathbb{E}[Y_i]<\infty$, then 
\eqref{thm1_eq12}-\eqref{thm1_eq22} is an optimal solution to \eqref{eq_problem2}, 
if 
\begin{align}\label{thm2_eq0}
\mathbb{E}[S_{i+1}(\beta)-S_i(\beta)]>\frac{1}{f_{\max}}.
\end{align}
Otherwise, $(S_1(\beta),S_2(\beta),\ldots)$  with a parameter $\beta$ is an optimal solution to \eqref{eq_problem2}, where 
\begin{align}\label{thm2_eq1}
S_{i+1}(\beta)\! =\left\{\begin{array}{l l} T_{i,\min}(\beta)\text{ with probability }\lambda,\\ T_{i,\max}(\beta) \text{ with probability }1-\lambda, \end{array}\right.
\end{align}
$T_{i,\min}(\beta)$ and $T_{i,\max}(\beta)$ are given by
\begin{align}\label{thm2_eq2}
\!\!T_{i,\min}(\beta)\! &=\!\inf\{ t\geq  D_i(\beta): \mathbb{E}[ p( \age_{t + Y_{i+1}})]\! \geq\! \beta \},\!\!\\
\!\!T_{i,\max}(\beta)\! &=\!\inf\{ t\geq  D_i(\beta): \mathbb{E}[ p( \age_{t + Y_{i+1}})]\! >\! \beta \},\!\label{thm2_eq3}
\end{align}
 $D_i(\beta) = S_i(\beta) +Y_i$, $\age_{t} = t-S_{i}(\beta)$, $\beta$ is determined by solving
\begin{align}\label{thm2_eq4}
\mathbb{E} [T_{i,\min}(\beta)-S_i(\beta)] \leq \frac{1}{f_{\max}}\leq \mathbb{E} [T_{i,\max}(\beta)-S_i(\beta)],\!\!
\end{align}
and $\lambda$ is given by\footnote{If $T_{i,\min}(\beta) = T_{i,\max}(\beta)$ almost surely,  then $\eqref{thm2_eq1}$ becomes a deterministic threshold policy and  $\lambda$ can be any number within $[0,1]$.}
\begin{align}\label{thm2_eq5}
\lambda = \frac{\mathbb{E} [T_{i,\max}(\beta)-S_i(\beta)] -\frac{1}{f_{\max}}  }{\mathbb{E} [T_{i,\max}(\beta)-T_{i,\min}(\beta)]}.
\end{align}
\end{theorem}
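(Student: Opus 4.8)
The plan is to cast Problem \eqref{eq_problem2}--\eqref{eq_constraint2} as a constrained renewal-reward (semi-Markov decision) problem over a regenerative cycle, and to solve it via Lagrangian relaxation combined with Dinkelbach's fractional-programming trick. First I would fix attention on a single inter-sampling cycle: because the $Y_i$ are i.i.d.\ and (by the class $\Pi$) the inter-sampling times form a regenerative process, the time-average objective equals the ratio $\mathbb{E}[\int_{D_i}^{D_{i+1}} p(\age_t)\,dt]/\mathbb{E}[D_{i+1}-D_i]$ and the rate constraint \eqref{eq_constraint2} becomes $\mathbb{E}[S_{i+1}-S_i]\ge 1/f_{\max}$; this reduction is the same one used to get Theorem \ref{thm1}. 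Dualizing the rate constraint with a multiplier $\nu\ge 0$ gives, for each $\nu$, an \emph{unconstrained} average-cost problem whose per-cycle reward has the extra term $\nu(S_{i+1}-S_i - 1/f_{\max})$; applying the Dinkelbach/parametric transformation with parameter $\beta$ converts the ratio minimization into the per-cycle optimal stopping problem
\begin{align}
\inf_{\tau\ge D_i}\; \mathbb{E}\!\left[\int_{D_i}^{\tau+Y_{i+1}} p(\age_t)\,dt - \beta(\tau+Y_{i+1}-D_i) + \nu(\tau - S_i)\right].\nonumber
\end{align}
Solving this stopping problem by a standard exchange/coupling argument (as in the proof of Theorem \ref{thm1}, and as established in the lemmas of Section \ref{sec_analysis}) yields a threshold rule in the statistic $\mathbb{E}[p(\age_{t+Y_{i+1}})]$: stop as soon as this quantity reaches the threshold, which after identifying $\beta$ with $\bar p_{\text{opt,1}}$ and absorbing $\nu$ becomes exactly \eqref{thm2_eq2}--\eqref{thm2_eq3}.

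Next I would handle the two cases. If the \emph{unconstrained} optimal policy \eqref{thm1_eq12}--\eqref{thm1_eq22} already satisfies \eqref{thm2_eq0}, i.e.\ its mean inter-sample time strictly exceeds $1/f_{\max}$, then the constraint is inactive, the optimal dual multiplier is $\nu=0$, and the unconstrained solution of Theorem \ref{thm1} is feasible and hence optimal. Otherwise the constraint binds. The key structural fact is that the map $t\mapsto \mathbb{E}[p(\age_{t+Y_{i+1}})]$ is non-decreasing (since $p$ is non-decreasing and $\age$ grows linearly in $t$), so $T_{i,\min}(\beta)$ and $T_{i,\max}(\beta)$ defined by the "$\ge\beta$" and "$>\beta$" hitting times bracket all optimal stopping times for the $\beta$-parametrized stopping problem; I would show $\mathbb{E}[T_{i,\min}(\beta)-S_i(\beta)]$ and $\mathbb{E}[T_{i,\max}(\beta)-S_i(\beta)]$ are monotone (non-increasing) in $\beta$ and sandwich $1/f_{\max}$ for a suitable $\beta$, which is \eqref{thm2_eq4}; this is where a single-layer bisection (Algorithm \ref{alg2}) locates $\beta$. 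Then, randomizing between the two extreme stopping times with probability $\lambda$ chosen as in \eqref{thm2_eq5} makes the mean inter-sample time exactly $1/f_{\max}$, so the randomized policy is feasible. Since both $T_{i,\min}(\beta)$ and $T_{i,\max}(\beta)$ are optimal for the Lagrangian (they are optimal stopping times for the same $\beta$-problem), any mixture of them is also Lagrangian-optimal; combined with feasibility and complementary slackness this gives optimality for the original constrained problem.

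The last ingredient, and the main obstacle, is establishing \emph{strong duality}: that the Lagrangian dual value equals $\bar p_{\text{opt,1}}$ with no duality gap, so that a Lagrangian-optimal feasible policy is globally optimal. The excerpt signals that this is proved via a geometric-multiplier argument (Theorem \ref{thm6_strong_duality}) — one exhibits the multiplier $\nu^\*$ as the slope of a supporting hyperplane to the lower boundary of the achievable (sampling-rate, age-penalty) region and shows this region is such that the supporting hyperplane touches at a feasible point. Verifying the requisite convexity/closedness of that region for an uncountable-state semi-Markov problem, and checking the Slater-type interior condition (which is why the hypothesis $\mathbb{E}[p(t+Y_i)]<\infty$ for finite $t$ is needed, guaranteeing a feasible policy of finite cost with slack in the rate constraint), is the delicate step; everything else is a matter of assembling the stopping-problem solution, the monotonicity of the bracketing hitting times, and the randomization calibration. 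I would therefore structure the write-up as: (1) regenerative reduction; (2) Lagrangian + Dinkelbach reformulation; (3) solution of the per-cycle stopping problem and its threshold structure; (4) the inactive-constraint case; (5) monotonicity of $\mathbb{E}[T_{i,\min/\max}(\beta)-S_i(\beta)]$ and existence of $\beta$ solving \eqref{thm2_eq4}; (6) feasibility via the choice of $\lambda$; (7) strong duality and complementary slackness to conclude optimality.
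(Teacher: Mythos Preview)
Your plan matches the paper's architecture almost step for step: regenerative reduction to a per-cycle ratio, Dinkelbach's parametric transformation, Lagrangian relaxation of the rate constraint, threshold structure of the per-cycle optimizer, the inactive/active two-case split, and a geometric-multiplier argument for zero duality gap. Three points where the paper's execution differs from (or corrects) your sketch are worth flagging.

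First, the paper applies Dinkelbach \emph{before} dualizing: it converts the constrained fractional MDP into the constrained parametric problem $h(c)$ at $c=\bar p_{\text{opt,1}}$, and only then introduces the multiplier $\alpha\ge 0$; the threshold is then identified as $\beta=\bar p_{\text{opt,1}}+\alpha$. After the sufficient-statistic step ($Y_i$ alone determines the optimal $Z_i$), the per-cycle problem is a \emph{deterministic convex} minimization in the waiting time $z\ge 0$, solved via one-sided directional derivatives (Lemma \ref{lem_optimal_eq_opt_stopping22}); no optimal-stopping or coupling argument is used. Second, two slips: $\mathbb{E}[T_{i,\min/\max}(\beta)-S_i(\beta)]$ is \emph{non-decreasing} in $\beta$ (larger threshold means longer wait), not non-increasing; and the hypothesis $\mathbb{E}[p(t+Y_i)]<\infty$ is not a Slater condition but is used in Case~2 to ensure $z_{\max}(y,\alpha)\to\infty$ as $\alpha\to\infty$, so that the intervals $[\mathbb{E}[Y_i+z_{\min}(Y_i,\alpha)],\mathbb{E}[Y_i+z_{\max}(Y_i,\alpha)]]$ cover $[\mathbb{E}[Y_i+z_{\min}(Y_i,0)],\infty)$ and some $\alpha^\star$ achieves the bracketing \eqref{thm2_eq4}. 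Third, the strong-duality step is not argued via convexity/closedness of an achievable region. It is purely constructive: the paper exhibits a candidate pair $(\pi^\star,\alpha^\star)$ (the deterministic policy with $\alpha^\star=0$ in Case~1; the randomized mixture with $\lambda$ from \eqref{thm2_eq5} in Case~2) and directly verifies the four geometric-multiplier conditions of \cite[Prop.~6.2.5]{Bertsekas2003}---primal feasibility, $\alpha^\star\ge 0$, Lagrangian optimality (both $z_{\min}$ and $z_{\max}$ lie in the per-sample optimizer set, so any mixture is Lagrangian-optimal), and complementary slackness. The existence of such a pair is itself the proof of zero duality gap via \cite[Prop.~6.2.3(b)]{Bertsekas2003}; no separate convexity verification is needed.
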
 
The proof of Theorem \ref{thm2} will be provided in  Section \ref{sec_analysis}. 
According to Theorem \ref{thm2}, the solution to \eqref{eq_problem2} consists of two cases: In \emph{Case 1},  the deterministic threshold policy in Theorem \ref{thm1} is an optimal solution to \eqref{eq_problem2}, which needs to satisfy  \eqref{thm2_eq0}. 
In \emph{Case 2},  the randomized threshold policy in \eqref{thm2_eq1}-\eqref{thm2_eq5} is an optimal solution to \eqref{eq_problem2}, which needs to satisfy 
\begin{align}\label{eq_equality_constraint}
\mathbb{E} [S_{i+1}(\beta)-S_i(\beta)] = \frac{1}{f_{\max}}.
\end{align}
We note that the only difference between 
\eqref{thm2_eq2} and \eqref{thm2_eq3} is that ``$\geq$'' is used 
in \eqref{thm2_eq2} while 
``$>$'' is employed 
 in \eqref{thm2_eq3}.\footnote{Clearly, an important issue is the optimality of such a randomized threshold policy, which is proven in Section \ref{sec_analysis}.} If 
 there exists a time-interval $[a,b]$ such that 
\begin{align}
\mathbb{E}[ p( {t + Y_{i+1}})] = \beta \text{ for all } t\in[a,b], 
\end{align}
as shown in Fig. \ref{fig_solution}(a), then $T_{i,\min}(\beta)<  T_{i,\max}(\beta)$. 
In this case, the choices $S_{i+1}(\beta)=T_{i,\min}(\beta)$ and $S_{i+1}(\beta)=T_{i,\max}(\beta)$ may not satisfy \eqref{eq_equality_constraint}, but their randomized mixture in \eqref{thm2_eq1} can satisfy \eqref{eq_equality_constraint}. In particular, if $\beta$ and $\lambda$ are given by  \eqref{thm2_eq4} and \eqref{thm2_eq5}, then 
\eqref{eq_equality_constraint} is satisfied.
\begin{figure}
\centering
\includegraphics[width=0.48\textwidth]{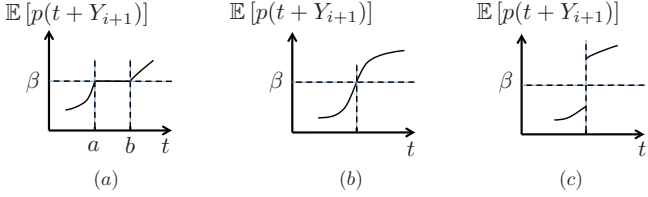}   
\caption{Three cases of function $f(t) = \mathbb{E}\left[ p(t+Y_{i+1})\right]$.}
\label{fig_solution}
\end{figure}

We provide a low-complexity  algorithm to compute the randomized threshold policy in \eqref{thm2_eq1}-\eqref{thm2_eq5}:  
As shown in Appendix \ref{app_thm_strong_duality}, there is a unique $\beta$ satisfying \eqref{thm2_eq4}. We use the bisection method in Algorithm \ref{alg2} to solve \eqref{thm2_eq4} and obtain $\beta$. 
After that, $S_{i+1}(\beta)$ and $\lambda$  can be computed by substituting  $\beta$ into \eqref{thm2_eq1}-\eqref{thm2_eq3} and \eqref{thm2_eq5}. 
Because of the similarity between \eqref{thm2_eq2} and \eqref{thm2_eq3}, $S_{i+1}(\beta)$ and $\lambda$ 
are quite sensitive to the numerical error in $\beta$. This issue can be resolved by replacing $T_{i,\min}(\beta)$ in \eqref{thm2_eq1} and \eqref{thm2_eq5} with $T_{i,\min}'(\beta)$ and replacing $T_{i,\max}(\beta)$ in \eqref{thm2_eq1} and \eqref{thm2_eq5} with $T_{i,\max}'(\beta)$, where $T_{i,\min}'(\beta)$ and $T_{i,\max}'(\beta)$ are determined by
\begin{align}\label{}
\!\!T_{i,\min}'(\beta)\! &=\!\inf\{ t\geq  D_i(\beta): \mathbb{E}[ p( \age_{t + Y_{i+1}})]\! \geq\! \beta-\epsilon/2 \},\!\!\\
\!\!T_{i,\max}'(\beta)\! &=\!\inf\{ t\geq  D_i(\beta): \mathbb{E}[ p( \age_{t + Y_{i+1}})]\! >\! \beta +\epsilon/2\},\!\label{}
\end{align}
respectively, and $\epsilon>0$ is the tolerance in Algorithm \ref{alg2}. 
One can improve the accuracy of this solution  by \emph{(i)} reducing the tolerance $\epsilon$ and \emph{(ii)} computing the expectations more accurately  by increasing the number of Monte Carlo realizations or using advanced techniques such as importance sampling. 



As depicted in Fig. \ref{fig_solution}(b)-(c), if $\mathbb{E} [p(t+Y_{i+1})] $ is strictly increasing on $t\in[0,\infty)$, then 
$T_{i,\min}(\beta) = T_{i,\max}(\beta)$ almost surely and \eqref{thm2_eq1} reduces to a deterministic threshold policy. In this case, Theorem \ref{thm2} can be greatly simplified, as 
stated in the following corollary: 

\begin{corollary}\label{coro2}
In Theorem \ref{thm2}, if $\mathbb{E} [p(t+Y_{i+1})] $ is strictly increasing in $t$, then 
\eqref{thm1_eq12} 
 is an optimal solution to \eqref{eq_problem2}, where $D_i(\beta) = S_i(\beta) +Y_i$, $\age_{t} = t-S_{i}(\beta)$, and 
 $\beta$ is determined by \eqref{thm1_eq22}, if 
\begin{align}
\mathbb{E}[S_{i+1}(\beta)-S_i(\beta)]>\frac{1}{f_{\max}};
\end{align}
otherwise, $\beta$ is determined by solving 
\begin{align}
\mathbb{E}[S_{i+1}(\beta)-S_i(\beta)]=\frac{1}{f_{\max}}. 
\end{align}
 \end{corollary}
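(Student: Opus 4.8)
The plan is to obtain Corollary~\ref{coro2} as a direct specialization of Theorem~\ref{thm2}: under the strict-monotonicity hypothesis the randomized threshold policy \eqref{thm2_eq1}--\eqref{thm2_eq5} degenerates into the deterministic one in \eqref{thm1_eq12}. I would first abbreviate $f(s) = \mathbb{E}[p(s+Y_{i+1})]$ and note that, since $Y_{i+1}$ is independent of $S_i(\beta)$, the threshold conditions in \eqref{thm2_eq2}--\eqref{thm2_eq3} read $\mathbb{E}[p(\age_{t+Y_{i+1}})] = f(t - S_i(\beta))$, so that $T_{i,\min}(\beta)$ and $T_{i,\max}(\beta)$ are the first times $t\geq D_i(\beta)$ at which $f(t-S_i(\beta))$ \emph{reaches}, respectively \emph{strictly exceeds}, the level $\beta$. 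The hypothesis of the corollary is precisely that $f$ is strictly increasing.

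The crux is to show that $T_{i,\min}(\beta) = T_{i,\max}(\beta)$ almost surely. Fix a realization of $S_i(\beta)$ and $Y_i$, hence of $D_i(\beta)=S_i(\beta)+Y_i$. Since $f$ is strictly increasing, the set $\{s: f(s) = \beta\}$ is either empty or a single point, and deleting a single point from a set does not change its infimum; consequently $\inf\{t\geq D_i(\beta): f(t-S_i(\beta)) \geq \beta\} = \inf\{t\geq D_i(\beta): f(t-S_i(\beta)) > \beta\}$. I would make this explicit in the possible situations: $(i)$ $f(Y_i)\geq\beta$, so the constraint $t\geq D_i(\beta)$ binds and both infima equal $D_i(\beta)$; $(ii)$ $f(Y_i)<\beta$ and $f$ ``jumps over'' $\beta$, in which case both infima equal the jump location; $(iii)$ $f(Y_i)<\beta$ and $f$ attains $\beta$ at some $s^\star>Y_i$, in which case strict monotonicity immediately to the right of $s^\star$ forces both infima to equal $S_i(\beta)+s^\star$ (and if $f$ never reaches $\beta$ both stopping times are $+\infty$). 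In every case the two stopping times coincide, and since the argument is pathwise it holds almost surely.

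Given this coincidence, \eqref{thm2_eq1} yields $S_{i+1}(\beta) = T_{i,\min}(\beta) = T_{i,\max}(\beta)$ irrespective of $\lambda$, which is exactly the policy \eqref{thm1_eq12}, and the sandwich \eqref{thm2_eq4} collapses to $\mathbb{E}[S_{i+1}(\beta)-S_i(\beta)] = 1/f_{\max}$, the equality condition stated in the corollary. For the first branch, I would simply observe that the policy invoked in Case~1 of Theorem~\ref{thm2} is the one from Theorem~\ref{thm1}, namely \eqref{thm1_eq12} with $\beta$ the root of \eqref{thm1_eq22}, and that Case~1's condition \eqref{thm2_eq0} is verbatim the inequality $\mathbb{E}[S_{i+1}(\beta)-S_i(\beta)] > 1/f_{\max}$ appearing in the corollary. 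Assembling the two branches completes the argument. The only mildly delicate point is the pathwise verification that $T_{i,\min}(\beta)$ and $T_{i,\max}(\beta)$ agree---in particular handling the endpoint $t = D_i(\beta)$ and allowing $f$ to be merely monotone, hence possibly discontinuous, rather than continuous; everything else is bookkeeping on top of Theorem~\ref{thm2}.
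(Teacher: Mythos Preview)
Your proposal is correct and is exactly the argument the paper has in mind: the paper omits the proof entirely, noting only that Corollary~\ref{coro2} ``follows immediately from Theorem~\ref{thm2},'' and elsewhere remarks that when $\mathbb{E}[p(t+Y_{i+1})]$ is strictly increasing one has $T_{i,\min}(\beta)=T_{i,\max}(\beta)$ almost surely so that \eqref{thm2_eq1} degenerates to a deterministic policy. Your pathwise case analysis simply spells this out, and the only nontrivial point you flag---that the infimum over $\{f\geq\beta\}$ and over $\{f>\beta\}$ agree even at the boundary $t=D_i(\beta)$ and even when $f$ is discontinuous---is handled correctly.
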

The proof of Corollary \ref{coro2} is omitted, because it  follows immediately from Theorem \ref{thm2}. 

If $p(\cdot)$ is strictly increasing or the distribution of $Y_i$ is sufficiently smooth, $\mathbb{E} [p(t+Y_{i+1})] $ is strictly increasing in $t$. Hence, the extra condition in Corollary \ref{coro2} is satisfied for a broad class of age penalty functions and service time distributions. 
%

A restrictive case of problem \eqref{eq_problem2} was  studied in \cite{AgeOfInfo2016}, where $p(\cdot)$ was assumed to be positive and non-decreasing. There is an error in Theorem 3 of \cite{AgeOfInfo2016}, because the condition ``$\mathbb{E} [p(t+Y_{i+1})] $ is strictly increasing in $t$" is missing. Further, the solution in Theorem 3 of \cite{AgeOfInfo2016} is more complicated than that in Corollary \ref{coro2}. A special case of Corollary \ref{coro2} with $p(t) = t$ was derived in Theorem 4 of \cite{AgeOfInfo2016}.

%

%


\subsection{Discrete-time Sampling}
We now move on to the discrete-time sampling problem \eqref{eq_problem}. When
there is no  sampling rate constraint (i.e., $f_{\max} = \infty$), the solution to \eqref{eq_problem} is provided in the following theorem:

\begin{theorem}[Discrete-time Sampling without Rate Constraint]\label{thm3}
If $f_{\max} = \infty$, $p(\cdot)$ is non-decreasing, and the service times $Y_i$ are {i.i.d.} with $0<\mathbb{E}[Y_i]<\infty$, then 
$(S_1(\beta),S_2(\beta),\ldots)$ is an optimal solution to \eqref{eq_problem}, where 
\begin{align}\label{thm3_eq12}
\!\!\!S_{i+1}(\beta)\! &=\!\min\{ t\in \mathbb{N}: t\geq  D_i(\beta), \mathbb{E}\left[p(\age_{t+Y_{i+1}})\right]\! \geq \beta \},\!\!
\end{align}
$D_i(\beta) = S_i(\beta) +Y_i$, $\age_{t} = t-S_{i}(\beta)$, and $\beta$ is the root  of 
\begin{align}\label{thm3_eq22}
\beta = \frac{\mathbb{E}\left[\sum_{t =D_i(\beta)}^{D_{i+1}(\beta)-1}\! p(\age_t)\right]}{\mathbb{E}[D_{i+1}(\beta)\!-\!D_i(\beta)]}\!. \!\!\!\!
\end{align}
Further, $\beta$ is exactly the optimal value to \eqref{eq_problem}, i.e., \emph{$\beta =\bar p_{\text{opt,2}}$}. 
\end{theorem}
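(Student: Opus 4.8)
The plan is to follow, essentially step for step, the argument behind the continuous-time result (Theorem~\ref{thm1}, proved in Section~\ref{sec_continuous}), replacing the per-cycle integral $\int_{D_i}^{D_{i+1}}p(\age_t)\,dt$ by the per-cycle sum $\sum_{t=D_i}^{D_{i+1}-1}p(\age_t)$ throughout. The algebra is the same; what needs re-examination is only that the monotonicity and continuity arguments still go through when the sampling (stopping) times live on $\mathbb{N}$, and in fact discreteness removes rather than adds difficulties.

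First I would use the regenerative structure required of $\pi\in\Pi$ together with the renewal-reward theorem to rewrite the time-average objective in \eqref{eq_problem} over a single inter-delivery interval $[D_i,D_{i+1})$: for any $\pi\in\Pi$,
\[
\limsup_{n\to\infty}\frac1n\,\EE\!\left[\sum_{t=1}^{n}p(\age_t)\right]
=\frac{\EE\!\left[\sum_{t=D_i}^{D_{i+1}-1}p(\age_t)\right]}{\EE[D_{i+1}-D_i]},
\]
the initial transient $(S_0,D_0)$ being irrelevant to the $\limsup$. Thus \eqref{eq_problem} becomes a fractional (ratio) optimization over causal rules for choosing $S_{i+1}\ge D_i$, which in turn fixes $D_{i+1}=S_{i+1}+Y_{i+1}$ and $\age_t=t-S_i$ on $\{D_i\le t<D_{i+1}\}$. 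Since $D_{i+1}-D_i\ge Y_{i+1}$, the denominator is at least $\EE[Y_{i+1}]>0$ and the ratio is well defined.

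Next I would apply the MDP version of Dinkelbach's transform used in \cite{SunISIT2017,SunTIT2018}: set $h(\beta)=\inf_{\pi}\EE[\sum_{t=D_i}^{D_{i+1}-1}(p(\age_t)-\beta)]$, which is an infimum of affine, strictly decreasing (slope $\le-\EE[Y_{i+1}]$) functions of $\beta$, hence concave and strictly decreasing; one shows $h$ is finite near $\bar p_{\text{opt,2}}$ under the standing assumptions, strictly positive for $\beta<p(0)\le\bar p_{\text{opt,2}}$ and strictly negative for $\beta$ large, so it has a unique root $\beta^\star$, and the standard fractional-programming equivalence gives $\beta^\star=\bar p_{\text{opt,2}}$ with any minimizer of the inner problem at $\beta=\beta^\star$ being optimal for \eqref{eq_problem}. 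The heart of the proof is solving that inner problem $\inf_{\tau\ge D_i}\EE[\sum_{t=D_i}^{\tau+Y_{i+1}-1}(p(\age_t)-\beta)]$ for fixed $\beta$: advancing the integer stopping time $\tau$ from $t$ to $t+1$ changes the objective by exactly $\EE[p(\age_{t+Y_{i+1}})]-\beta$ (using that $Y_{i+1}$ is independent of the channel-state history up to time $t$, so the conditional marginal cost does not depend on the history), and this marginal cost is non-decreasing in $t$ because $p$ is non-decreasing and $\age_{t+Y_{i+1}}=t+Y_{i+1}-S_i$ increases in $t$. Hence the objective is "discretely convex" in $\tau$ — it decreases while the marginal cost is negative and increases afterward — so among all causal stopping rules it is optimal to stop at the first $t\ge D_i$ with $\EE[p(\age_{t+Y_{i+1}})]\ge\beta$, which is precisely \eqref{thm3_eq12}. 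Evaluating $h(\beta^\star)=0$ under this policy yields \eqref{thm3_eq22}, and combining with the renewal-reward identity of the first step shows the threshold policy attains time-average exactly $\beta^\star=\bar p_{\text{opt,2}}$.

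The main obstacle I anticipate is making the first and the last step rigorous without extra hypotheses. For the first step one must verify that \eqref{thm3_eq12} with $\beta=\beta^\star$ generates a genuine renewal sequence of cycles with finite mean length — in particular that $S_{i+1}(\beta^\star)$ is almost surely finite, which needs a short argument using finiteness of $\bar p_{\text{opt,2}}$, monotonicity of $p$, and $0<\EE[Y_i]<\infty$ to rule out the degenerate situation where $\EE[p(t+Y_{i+1})]<\beta^\star$ for every $t$. For the inner optimal-stopping step one must justify interchanging $\inf$ and $\EE$ and handle the possibility $\tau=\infty$, exactly as in the proof of Lemma~\ref{lem_optimal_eq_opt_stopping22}; here discreteness is an advantage, since there are no measurability subtleties and the first crossing is a true minimum. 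The only residual point is that $h$ and the cycle statistics are continuous in $\beta$ only by virtue of the averaging over the random $Y_{i+1}$, which I would note but not belabor.
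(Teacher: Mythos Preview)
Your proposal is correct and matches the paper's approach: the renewal-reward reduction, Dinkelbach's transform (Lemma~\ref{lem_ratio_to_minus}), and your marginal-cost/discrete-convexity argument for the inner stopping problem are exactly the contents of Section~\ref{sec_proof_1}, Lemma~\ref{lem_ratio_to_minus}, and Lemma~\ref{lem_optimal_eq_opt_stopping23} (Appendix~\ref{app_lem_optimal_eq_opt_stopping23}), respectively. The only structural difference is that the paper formally obtains Theorem~\ref{thm3} as the $f_{\max}=\infty$ special case of Theorem~\ref{thm4} and so carries the Lagrangian/geometric-multiplier machinery along, whereas you bypass it---a legitimate shortcut in the unconstrained case; one small point to tighten is that your renewal-reward identity as written ``for any $\pi\in\Pi$'' already presumes the reduction $S_{i+1}\ge D_i$ of Lemma~\ref{lem_zeroqueue}, which you should invoke explicitly before the ratio form.
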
 
The proofs of the discrete-time sampling results will be discussed in  Section \ref{sec_discrete_proofs}. 
Theorem \ref{thm3} is quite similar to Theorem \ref{thm1}, with two minor differences: \emph{(i)} The  sampling time $S_{i+1}(\beta)$ in \eqref{thm1_eq12} is a real number, which is restricted to an integer in \eqref{thm3_eq12}. \emph{(ii)} The integral in \eqref{thm1_eq22} becomes a summation in \eqref{thm3_eq22}. 

In the discrete-time case, the optimality of the zero-wait sampling policy is characterized as  follows.

\begin{corollary}\label{coro3}
If $f_{\max} = \infty$, $p(\cdot)$ is non-decreasing, and the service times $Y_i$ are {i.i.d.} with $0<\mathbb{E}[Y_i]<\infty$, then the zero-wait sampling policy \eqref{eq_Zero_wait} is optimal for solving \eqref{eq_problem} if and only if there exists $e<1$ such that
\emph{
\begin{align}\label{}
\mathbb{E}\left[p(\text{ess}\inf Y_i +Y_{i+1} + e)\right] \geq \frac{\mathbb{E}\left[\sum_{t=Y_i}^{Y_i+Y_{i+1}-1}\! p(t)\right]}{\mathbb{E}[Y_{i+1}]}\!, \!\!\!\!
\end{align}}\!\!
where \emph{$\text{ess}\inf Y_i = \min\{y\in\mathbb{N}: \Pr[Y_i\leq y]>0\}$}. 
\end{corollary}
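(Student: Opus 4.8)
The plan is to mirror the proof of Corollary~\ref{coro} (its continuous-time counterpart), with Theorem~\ref{thm1} replaced by Theorem~\ref{thm3} and integrals replaced by sums. First I would identify the right-hand side of the claimed inequality with the time-average penalty of the zero-wait policy. Under \eqref{eq_Zero_wait} the pair $(D_i,Y_i)$ is a (Markov-)regenerative process: the $i$-th cycle $[D_i,D_{i+1})$ has length $D_{i+1}-D_i=Y_{i+1}$, and since $\age_t=(t-D_i)+Y_i$ for every integer $t\in\{D_i,\dots,D_{i+1}-1\}$, its penalty equals $\sum_{k=Y_i}^{Y_i+Y_{i+1}-1}p(k)$. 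By the renewal--reward theorem the time-average penalty of zero-wait is $\bar p_{\mathrm{zw}}=\mathbb{E}\big[\sum_{k=Y_i}^{Y_i+Y_{i+1}-1}p(k)\big]/\mathbb{E}[Y_{i+1}]$, i.e.\ exactly the right-hand side. Since $\bar p_{\text{opt,2}}$ is the optimal value of \eqref{eq_problem}, zero-wait is optimal if and only if $\bar p_{\mathrm{zw}}=\bar p_{\text{opt,2}}$.

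The core step is to decide when the Theorem~\ref{thm3} threshold policy collapses to zero-wait. Following the Dinkelbach reduction behind Theorem~\ref{thm3}, for a parameter $\beta$ and a wait of $z\in\mathbb{N}$ slots after a delivery the per-cycle cost is $\phi_\beta(Y_i,z)=\mathbb{E}\big[\sum_{k=Y_i}^{Y_i+z+Y_{i+1}-1}(p(k)-\beta)\big]$, whose forward difference in $z$ equals $\mathbb{E}[p(Y_i+z+Y_{i+1})]-\beta$. This difference is nondecreasing in $z$ because $p$ is nondecreasing, so $\phi_\beta(Y_i,\cdot)$ has nondecreasing forward differences and is minimized at the smallest $z$ with $\mathbb{E}[p(Y_i+z+Y_{i+1})]\ge\beta$ --- which is precisely the rule \eqref{thm3_eq12}, i.e.\ $S_{i+1}(\beta)-D_i(\beta)$. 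Hence the threshold policy with parameter $\beta$ outputs $S_{i+1}(\beta)=D_i(\beta)$ for almost every $Y_i$ iff $\mathbb{E}[p(Y_i+Y_{i+1})]\ge\beta$ a.s., and since $Y_i\ge\text{ess}\inf Y_i$ a.s.\ and $p$ is monotone, iff $\mathbb{E}[p(\text{ess}\inf Y_i+Y_{i+1})]\ge\beta$. In discrete time the age climbs in unit increments, so the threshold is crossed only up to a sub-unit margin; this is the origin of the ``$+e$'' term (with $e<1$), which has no analogue in Corollary~\ref{coro}, and pinning down its exact form is part of the analysis.

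With this in hand both directions follow. For sufficiency, take $\beta=\bar p_{\mathrm{zw}}$: by the previous paragraph the Theorem~\ref{thm3} threshold policy with this parameter is zero-wait, so substituting it into \eqref{thm3_eq22} shows $\bar p_{\mathrm{zw}}$ solves \eqref{thm3_eq22}; since that root is unique and equals $\bar p_{\text{opt,2}}$ (Theorem~\ref{thm3}), we get $\bar p_{\mathrm{zw}}=\bar p_{\text{opt,2}}$ and zero-wait is optimal. For necessity, suppose $\mathbb{E}[p(\text{ess}\inf Y_i+Y_{i+1})]<\bar p_{\mathrm{zw}}$. In the discrete setting $m:=\text{ess}\inf Y_i$ is an atom of $Y_i$, so the admissible policy that inserts one extra idle slot exactly when $Y_i=m$, with renewal reward $R_i'$ and cycle length $L_i'$, satisfies $\mathbb{E}[R_i'-\bar p_{\mathrm{zw}}L_i']=\mathbb{E}[\phi_{\bar p_{\mathrm{zw}}}(Y_i,0)]+\Pr[Y_i=m]\big(\mathbb{E}[p(m+Y_{i+1})]-\bar p_{\mathrm{zw}}\big)=0+\Pr[Y_i=m]\big(\mathbb{E}[p(m+Y_{i+1})]-\bar p_{\mathrm{zw}}\big)<0$; since $\beta\mapsto\mathbb{E}[R_i'-\beta L_i']$ is strictly decreasing (slope $-\mathbb{E}[L_i']<0$), this policy has time-average penalty strictly below $\bar p_{\mathrm{zw}}$, so zero-wait is not optimal.

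The main obstacle is the discrete bookkeeping: making the threshold-crossing argument precise --- fixing the sub-unit parameter $e<1$ and confirming that \eqref{thm3_eq22} really admits $\bar p_{\mathrm{zw}}$ as a root exactly when the stated inequality holds --- and checking that the ``threshold policy $=$ zero-wait'' equivalence is not disturbed by the integrality of $S_i$, $D_i$ and $\age_t$. The remaining ingredients (renewal--reward, the Dinkelbach comparison, and the one-slot perturbation) are routine once the continuous-time argument of Corollary~\ref{coro} is in place.
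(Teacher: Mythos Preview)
The paper gives no separate proof of Corollary~\ref{coro3}; Section~\ref{sec_discrete_proofs} simply says the discrete-time results follow by replacing Lemma~\ref{lem_optimal_eq_opt_stopping22} with Lemma~\ref{lem_optimal_eq_opt_stopping23} and rerunning the continuous-time arguments. Your proposal does exactly this, and your sufficiency direction matches the paper's continuous-time argument (Appendix~\ref{app_coro}) verbatim: set $\beta=\bar p_{\mathrm{zw}}$, observe that the threshold rule \eqref{thm3_eq12} collapses to zero-wait, and conclude via uniqueness of the root of \eqref{thm3_eq22}. For necessity you depart slightly: the paper's route (transposed from Appendix~\ref{app_coro}) would invoke the description of \emph{all} optimal waiting times in Lemma~\ref{lem_optimal_eq_opt_stopping23} to get $z_{\min}(\text{ess}\inf Y_i,0)=0$ and hence $\mathbb{E}[p(\text{ess}\inf Y_i+Y_{i+1})]\ge\bar p_{\text{opt,2}}$, whereas you give a direct one-slot perturbation. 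Both are valid; yours is a bit more self-contained and exploits nicely that in the discrete model $\text{ess}\inf Y_i$ is automatically an atom.

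On the ``$+e$'' term: be careful here. Your own derivation---and the straight transposition of Appendix~\ref{app_coro}---produces the condition with $e=0$, and your perturbation argument shows that $\mathbb{E}[p(\text{ess}\inf Y_i+Y_{i+1})]<\bar p_{\mathrm{zw}}$ already forces strict suboptimality of zero-wait. Since the left side is non-decreasing in $e$, the quantifier ``there exists $e<1$'' can only \emph{weaken} the hypothesis relative to $e=0$; it does not arise from any integrality slack in the threshold-crossing step. So do not expect the discrete bookkeeping to manufacture a nontrivial $e$: the analysis you outline pins the criterion at $e=0$, and the extra parameter in the stated corollary appears to be cosmetic rather than something your proof must reproduce.
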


When the sampling rate constraint \eqref{eq_constraint} is imposed,  the solution to \eqref{eq_problem} is provided in the following theorem.

\begin{theorem}[Discrete-time Sampling with Rate Constraint]\label{thm4}
If $p(\cdot)$ is non-decreasing, $\mathbb{E} [p(t+Y_i)]<\infty$ for all  finite $t$, and the service times $Y_i$ are {i.i.d.} with $0<\mathbb{E}[Y_i]<\infty$, then 
\eqref{thm3_eq12}-\eqref{thm3_eq22}  is an optimal solution to \eqref{eq_problem}, if
\begin{align}\label{thm4_eq0}
\mathbb{E}[S_{i+1}(\beta)-S_i(\beta)]> \frac{1}{f_{\max}}.
\end{align}
Otherwise, $(S_1(\beta),S_2(\beta),\ldots)$ is an optimal solution to \eqref{eq_problem}, where 
\begin{align}\label{thm4_eq1}
S_{i+1}(\beta)\! =\left\{\begin{array}{l l} T_{i,\min}(\beta)\text{ with probability }\lambda,\\ T_{i,\max}(\beta) \text{ with probability }1-\lambda, \end{array}\right.
\end{align}
$T_{i,\min}(\beta)$ and $T_{i,\max}(\beta)$ are given by
\begin{align}\label{thm4_eq2}
\!\!\!\!T_{i,\min}(\beta)\! &=\!\min\{ t\in \mathbb{N}\!:\! t\geq  D_i(\beta), \mathbb{E}\left[p(\age_{t+Y_{i+1}})\right]\! \geq\! \beta \},\!\!\\
\!\!\!\!T_{i,\max}(\beta)\! &=\!\min\{ t\in \mathbb{N}\!:\! t\geq  D_i(\beta),   \mathbb{E}\left[p(\age_{t+Y_{i+1}})\right]\! >\! \beta \},\!\!\!\label{thm4_eq3}
\end{align}
 $D_i(\beta) = S_i(\beta) +Y_i$, $\age_{t} = t-S_{i}(\beta)$, $\beta$ is determined by solving
\begin{align}\label{thm4_eq4}
\!\!\mathbb{E} [T_{i,\min}(\beta)-S_i(\beta)] \leq \frac{1}{f_{\max}}\leq \mathbb{E} [T_{i,\max}(\beta)-S_i(\beta)],\!\!
\end{align}
and $\lambda$ is given by
\begin{align}\label{thm4_eq5}
\lambda = \frac{\mathbb{E} [T_{i,\max}(\beta)-S_i(\beta)] -\frac{1}{f_{\max}}  }{\mathbb{E} [T_{i,\max}(\beta)-T_{i,\min}(\beta)]}.
\end{align}
\end{theorem}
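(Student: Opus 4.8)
The plan is to prove Theorem~\ref{thm4} by adapting, step for step, the argument used for its continuous-time counterpart Theorem~\ref{thm2}, keeping track of the changes forced by the integer-valued clock. First I would use the regenerative structure built into $\Pi$: since the $Y_i$ are i.i.d.\ and the age resets to $Y_i$ at each delivery, the epochs $D_i$ are renewal points, so by renewal-reward theory the time-average objective in \eqref{eq_problem} equals the per-cycle ratio $\mathbb{E}\bigl[\sum_{t=D_i}^{D_{i+1}-1}p(\age_t)\bigr]/\mathbb{E}[D_{i+1}-D_i]$, while the long-run sampling rate in \eqref{eq_constraint} equals $1/\mathbb{E}[D_{i+1}-D_i]$ (using $D_i=S_i+Y_i$). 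Invoking the discrete-time analogues of Lemma~\ref{lem_optimal_eq_opt_stopping22} and Lemma~\ref{lem_optimal_eq_opt_stopping23}, I would reduce the search over all causal policies in $\Pi$ to a single per-cycle optimal-stopping problem: choose a stopping time $S_{i+1}\geq D_i$ so as to minimize that ratio subject to $\mathbb{E}[D_{i+1}-D_i]\geq 1/f_{\max}$.

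Next I would dualize the rate constraint with a multiplier $\mu\geq0$ and, via the geometric-multiplier technique behind Theorem~\ref{thm6_strong_duality}, establish strong duality, so that the constrained value $\bar p_{\text{opt,2}}$ equals $\max_{\mu\geq0}g(\mu)$ with $g(\mu)$ the value of the unconstrained cycle problem having per-slot reward $p(\age_t)$ and an extra term $-\mu(D_{i+1}-D_i)$. For fixed $\mu$ this is still a fractional problem, so I would apply the MDP version of Dinkelbach's method: $g(\mu)=\beta'$ iff $\min_{S_{i+1}\geq D_i}\mathbb{E}\bigl[\sum_{t=D_i}^{D_{i+1}-1}(p(\age_t)-\beta')-\mu(D_{i+1}-D_i)\bigr]=0$. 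The inner optimal stopping problem is transparent: delaying sample $i{+}1$ by one slot adds a term $\mathbb{E}[p(\age_{t+Y_{i+1}})]$ to the numerator and $1$ to the cycle length, so one keeps waiting while $\mathbb{E}[p(\age_{t+Y_{i+1}})]<\beta'+\mu$ and stops otherwise, which is precisely the threshold rule \eqref{thm4_eq2}--\eqref{thm4_eq3} with combined threshold $\beta=\beta'+\mu$; at the optimal $\mu=\mu^\ast$ one gets $\beta'=\bar p_{\text{opt,2}}$, i.e.\ the threshold equals $\bar p_{\text{opt,2}}$ plus the optimal multiplier, as claimed. Complementary slackness then yields the dichotomy in the theorem: if the unconstrained threshold policy \eqref{thm3_eq12}--\eqref{thm3_eq22} of Theorem~\ref{thm3} (the case $\mu^\ast=0$) already satisfies \eqref{thm4_eq0}, it is feasible and optimal; otherwise $\mu^\ast>0$, the constraint is tight, and $\beta$ is pinned down by \eqref{thm4_eq4}.

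The remaining, and I expect hardest, step is handling the tight case and, in particular, the randomization. Because the clock is integer-valued, the maps $\beta\mapsto\mathbb{E}[T_{i,\min}(\beta)-S_i(\beta)]$ and $\beta\mapsto\mathbb{E}[T_{i,\max}(\beta)-S_i(\beta)]$ are step functions, so in general no single deterministic threshold can make the expected inter-sample time exactly $1/f_{\max}$ --- even when $\mathbb{E}[p(t+Y_{i+1})]$ is strictly increasing, $T_{i,\min}(\beta)$ and $T_{i,\max}(\beta)$ typically differ by one slot. I would first show that \eqref{thm4_eq4} admits a (unique) solution $\beta$ by the discrete analogue of the monotonicity and one-sided-continuity argument of Appendix~\ref{app_thm_strong_duality}, sandwiching $1/f_{\max}$ at a jump of the step function; then observe that both $T_{i,\min}(\beta)$ and $T_{i,\max}(\beta)$ attain the minimum in the parametric inner problem for this $\beta$ and the corresponding $\mu^\ast$, so any randomization between them is dual-tight; and finally choose $\lambda$ by \eqref{thm4_eq5} so that $\mathbb{E}[S_{i+1}(\beta)-S_i(\beta)]=1/f_{\max}$, which makes the mixture primal-feasible and, together with complementary slackness, primal-optimal. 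A secondary technical burden is carrying out the renewal-reward and optimal-stopping reductions in discrete time without the ``next sample within a bounded time'' assumption of \cite{AgeOfInfo2016}; the regenerative-process definition of $\Pi$ together with a truncation and uniform-integrability argument on the cycle length should suffice.
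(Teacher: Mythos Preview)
Your plan is essentially the paper's own proof: reduce to waiting-time variables via Lemma~\ref{lem_zeroqueue} and renewal-reward, linearize the ratio with Dinkelbach (Lemma~\ref{lem_ratio_to_minus}), dualize the rate constraint and establish strong duality by the geometric-multiplier argument of Theorem~\ref{thm6_strong_duality}, and solve the per-sample integer problem via the marginal-difference computation of Lemma~\ref{lem_optimal_eq_opt_stopping23} in place of Lemma~\ref{lem_optimal_eq_opt_stopping22}. Your treatment of the randomized threshold and the reason it cannot be avoided in discrete time also matches the paper.

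One point to tighten: the paper applies Dinkelbach \emph{before} dualizing, not after. That is, it first replaces $\min A/B$ subject to $B\ge 1/f_{\max}$ by the parametric problem $h(c)=\min(A-cB)$ subject to $B\ge 1/f_{\max}$ and seeks $c=\bar p_{\text{opt,2}}$ with $h(c)=0$; only then does it form the Lagrangian $A-(c+\alpha)B+\alpha/f_{\max}$, so the inner threshold is $\beta=\bar p_{\text{opt,2}}+\alpha$. Your description reverses the order, dualizing the fractional objective first and then invoking Dinkelbach; taken literally, the resulting dual function $\inf_\pi\bigl[A/B-\mu(B-1/f_{\max})\bigr]$ is not a ratio, so ``for fixed $\mu$ this is still a fractional problem'' is not quite right, and the identity $\bar p_{\text{opt,2}}=\max_{\mu\ge0}g(\mu)$ you state does not correspond to the paper's $0=h(\bar p_{\text{opt,2}})=\max_{\alpha\ge0}g(\alpha)$. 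You do land on the correct inner problem $\min\bigl[A-(\beta'+\mu)B\bigr]=0$, so the endpoint is fine, but when writing it up follow the paper's order to keep the duality argument clean.
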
 

Theorem \ref{thm4} is similar to Theorem \ref{thm2}, but there are two differences: \emph{(i)} $T_{i,\min}(\beta)$ and $T_{i,\max}(\beta)$ are real numbers in \eqref{thm2_eq2}-\eqref{thm2_eq3}, which are restricted to  integers in \eqref{thm4_eq2}-\eqref{thm4_eq3}. \emph{(ii)} If $\mathbb{E} [p(t+Y_{i+1})] $ is strictly increasing in $t$, then $T_{i,\min}(\beta) = T_{i,\max}(\beta)$ holds almost surely in \eqref{thm2_eq2}-\eqref{thm2_eq3} and Theorem \ref{thm2} can be greatly simplified. However, in the discrete-time case, even if $\mathbb{E} [p(t+Y_{i+1})] $ is strictly increasing in $t$, $T_{i,\min}(\beta)<T_{i,\max}(\beta)$ may still occur in \eqref{thm4_eq2}-\eqref{thm4_eq3}. In fact, it is rather common that $T_{i,\min}(\beta)<T_{i,\max}(\beta)$ holds for the optimal $\beta$, because of the following  reason:  

If $T_{i,\min}(\beta) = T_{i,\max}(\beta)$ almost surely, then \eqref{thm4_eq1} becomes  a deterministic threshold policy that needs to ensure \eqref{eq_equality_constraint}. 
 However, because $S_{i+1}(\beta)$ and $S_{i}(\beta)$ are integers, such a deterministic threshold policy is difficult to satisfy \eqref{eq_equality_constraint} for certain values of $f_{\max}$. On the other hand, if $T_{i,\min}(\beta)< T_{i,\max}(\beta)$, the randomized threshold policy in \eqref{thm4_eq1}-\eqref{thm4_eq5} can satisfy \eqref{eq_equality_constraint}. Hence, even though $\mathbb{E} [p(t+Y_{i+1})] $ is strictly increasing in $t$, Theorem \ref{thm4} cannot be further simplified. This is a key difference between continuous-time and discrete-time sampling. 

The computation algorithms of the optimal discrete-time sampling policies are similar to their counterparts in the continuous-time case, and hence are omitted. 

\subsection{An Example: Mutual Information Maximization}
Next, we provide an example to illustrate the above theoretical results. Suppose that $X_t$ is a stationary, time-homogeneous Markov chain and the sampling times $S_i$ are independent of   $\{X_t,t\geq 0\}$. The optimal sampling problem that maximizes the time-average expected  mutual information between $X_t$ and $\bm{W}_t$ is formulated as
\vspace{-0mm}
\begin{equation}\label{eq_problem1}
\bar I_{\text{opt}}=\sup_{\pi\in\Pi}~\liminf_{n\rightarrow \infty}\frac{1}{n}~\mathbb{E}\left[\sum_{t=1}^nI(X_t; \bm{W}_t)\right],
\vspace{-0mm}
\end{equation}
where $\bar I_{\text{opt}}$ is the optimal value of \eqref{eq_problem1}. We assume that $\bar I_{\text{opt}}$ is finite. Problem \eqref{eq_problem1} is a special case of \eqref{eq_problem} satisfying  $p(\age_t) = -u(\age_t)= -I(X_t; \bm{W}_t)$ and $f_{\max} = \infty$. The following result follows immediately from Theorem \ref{thm3}.  

\begin{corollary}\label{coro5}
If the service times $Y_i$ are {i.i.d.} with $0<\mathbb{E}[Y_i]<\infty$, then $(S_1(\beta),S_2(\beta),\ldots)$ is an optimal solution to \eqref{eq_problem}, where
\begin{align}\label{coro5_eq1}
S_{i+1}(\beta)
& \!=\! \min\{t\!\in\! \mathbb{N}\!:\! t \geq D_i(\beta), \nonumber\\ 
&~~~~~~~~~~~~~~~~~~~I(X_{t+Y_{i+1}}; X_{S_i(\beta)}| Y_{i+1})\!\leq\! \beta \},\!\!\!
\end{align}
$D_i(\beta) = S_i(\beta) +Y_i$,  and $\beta\geq 0$ is the root  of 
\emph{
\begin{align}\label{coro5_eq2}
\beta=\frac{\mathbb{E}\bigg[\sum\limits_{t=D_i(\beta)}^{D_{i+1}(\beta)-1} I(X_t;X_{S_i(\beta)})\bigg]}{\mathbb{E}[D_{i+1}(\beta)-D_i(\beta)]}.
\end{align}}
\!\!Further, $\beta$ is exactly the optimal value of \eqref{eq_problem1}, i.e., \emph{$\beta=\bar I_{\text{opt}}$}.
\end{corollary}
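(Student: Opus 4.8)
The plan is to obtain Corollary~\ref{coro5} as a direct specialization of Theorem~\ref{thm3}. First I would observe that \eqref{eq_problem1} is exactly the discrete-time problem \eqref{eq_problem} with $f_{\max}=\infty$ and age penalty $p(\age_t)=-u(\age_t)=-I(X_t;\bm W_t)$, so that the supremum in \eqref{eq_problem1} equals $-\bar p_{\text{opt,2}}$, i.e.\ $\bar I_{\text{opt}}=-\bar p_{\text{opt,2}}$. To apply Theorem~\ref{thm3} I must verify its hypotheses for this $p$: since $X_t$ is a stationary, time-homogeneous Markov chain and the $S_i$ are independent of $\{X_t,t\ge0\}$, Lemma~\ref{lem1} gives $I(X_t;\bm W_t)=I(X_t;X_{t-\age_t})$, and by stationarity/time-homogeneity this is a well-defined function $g(\age):=I(X_{\age};X_0)$ that is non-negative and non-increasing; hence $p(\age)=-g(\age)$ is non-decreasing, as required. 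The remaining conditions --- $\mathbb{E}[p(t+Y_i)]<\infty$ for finite $t$ and $0<\mathbb{E}[Y_i]<\infty$ --- are part of the standing model assumptions, together with the assumed finiteness of $\bar I_{\text{opt}}$.

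Next I would translate the threshold rule \eqref{thm3_eq12}. When $S_{i+1}=t$, the age at the delivery instant of sample $i+1$ is $\age_{t+Y_{i+1}}=t+Y_{i+1}-S_i(\beta)$, so by stationarity $p(\age_{t+Y_{i+1}})=-g(t+Y_{i+1}-S_i(\beta))=-I(X_{t+Y_{i+1}};X_{S_i(\beta)})$. Since $Y_{i+1}$ is independent of the history that determines $S_i(\beta)$, taking the conditional expectation over $Y_{i+1}$ gives $\mathbb{E}[p(\age_{t+Y_{i+1}})]=-I(X_{t+Y_{i+1}};X_{S_i(\beta)}\mid Y_{i+1})$, where the conditional mutual information is read as the average over $Y_{i+1}$ with $S_i(\beta)$ frozen at its (history-measurable) value. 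Substituting the relabeling $\beta\mapsto-\beta$ turns the stopping condition ``$\mathbb{E}[p(\age_{t+Y_{i+1}})]\ge\beta$'' into ``$I(X_{t+Y_{i+1}};X_{S_i(\beta)}\mid Y_{i+1})\le\beta$'', which is precisely \eqref{coro5_eq1}.

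Finally I would translate the fixed-point equation \eqref{thm3_eq22}. On the interval $[D_i(\beta),D_{i+1}(\beta))$ the freshest delivered sample is sample $i$, so $\age_t=t-S_i(\beta)$ and $p(\age_t)=-I(X_t;X_{S_i(\beta)})$; negating both sides of \eqref{thm3_eq22} and applying the same $\beta\mapsto-\beta$ relabeling yields \eqref{coro5_eq2}. The identity $\beta=\bar p_{\text{opt,2}}$ from Theorem~\ref{thm3} then becomes $\beta=-\bar p_{\text{opt,2}}=\bar I_{\text{opt}}$, and $\beta\ge0$ is immediate because $I(X_t;X_{S_i(\beta)})\ge0$ makes the right-hand side of \eqref{coro5_eq2} non-negative.

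The substance of the argument is purely bookkeeping; the only delicate points are the sign flip relating the minimization parameter of Theorem~\ref{thm3} to $\bar I_{\text{opt}}$, and the reading of $I(\cdot\,;\cdot\mid Y_{i+1})$ as the $Y_{i+1}$-average of the mutual information between $X_{t+Y_{i+1}}$ and $X_{S_i(\beta)}$ with $S_i(\beta)$ held fixed. One should also state explicitly that the ``peak-age'' quantity $\age_{t+Y_{i+1}}$ appearing in Theorem~\ref{thm3} is the age at the delivery time $D_{i+1}$ --- not the age after the drop --- since that is what makes $p(\age_{t+Y_{i+1}})$ coincide with $-I(X_{t+Y_{i+1}};X_{S_i(\beta)})$, and that the independence of $\{S_i\}$ from $\{X_t\}$ (assumed in the model and used in Lemma~\ref{lem1}) is what legitimizes rewriting the expectation in \eqref{thm3_eq12} as a conditional mutual information.
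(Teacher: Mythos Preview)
Your proposal is correct and follows exactly the approach of the paper, which simply states that the result ``follows immediately from Theorem~\ref{thm3}'' after noting that \eqref{eq_problem1} is the special case of \eqref{eq_problem} with $p(\age_t)=-I(X_t;\bm W_t)$ and $f_{\max}=\infty$. You have merely made explicit the sign-flip bookkeeping and the reading of $\mathbb{E}[p(\age_{t+Y_{i+1}})]$ as a conditional mutual information, both of which the paper leaves to the reader.
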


{In Corollary \ref{coro5}, the next sampling time $S_{i+1}(\beta)$ is determined based on the mutual information between the freshest received sample ${X}_{S_i(\beta)}$ and the source value ${X}_{D_{i+1}(\beta)}$, where $D_{i+1}(\beta)=S_{i+1}(\beta)+Y_{i+1}$ is the delivery time of the $(i+1)$-th sample. Because $Y_{i+1}$ will be known by both the transmitter and receiver at time $D_{i+1}(\beta)=S_{i+1}(\beta)+Y_{i+1}$, $Y_{i+1}$ is the side information in the conditional mutual information $I[X_{t+Y_{i+1}}; {X}_{S_i(\beta)} | Y_{i+1} ]$. The conditional mutual information $I[X_{t+Y_{i+1}}; {X}_{S_i(\beta)} | Y_{i+1} ]$ decreases as time $t$ grows. According to \eqref{coro5_eq1}, the $(i+1)$-th sample is generated at the smallest integer $t$ satisfying two conditions: \emph{(i)} the $i$-th sample has already been delivered by time $t$ and \emph{(ii)} the conditional mutual information $I[X_{t+Y_{i+1}}; {X}_{S_i(\beta)} | Y_{i+1} ]$ has reduced to be no greater than $\bar I_{\text{opt}}$, i.e., the optimum of the time-average expected mutual information $\liminf_{n\rightarrow \infty}\frac{1}{n}~\mathbb{E}\left[\sum_{t=1}^nI(X_t; \bm{W}_t)\right]$ that we are maximizing.} 

The optimal sampling policy  is illustrated in Fig. \ref{fig_policy}, where $\beta = 0.1$ and $Y_i$ is equal to either $1$ or $5$  with equal probability. The sampling time $S_{i}(\beta)$, delivery time $D_{i}(\beta)$, and  conditional mutual information $I[X_{t+Y_{i+1}}; {X}_{S_i(\beta)} | Y_{i+1} ]$ are depicted in the figure. One can observe that if the service time of the previous sample is $Y_i = 1$, the sampler will wait until the conditional mutual information drops below the threshold $\beta$ and then take the next sample; if the service time of the previous sample is $Y_i =5$,  the next sample is taken once the previous sample is delivered, because the conditional mutual information is already below $\beta$ then.

\begin{figure}
\centering
\includegraphics[width=0.35\textwidth]{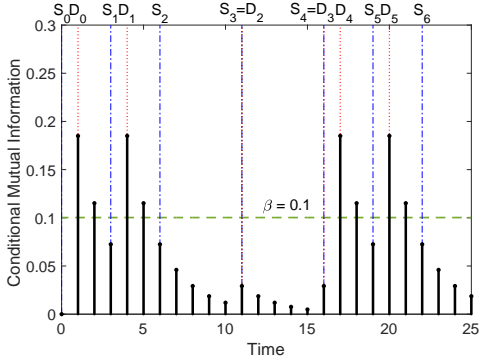}   
\caption{A sample-path illustration of the optimal sampling policy \eqref{coro5_eq1} and \eqref{coro5_eq2}, where $\beta = 0.1$, $Y_i$ is either $1$ or $5$ with equal probability, $S_i$ and $D_i$ are sampling time and delivery time of the $i$-th sample. 
On this sample-path, the service times are $Y_0=1, Y_1 = 1, Y_2 = 5, Y_3 = 5, Y_4 = 1, Y_5 = 1, Y_6 = 5$.  }
\label{fig_policy}
\end{figure}    

\subsection{Alternative Expressions of the Threshold Sampling Policy}
Finally, we  present two alternative expressions of the sampling policy \eqref{thm1_eq12}. 
Define 
\begin{align}
w(\beta) = \!\inf\{ \age \geq 0 : \mathbb{E}[ p( \age + Y_{i+1})]\! \geq \beta \},
\end{align}
then \eqref{thm1_eq12} can be rewritten as 
\begin{align}\label{eq_threshold_age}
S_{i+1}(\beta)\! &=\!\inf\{ t\geq  D_i(\beta):  \age_{t} \geq w(\beta) \},
\end{align}
which is a threshold policy on the age $\age_{t}$. Threshold policies similar to \eqref{eq_threshold_age} were 
discussed in age minimization for status update systems with an energy harvesting constraint, e.g., \cite{Bacinoglu2015,Bacinoglu2017,Wu2018,ArafaArXiv2018,BacinogluISIT2018}. The technical tools therein are significantly different from ours, because of the energy harvesting constraint. 
Further, from \eqref{eq_age2} and \eqref{eq_threshold_age}, we get 
\begin{align}\label{eq_threshold_age1}
S_{i+1}(\beta)\! &=\!\inf\{ t\geq  D_i(\beta):  \age_{t} \geq w(\beta) \},\!\! \nonumber \\
&=\!\inf\{ t\geq  D_i(\beta):  t  \geq w(\beta) + D_i(\beta) - Y_i \},\!\! \nonumber\\
& = D_i(\beta) + \max\{w(\beta) - Y_i, 0\}. 
\end{align}
We use $Z_i(\beta) = S_{i+1}(\beta)-D_i(\beta)\geq 0$ to denote the waiting time from the delivery time of the $i$-th sample to the generation time of the $(i+1)$-th sample. 
By \eqref{eq_threshold_age1}, $Z_i(\beta)$ can be expressed as a simple water-filling solution, i.e.,
\begin{align}
Z_i(\beta) = \max\{w(\beta) - Y_i, 0\}, 
\end{align}
where $w(\beta)$ is the water level. Hence, the  waiting time $Z_i(\beta)$ decreases linearly with the service time $Y_i$, until $Z_i(\beta)$ drops to zero. 
The water-filling solution was shown to be age-optimal for a special case that  $p(\age_t) = \age_t$  \cite{2015ISITYates,AgeOfInfo2016}. Recently, it was observed via simulations that the water-filling solution comes very close to the optimal age performance in symmetric multi-source networks \cite{AhmedMobiHoc2019}.

\section{Proofs of the Main Results}\label{sec_analysis}
In this section, we prove the main results in Section \ref{sec_main_results}, by using  the technical tools recently developed  in \cite{SunISIT2017,SunTIT2018}, as well as some additional proof ideas that are needed for showing Lemma \ref{lem_optimal_eq_opt_stopping22}, Theorem \ref{thm6_strong_duality}, and Lemma \ref{lem_optimal_eq_opt_stopping23} below.

We begin with the proof of Theorem \ref{thm2}, because its proof procedure is helpful for presenting and understanding the other proofs.




%

\subsection{Suspend Sampling when the Server is Busy}\label{sec_proof_1}
In \cite{AgeOfInfo2016}, it was shown that \emph{no new sample should be taken when the server is busy}. The reason is as follows: If a sample is taken when the server is busy, it has to wait in the queue for its transmission opportunity, during which time the sample is becoming stale. A better strategy is to take a new sample just when the server becomes idle, which yields a smaller age process on sample path. This comparison leads to the following lemma: 
\begin{lemma}\label{lem_zeroqueue}
 In the optimal sampling problem \eqref{eq_problem2}, 
it is suboptimal to take a new sample before the previous sample is delivered.
\end{lemma}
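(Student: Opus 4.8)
The plan is to prove Lemma~\ref{lem_zeroqueue} by an exchange (interchange) argument on sample paths: given any causal policy $\pi\in\Pi$ that takes some sample while the server is still busy, we construct a modified policy $\pi'$ that delays the generation of that sample until the server becomes idle, and show that $\pi'$ is still causal, still feasible, and yields an age process that is pathwise no larger than that of $\pi$ at every time $t$. Since $p(\cdot)$ is non-decreasing, a pointwise smaller age process gives a pointwise smaller penalty $p(\age_t)$, hence a smaller (or equal) value of the objective $\limsup_{T\to\infty}\frac1T\,\mathbb E[\int_0^T p(\age_t)\,dt]$, which proves that it is suboptimal to sample during a busy period.

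First I would set up notation for a policy $\pi$ that violates the statement: let $j$ be the index of a sample with $S_j < D_{j-1}$ (generated before the previous delivery). Define $\pi'$ to be identical to $\pi$ except that $S_j' = D_{j-1}$, i.e., the $j$-th sample is regenerated exactly when the server becomes free; all later sampling times can be kept the same (or advanced, which only helps), and the service times $Y_i$ are unchanged by assumption. The key observations are: (i) causality is preserved, because $D_{j-1}$ is an instant at which the idle/busy state is observable via the ACK feedback, so $S_j'$ is a stopping time with respect to the same filtration; (ii) the regenerative-process structure required in $\Pi$ is preserved; (iii) under FIFO with unchanged service times, the delivery time of the $j$-th sample is unchanged, $D_j' = D_j$ — indeed under $\pi$ the $j$-th sample could not start service before $D_{j-1}$ anyway, so its service started at exactly $D_{j-1} = S_j'$ under both policies, and all $D_i' = D_i$ for every $i$.

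Next I would compare the two age processes. By \eqref{eq_age2}, for $D_i \le t < D_{i+1}$ we have $\age_t = t - S_i$. Since $D_i' = D_i$ for all $i$, the ``switch points'' of the age sawtooth are identical under $\pi$ and $\pi'$; on the interval $[D_i, D_{i+1})$ the age under $\pi$ is $t - S_i$ and under $\pi'$ is $t - S_i'$. Because $S_j' = D_{j-1} \ge S_j$ and $S_i' = S_i$ for $i\neq j$, we get $S_i' \ge S_i$ for all $i$, hence $\age_t' \le \age_t$ for all $t\ge 0$ on every sample path. Monotonicity of $p$ then gives $p(\age_t') \le p(\age_t)$ pointwise, so integrating and taking expectations and $\limsup$ yields that $\pi'$ attains an objective value no larger than that of $\pi$. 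Finally, feasibility of the rate constraint \eqref{eq_constraint2} is untouched since the $S_n$ are only increased at index $j$ and this does not decrease $\liminf_n \mathbb E[S_n]/n$; even in the unconstrained case $f_{\max}=\infty$ there is nothing to check. Iterating (or applying the argument simultaneously to every busy-period violation) shows there is always an optimizing policy that never samples during a busy period, which is the claim.

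The main obstacle I anticipate is the bookkeeping around what happens to samples $j+1, j+2, \ldots$: under $\pi$ these later samples might have been generated during the (now shifted) busy/idle pattern, and one must argue carefully that keeping $S_{k}' = S_k$ for $k>j$ — or, if that would violate $S_k' \ge D_{k-1}'$ or causality, pushing them to $\max\{S_k, D_{k-1}'\}$ — still leaves $D_k' = D_k$ and $S_k' \ge S_k$, so the pathwise age comparison propagates. Since $D_{j-1}' = D_{j-1}$ and the service discipline and service times are unchanged, this should go through cleanly, but it requires a short induction on $i$ showing $D_i' = D_i$ and $S_i' \ge S_i$ jointly. A secondary technical point is checking that the modified inter-sampling sequence $\{T_i'\}$ still satisfies the regenerative-process conditions defining $\Pi$; this is straightforward since we only perturb finitely many (or, in the general argument, a controlled set of) inter-sampling times within each regeneration cycle.
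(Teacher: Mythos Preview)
Your proposal is correct and follows essentially the same sample-path exchange argument the paper sketches (and attributes to \cite{AgeOfInfo2016}): postpone any sample generated while the server is busy to the instant the server becomes idle, note that delivery times are unchanged while sampling times can only increase, and conclude that the age process is pathwise dominated. The paper only gives the one-paragraph intuition; your write-up supplies the bookkeeping (the joint induction establishing $D_i'=D_i$ and $S_i'\ge S_i$, and the regenerative-structure check) that the paper omits.
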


By Lemma \ref{lem_zeroqueue}, the queue in Figure \ref{fig_model} should be always kept empty. In addition, we only need to consider a sub-class of sampling policies $\Pi_1\subset\Pi$ in which each sample is generated after the previous sample is delivered, i.e.,
\begin{align}
\Pi_1 
 &= \{\pi\in\Pi: S_{i+1} \geq D_{i} = S_i + Y_i \text{ for all $i$}\}. 
\end{align}
Let $Z_i = S_{i+1} -D_{i} \geq0$ represent the waiting time between the delivery time $D_i$ of the $i$-th sample and the generation time $S_{i+1}$ of the $(i+1)$-th sample. Since $S_0 = 0$, we have
$S_i =  S_{0} +\sum_{j=0}^{i} (Y_{j} + Z_j)=\sum_{j=0}^{i} (Y_{j} + Z_j)$ and $D_i = S_i + Y_i$. Given $(Y_1,Y_2,\ldots)$, $(S_1,S_2,\ldots)$ is uniquely determined by $(Z_1,Z_2,\ldots)$. Hence, one can also use $\pi = (Z_1,Z_2,\ldots)$ to represent a sampling policy in $\Pi_1$.

Because $T_i $ is a regenerative process, 
by using the renewal theory in \cite{Ross1996} and \cite[Section 6.1]{Haas2002}, one can show that $\frac{1}{i} 
\mathbb{E}[S_i]$ and $\frac{1}{i} 
\mathbb{E}[D_i]$ in  \eqref{eq_problem2} are convergent sequences and 
\begin{align}
&\limsup_{T\rightarrow \infty}\frac{1}{T}~\mathbb{E}\left[\int_{0}^T p(\age_t) dt\right] \nonumber
\\
=& \lim_{i\rightarrow \infty}\frac{\mathbb{E}\left[\int_{0}^{D_i} p(\age_t) dt\right]}{\mathbb{E}[D_i]} \nonumber\\
=& \lim_{i\rightarrow \infty}\frac{\sum_{j=1}^{i}\mathbb{E}\left[\int_{D_{j}}^{D_{j+1}-1}p(\age_t) dt\right]}{\sum_{j=1}^{i} \mathbb{E}\left[Y_j+Z_j\right]}.
\end{align}
On the other hand, it follows from \eqref{eq_age2} that 
\begin{align}\label{eq_sum1}
\int_{D_{j}}^{D_{j+1}}p(\age_t) dt&=\!\!\int_{D_{j}}^{D_{j+1}}p(t-S_i) dt\nonumber\\
&=\!\! \int_{Y_{i}}^{Y_i+Z_i+Y_{i+1}}p(t) dt, 
\end{align}
which is a function of $(Y_i,Z_i,Y_{i+1})$. Define 
\begin{align}\label{eq_sum}
q(y_i,z,y')= \int_{y}^{y+z+y'}p(t) dt, 
\end{align}
then Problem \eqref{eq_problem2} can be rewritten as 
\begin{align}\label{eq_problem_S11}
\bar p_{\text{opt}_1} = \inf_{\pi\in\Pi_1}~&\lim_{i\rightarrow \infty}\frac{\sum_{j=1}^{i}\mathbb{E}\left[q(Y_j,Z_j,Y_{j+1})\right]}{\sum_{j=1}^{i} \mathbb{E}\left[Y_j+Z_j\right]}
\\
\text{s.t.}~~& \lim_{i\rightarrow \infty} \frac{1}{i}\sum_{j=1}^{i} \mathbb{E}\left[Y_j+Z_j\right] \geq \frac{1}{f_{\max}}.\label{eq_problem_S111}
\end{align}
\subsection{Reformulation of Problem \eqref{eq_problem_S11}}

In order to solve \eqref{eq_problem_S11}, we consider the following MDP with a parameter $c\geq 0$:
\begin{align}\label{eq_SD1}
\!\!h(c)\!\triangleq\!\inf_{\pi\in\Pi_1}&\lim_{i\rightarrow \infty}\frac{1}{i}\sum_{j=0}^{i-1}\mathbb{E}\left[q(Y_j,Z_j,Y_{j+1})-c(Y_j+Z_j)\right]\!\!\!\!\\
\text{s.t.}~&\lim_{i\rightarrow \infty} \frac{1}{i}\sum_{j=1}^{i} \mathbb{E}\left[Y_j+Z_j\right] \geq \frac{1}{f_{\max}},\label{eq_SD2}
\end{align}
where $h(c)$ is the optimum  value of \eqref{eq_SD1}. 
Similar with Dinkelbach's method  \cite{Dinkelbach67} for nonlinear fractional programming, the following lemma holds for the MDP \eqref{eq_problem_S11}:

\begin{lemma} \label{lem_ratio_to_minus} \cite[Lemma 2]{SunTIT2018}
The following assertions are true:
\begin{itemize}
\item[(a).] \emph{$\bar p_{\text{opt}_1}  \gtreqqless c $} if and only if $h(c)\gtreqqless 0$. 
\item[(b).] If $h(c)=0$, the solutions to \eqref{eq_problem_S11}
and \eqref{eq_SD1} are identical. 
\end{itemize}
\end{lemma}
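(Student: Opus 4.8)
\textbf{Proof proposal for Lemma \ref{lem_ratio_to_minus}.}

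The plan is to treat this as a parametric optimization argument in the spirit of Dinkelbach's method for fractional programming, adapted to the constrained MDP setting. Since the statement is quoted verbatim from \cite[Lemma 2]{SunTIT2018}, the cleanest route is to reduce it to that reference, but I will sketch a self-contained argument in case the feasible regions are not literally identical. The core observation is that both \eqref{eq_problem_S11} and \eqref{eq_SD1} are optimized over the \emph{same} policy class $\Pi_1$ subject to the \emph{same} rate constraint \eqref{eq_problem_S111}=\eqref{eq_SD2}. Write, for a feasible $\pi$,
\begin{align}
Q(\pi) &= \lim_{i\to\infty}\frac{1}{i}\sum_{j=0}^{i-1}\mathbb{E}[q(Y_j,Z_j,Y_{j+1})], \nonumber\\
R(\pi) &= \lim_{i\to\infty}\frac{1}{i}\sum_{j=0}^{i-1}\mathbb{E}[Y_j+Z_j], \nonumber
\end{align}
both of which exist and are finite because $T_i$ is regenerative (renewal-reward) and $0<\mathbb{E}[Y_j+Z_j]<\infty$; note $R(\pi)\geq 1/f_{\max}>0$ for feasible $\pi$. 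Then the objective of \eqref{eq_problem_S11} is $Q(\pi)/R(\pi)$ and the objective of \eqref{eq_SD1} is $Q(\pi)-cR(\pi)$.

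First I would prove part (a). For any feasible $\pi$ and any $c$,
\begin{align}
\frac{Q(\pi)}{R(\pi)} \geq c \iff Q(\pi) - c R(\pi) \geq 0, \nonumber
\end{align}
since $R(\pi)>0$, with the analogous equivalences for $>$ and $=$. Taking the infimum over feasible $\pi$: if $\bar p_{\text{opt}_1}\geq c$ then $Q(\pi)/R(\pi)\geq c$ for every feasible $\pi$, hence $Q(\pi)-cR(\pi)\geq 0$ for every feasible $\pi$, hence $h(c)\geq 0$; conversely if $h(c)\geq 0$ then $Q(\pi)-cR(\pi)\geq 0$, so $Q(\pi)/R(\pi)\geq c$ for every feasible $\pi$, whence $\bar p_{\text{opt}_1}\geq c$. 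The strict direction $\bar p_{\text{opt}_1}>c \iff h(c)>0$ needs a little more care because an infimum of strictly positive quantities can be zero; the standard fix is to argue through the non-strict versions: $\bar p_{\text{opt}_1}>c$ iff ($\bar p_{\text{opt}_1}\geq c$ and $\bar p_{\text{opt}_1}\neq c$), and then use that $h$ is nonincreasing in $c$ (immediate from the formula, since increasing $c$ only decreases each $Q(\pi)-cR(\pi)$) together with the equality case below. The equality case $\bar p_{\text{opt}_1}=c \iff h(c)=0$ then follows by combining the two non-strict inequalities $\bar p_{\text{opt}_1}\gtreqless c \iff h(c)\gtreqless 0$.

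For part (b), suppose $h(c)=0$, so by (a) $\bar p_{\text{opt}_1}=c$. Let $\pi^\star$ be optimal for \eqref{eq_problem_S11}, i.e. feasible with $Q(\pi^\star)/R(\pi^\star)=c$; then $Q(\pi^\star)-cR(\pi^\star)=0=h(c)$, so $\pi^\star$ is optimal for \eqref{eq_SD1}. Conversely, if $\pi^\star$ is optimal for \eqref{eq_SD1}, then $Q(\pi^\star)-cR(\pi^\star)=0$ with $R(\pi^\star)>0$, so $Q(\pi^\star)/R(\pi^\star)=c=\bar p_{\text{opt}_1}$, making $\pi^\star$ optimal for \eqref{eq_problem_S11}. (If one wants to be careful about whether optima are attained, one replaces ``optimal policy'' by ``minimizing sequence'' throughout and the same manipulation goes through in the limit, using $R(\pi)\geq 1/f_{\max}$ to keep ratios controlled; alternatively one simply cites \cite[Lemma 2]{SunTIT2018}.) The main obstacle I anticipate is the strict-inequality bookkeeping in part (a) — an infimum of positive numbers may vanish, so one cannot naively push strict inequalities through the $\inf$ — which is why I would route everything through the monotonicity of $h(\cdot)$ and the non-strict equivalences rather than attempting a direct strict argument; the renewal-reward interchange of $\limsup_{T}$ with the per-cycle ratio is already supplied by the displayed computation preceding \eqref{eq_problem_S11} and by the regenerative assumption on $T_i$, so that is not a new difficulty here.
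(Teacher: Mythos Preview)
The paper does not supply its own proof of this lemma; it simply imports it from \cite[Lemma~2]{SunTIT2018}. Your Dinkelbach-style argument is the standard one and is essentially what one finds in that reference, so there is nothing to compare against here beyond confirming correctness.

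One small expository wrinkle is worth tightening. From the single equivalence $\bar p_{\text{opt}_1}\geq c \Longleftrightarrow h(c)\geq 0$ (and its contrapositive $\bar p_{\text{opt}_1}< c \Longleftrightarrow h(c)< 0$) you cannot yet deduce the equality case, so the sentence ``the equality case then follows by combining the two non-strict inequalities'' is slightly circular as written: you have only established one of those two non-strict equivalences. The direct fix---whose ingredients you already mention---is to prove $\bar p_{\text{opt}_1}>c \Rightarrow h(c)>0$ in one line: for every feasible $\pi$,
\[
Q(\pi)-cR(\pi)=R(\pi)\Bigl(\tfrac{Q(\pi)}{R(\pi)}-c\Bigr)\;\geq\; R(\pi)\,(\bar p_{\text{opt}_1}-c)\;\geq\; \mathbb{E}[Y_i]\,(\bar p_{\text{opt}_1}-c)\;>\;0,
\]
using the uniform lower bound $R(\pi)\geq \mathbb{E}[Y_i]>0$ (or $R(\pi)\geq 1/f_{\max}$ when the rate constraint is active), so that $h(c)>0$. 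Trichotomy then yields all three cases of part~(a), and your argument for part~(b) is correct as stated.
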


Hence, the solution to \eqref{eq_problem_S11} can be obtained by solving \eqref{eq_SD1} and seeking $\bar p_{\text{opt}_1}\in\mathbb{R}$  that satisfies
\begin{align}\label{eq_c}
h(\bar p_{\text{opt}_1})=0. 
\end{align}
\subsection{Lagrangian Dual Problem of  \eqref{eq_SD1} when \emph{$c = \bar p_{\text{opt}_1}$} }

Although \eqref{eq_SD1} is a continuous-time MDP with a continuous state space, rather than a convex optimization problem, it is possible to use the Lagrangian dual approach to solve \eqref{eq_SD1} and show that it admits no duality gap. 

When $c = \bar p_{\text{opt}_1}$, define the following Lagrangian
\begin{align}\label{eq_Lagrangian}
 L(\pi;\alpha) \! =\! & \lim_{n\rightarrow \infty}\!\frac{1}{n}\sum_{i=0}^{n-1}\mathbb{E}\left[q(Y_j,Z_j,Y_{j+1})\!-\! (\bar p_{\text{opt}_1}\! +\! \alpha)(Y_i\!+\!Z_i)\right] \nonumber\\
&~+ \frac{\alpha}{f_{\max}},
\end{align}
where $\alpha\geq0$ is the dual variable. 
Let
\begin{align}\label{eq_primal}
g(\alpha) \triangleq \inf_{\pi\in\Pi_1}  L(\pi;\alpha).
\end{align}
Then, the Lagrangian dual problem of \eqref{eq_SD1} is defined by
\begin{align}\label{eq_dual}
d \triangleq \max_{\alpha\geq 0}g(\alpha),
\end{align}
where $d$ is the optimum value of \eqref{eq_dual}. 
Weak duality \cite{Bertsekas2003,Boyd04} implies that $d \leq h(\bar p_{\text{opt}_1})$. 
Next, we will solve  \eqref{eq_primal} and establish strong duality, i.e., $d = h(\bar p_{\text{opt}_1})$.

\subsection{Optimal Solutions to \eqref{eq_primal}}

%



We solve \eqref{eq_primal} in two steps: First, we  use a sufficient statistic argument to show that \eqref{eq_primal} can be decomposed into a series of per-sample optimization problems. Second, each per-sample optimization problem is reformulated as a convex optimization problem, which is solved in closed-form. The details are provided as follows.

\begin{lemma}\label{lem_SRoptimal} 
If the service times $Y_i$ are {i.i.d.}, then $Y_i$ is a sufficient statistic for determining the optimal $Z_i$ in \eqref{eq_primal}. 
\end{lemma}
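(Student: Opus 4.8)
The plan is to show that, when minimizing the Lagrangian $L(\pi;\alpha)$ over $\pi\in\Pi_1$, the choice of the waiting time $Z_i$ at decision epoch $i$ can be restricted to depend only on the just-observed service time $Y_i$, without loss of optimality. First I would recall that a policy $\pi\in\Pi_1$ is, by the causality constraint defining $\Pi$, a sequence of rules in which $Z_i$ is a (possibly randomized) function of the information available to the sampler just after $D_i$, namely the history $\mathcal{H}_i = (Y_0,Z_0,Y_1,Z_1,\ldots,Y_{i-1},Z_{i-1},Y_i)$; note that by the time the $i$-th sample is delivered, its service time $Y_i$ has been revealed through the ACK, so $Y_i\in\mathcal{H}_i$. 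The objective in \eqref{eq_primal} is the time-average of per-stage costs $\mathbb{E}[q(Y_j,Z_j,Y_{j+1}) - (\bar p_{\text{opt}_1}+\alpha)(Y_j+Z_j)]$, and the key structural observation is that the per-stage cost at stage $j$ depends on the policy only through the triple $(Y_j,Z_j,Y_{j+1})$.

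The main steps are as follows. First, since the $Y_i$ are i.i.d.\ and, by assumption in Section \ref{sec_model}, independent of the sampling decisions and unaffected by the policy, the future service time $Y_{j+1}$ is independent of the history $\mathcal{H}_j$ and of $Z_j$; hence $\mathbb{E}[q(Y_j,Z_j,Y_{j+1})\mid \mathcal{H}_j] = \mathbb{E}_{Y'}[q(Y_j,Z_j,Y')]$, a function of $(Y_j,Z_j)$ alone. Second, conditioning the per-stage cost on $\mathcal{H}_j$ shows that, given $\mathcal{H}_j$ (in particular given $Y_j$), the contribution of stage $j$ to the objective is minimized by choosing $Z_j$ to minimize $\mathbb{E}_{Y'}[q(Y_j,Z_j,Y')] - (\bar p_{\text{opt}_1}+\alpha)(Y_j+Z_j)$ over $Z_j\geq 0$, and this minimization involves $\mathcal{H}_j$ only through $Y_j$. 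Third — and this is where the ``time-average of a regenerative process'' structure matters — I would argue that replacing an arbitrary policy by the stationary policy that applies, at every stage, the pointwise-optimal map $y\mapsto Z^{*}(y)$ does not increase the long-run average cost: because each stage's cost depends on the policy only through $(Y_j,Z_j,Y_{j+1})$ and the $Y$'s are i.i.d., the stage-wise optimal randomized (or deterministic) kernel $Z_j\sim\kappa^{*}(\cdot\mid Y_j)$ makes the pairs $(Y_j,Z_j)$ i.i.d., the resulting $\{T_j\}$ a genuine renewal (hence regenerative) process in $\Pi_1$, and the Cesàro average of the per-stage costs converges to the common stage value, which is a lower bound for every admissible policy by the conditioning argument above. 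One must also check that the rate constraint \eqref{eq_SD2} is preserved or can be matched; but since \eqref{eq_SD2} only constrains the long-run average of $Y_j+Z_j$, and the Lagrangian has already absorbed it through $\alpha$, for the unconstrained minimization of $L(\pi;\alpha)$ in \eqref{eq_primal} this is automatic.

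The main obstacle, I expect, is the third step: rigorously passing from ``each stage-wise cost is minimized pointwise'' to ``the long-run time-average is minimized,'' within the class $\Pi_1$ of regenerative policies and with $\limsup/\lim$ rather than a genuine limit. The subtlety is that an arbitrary $\pi\in\Pi_1$ need not be stationary, so one cannot simply invoke a one-step interchange argument; instead I would use the regenerative decomposition of $\pi$ into i.i.d.\ cycles, write the long-run average as the ratio of expected per-cycle cost to expected cycle length (Renewal Reward, as already invoked before \eqref{eq_sum1}), and lower-bound the expected per-cycle cost by summing the pointwise stage minima along the cycle, using independence of future $Y$'s from the within-cycle history. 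The final point to nail down is measurability and existence of the optimal kernel $\kappa^{*}(\cdot\mid y)$ — i.e.\ that $\inf_{z\ge 0}\big(\mathbb{E}_{Y'}[q(y,z,Y')]-(\bar p_{\text{opt}_1}+\alpha)(y+z)\big)$ is attained and depends measurably on $y$ — which will follow from the structure of $q$ in \eqref{eq_sum} once the per-sample problem is recast as the convex problem promised in the next step of Section \ref{sec_analysis}.
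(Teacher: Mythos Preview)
Your proposal is correct and shares the paper's core observation: after integrating out the independent future service time $Y_{j+1}$, each per-stage cost in $L(\pi;\alpha)$ depends on the history only through $Y_j$. The paper's own proof is a two-line version of your first two steps: it simply notes that the minimization of $\mathbb{E}[q(Y_i,Z_i,Y_{i+1})-(\bar p_{\text{opt}_1}+\alpha)(Y_i+Z_i)]$ over $Z_i$ depends on $(Y_1,\ldots,Y_i,Z_1,\ldots,Z_{i-1})$ only via $Y_i$, and stops there.

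Where you diverge is your ``third step.'' You flag passing from stage-wise optimality to optimality of the Ces\`aro average as the main obstacle and propose regenerative decomposition and renewal-reward bounds. That machinery is unnecessary here: the Lagrangian \eqref{eq_primal} is already an arithmetic average of per-stage \emph{expectations} (not a ratio), there is no state coupling between stages, and by the i.i.d.\ assumption every stage has the same infimum. Hence each term is bounded below by the common stage minimum, so the Ces\`aro mean is bounded below by that constant, and the stationary kernel $Z_j=z^*(Y_j)$ attains it at every stage. The renewal-reward argument was already spent in going from \eqref{eq_problem2} to \eqref{eq_problem_S11}; it is not needed again for \eqref{eq_primal}. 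Likewise, your measurability and attainment concerns are deferred in the paper to Lemma~\ref{lem_optimal_eq_opt_stopping22}, which handles the per-sample problem \eqref{eq_opt_stopping22} directly.
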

\begin{proof}
In \eqref{eq_primal}, the minimization of the term 
\begin{align}\label{eq_opt_stopping1}
&\mathbb{E}\left[q(Y_i,Z_i,Y_{i+1})- (\bar p_{\text{opt}_1}+\alpha) (Y_i+Z_i)\right] \nonumber\\
\overset{(a)}{=}&\mathbb{E}\left[q(Y_i,Z_i,Y_{i+1})- (\bar p_{\text{opt}_1}+\alpha) (Z_i+Y_{i+1})\right]
\end{align}
over $Z_i$ depends on $(Y_1,\ldots,Y_i, Z_1,\ldots,Z_{i-1})$ via $Y_i$, where  Step (a) is due to  $\mathbb{E}[Y_i]=\mathbb{E}[Y_{i+1}]$. 
Hence, $Y_i$ is a sufficient statistic for determining $Z_i$ in \eqref{eq_primal}. 
\end{proof}
By Lemma \ref{lem_SRoptimal}, \eqref{eq_primal} can be decomposed into a series of per-sample optimization problems. 
In particular, after observing the realization $Y_i=y_i$, $Z_i$ is determined by solving
\begin{align}\label{eq_opt_stopping21}
\!\!\min_{\substack{\Pr[Z_i\in A| Y_i=y_i]\\Z_i\geq 0}}\!\mathbb{E}\left[q(y_i,Z_i,Y_{i+1})\!-\! (\bar p_{\text{opt}_1}\!+\!\alpha) (Z_i\!+\!Y_{i+1}) \right]\!,\!\!
\end{align}
where the rule for determining $Z_i$ is represented by $\Pr[Z_i\in A| Y_i=y_i]$, i.e., the conditional  distribution of $Z_i$ given the occurrence of $Y_i=y_i$.
To find all possible solutions to \eqref{eq_opt_stopping21}, let us consider the following problem 
\begin{align}\label{eq_opt_stopping22}
\!\!\min_{\substack{z\geq 0}}\mathbb{E}\left[q(y_i,z,Y_{i+1})\!-\! (\bar p_{\text{opt}_1}\!+\!\alpha) (z\!+\!Y_{i+1}) \right]\!.\!\!
\end{align}
Because $p(\cdot)$ is non-decreasing,  the functions $z \rightarrow q(y_i,z,y')$ and $z \rightarrow \mathbb{E}\left[q(y_i,z,Y_{i+1})\right]$ are both convex. Hence, \eqref{eq_opt_stopping22} is a convex optimization problem. 
\begin{lemma}\label{lem_optimal_eq_opt_stopping22} 
If $p(\cdot)$ is non-decreasing, then the set of optimal solution to \eqref{eq_opt_stopping22} is $[z_{\min}(y_i,\alpha),z_{\max}(y_i,\alpha)]$ where  
\emph{
\begin{align}\label{eq_gamma}
z_{\min}(y,\alpha) &= \!\inf\{ t\geq 0: \mathbb{E}[ p(y+t + Y_{i+1}]\! \geq\! \bar p_{\text{opt}_1}\!+\!\alpha \},\!\! \\
z_{\max} (y,\alpha) 
&=\!\inf\{ t\geq 0: \mathbb{E}[ p(y+t + Y_{i+1}]\! >\! \bar p_{\text{opt}_1}\!+\!\alpha \}.\!\! \label{eq_gamma1}
\end{align}}
\end{lemma}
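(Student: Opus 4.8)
The plan is to treat \eqref{eq_opt_stopping22} as a one-dimensional convex minimization over $z\in[0,\infty)$ and to identify its optimal set through the first-order optimality condition for convex functions, reading off the two endpoints from the monotonicity of $z\mapsto\mathbb{E}[p(y_i+z+Y_{i+1})]$.

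First I would write the objective of \eqref{eq_opt_stopping22} as
\begin{align}
G(z)\;=\;\mathbb{E}\!\left[q(y_i,z,Y_{i+1})\right]-(\bar p_{\text{opt}_1}+\alpha)\,\big(z+\mathbb{E}[Y_{i+1}]\big),
\end{align}
which is convex in $z$ by the observation made just before the lemma. The next step is to compute its one-sided derivatives. Since $q(y,z,y')=\int_y^{y+z+y'}p(t)\,dt$ with $p$ non-decreasing and real-valued, the difference quotient $\tfrac1h\big[q(y_i,z+h,Y_{i+1})-q(y_i,z,Y_{i+1})\big]$ lies between $p(y_i+z+Y_{i+1})$ and $p(y_i+z+h+Y_{i+1})$, hence between the integrable constant $p(0)$ and $p(y_i+z+1+Y_{i+1})$ for $0<h\le1$; the latter is integrable by the standing assumption $\mathbb{E}[p(\Delta+Y_i)]<\infty$ for finite $\Delta$. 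Dominated (and monotone) convergence then lets me pass the limit through the expectation and conclude that the right and left derivatives of $G$ are $G'_+(z)=\phi(z^+)-(\bar p_{\text{opt}_1}+\alpha)$ and $G'_-(z)=\phi(z^-)-(\bar p_{\text{opt}_1}+\alpha)$, where $\phi(w):=\mathbb{E}[p(y_i+w+Y_{i+1})]$ is non-decreasing and $\phi(w^\pm)$ denote its one-sided limits.

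The core step invokes the elementary characterization that $z\in[0,\infty)$ minimizes the convex function $G$ over $[0,\infty)$ if and only if $G'_-(z)\le0\le G'_+(z)$, with only $G'_+(0)\ge0$ required at the boundary point $z=0$; equivalently, $\phi(z^-)\le\bar p_{\text{opt}_1}+\alpha\le\phi(z^+)$. Because $\phi$ is monotone, the set of such $z$ is the closed interval $[a,b]$ with $a=\inf\{z\ge0:\phi(z^+)\ge\bar p_{\text{opt}_1}+\alpha\}$ and $b=\sup\{z\ge0:\phi(z^-)\le\bar p_{\text{opt}_1}+\alpha\}$. Finally, using the two identities for a non-decreasing $\phi$, namely $\inf\{t:\phi(t^+)\ge c\}=\inf\{t:\phi(t)\ge c\}$ and $\sup\{t:\phi(t^-)\le c\}=\inf\{t:\phi(t)>c\}$, I would rewrite $a=\inf\{t\ge0:\mathbb{E}[p(y_i+t+Y_{i+1})]\ge\bar p_{\text{opt}_1}+\alpha\}=z_{\min}(y_i,\alpha)$ and $b=\inf\{t\ge0:\mathbb{E}[p(y_i+t+Y_{i+1})]>\bar p_{\text{opt}_1}+\alpha\}=z_{\max}(y_i,\alpha)$, which is the claim.

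The main obstacle, and essentially the only subtlety, is the bookkeeping forced by possible discontinuities of $p$: one must carefully match the ``$\ge$'' in \eqref{eq_gamma} with the right derivative $G'_+$ and the ``$>$'' in \eqref{eq_gamma1} with the left derivative $G'_-$ (when $\phi$ jumps across the level $\bar p_{\text{opt}_1}+\alpha$ the interval collapses to a single point, $z_{\min}=z_{\max}$, since $G$ is then strictly decreasing and then strictly increasing), and one must separately verify the boundary $z=0$ (if $\phi(0^+)\ge\bar p_{\text{opt}_1}+\alpha$ the optimal set already contains $0$, consistent with $z_{\min}(y_i,\alpha)=0$). A minor additional point is nonemptiness: the infima defining $z_{\min},z_{\max}$ are finite precisely when $\phi$ eventually exceeds $\bar p_{\text{opt}_1}+\alpha$, which holds in the regime where the lemma is used, since otherwise $G$ would decrease to $-\infty$ and \eqref{eq_primal} could not take the finite value associated with $\bar p_{\text{opt}_1}$.
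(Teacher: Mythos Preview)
Your proposal is correct and follows essentially the same route as the paper's proof: both treat \eqref{eq_opt_stopping22} as a one-dimensional convex problem, compute the one-sided derivatives of the objective by passing the limit through the expectation, and then invoke the first-order optimality condition $G'_-(z)\le 0\le G'_+(z)$ (with the boundary case $z=0$) to read off the optimal interval from the monotone map $t\mapsto\mathbb{E}[p(y_i+t+Y_{i+1})]$. The only cosmetic differences are that the paper phrases the derivative condition via directional derivatives $\delta h(z;w)$ and cites \cite[p.~710]{Bertsekas}, and justifies the limit--expectation interchange by monotone convergence, whereas you use dominated convergence with the standing integrability hypothesis; your endpoint bookkeeping (matching ``$\ge$'' with $G'_+$ and ``$>$'' with $G'_-$) is somewhat more explicit than the paper's.
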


\begin{proof}
See Appendix \ref{app_lem_optimal_eq_opt_stopping22}.
\end{proof}
By Lemma \ref{lem_optimal_eq_opt_stopping22}, $z$ is an optimal solution to \eqref{eq_opt_stopping22} if and only if 
$z \in [z_{\min}(y_i,\alpha),z_{\max}(y_i,\alpha)]$. Hence, 
 the set of optimal solutions to \eqref{eq_opt_stopping21} is
\begin{align}
\{\Pr[Z_i\!\in\! A| Y_i\!=\!y_i]\!:\! Z_i \in [z_{\min}(y_i,\alpha),z_{\max}(y_i,\alpha)] ~~\nonumber\\
~~~\text{ almost surely}\}.\nonumber
\end{align}
Combining this with Lemma \ref{lem_SRoptimal}, yields  
\begin{lemma}\label{lem_gamma}
If $p(\cdot)$ is non-decreasing and the service times $Y_i$ are {i.i.d.} with $0<\mathbb{E}[Y_i]<\infty$, then the set of optimal solutions to \eqref{eq_primal} is 
\begin{align}\label{eq_lem_gamma}
\Gamma(\alpha) = \{\pi: Z_i \in [z_{\min}(Y_i,\alpha),z_{\max}(Y_i,\alpha)] \text{ for almost all }i\},
\end{align}
where $z_{\min}(y,\alpha)$ and $z_{\max}(y,\alpha)$ are given in \eqref{eq_gamma}-\eqref{eq_gamma1}.
\end{lemma}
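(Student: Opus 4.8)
The plan is to obtain Lemma \ref{lem_gamma} by stitching together the two preceding lemmas, with only a short admissibility check at the end. First I would invoke Lemma \ref{lem_SRoptimal}. Since the $Y_i$ are i.i.d.\ (so $\mathbb{E}[Y_i]=\mathbb{E}[Y_{i+1}]$) and the Lagrangian $L(\pi;\alpha)$ in \eqref{eq_Lagrangian} is a Cesàro average of the per-stage costs $\mathbb{E}[q(Y_i,Z_i,Y_{i+1}) - (\bar p_{\text{opt}_1}+\alpha)(Z_i+Y_{i+1})]$ plus the constant $\alpha/f_{\max}$, the sufficient-statistic argument of Lemma \ref{lem_SRoptimal} shows that minimizing $L(\pi;\alpha)$ over $\pi\in\Pi_1$ decouples into the independent per-sample problems \eqref{eq_opt_stopping21}, one for each realization $Y_i=y_i$, whose decision variable is the conditional law $\Pr[Z_i\in A\mid Y_i=y_i]$.

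Next I would pass from the randomized per-sample problem \eqref{eq_opt_stopping21} to its deterministic relaxation \eqref{eq_opt_stopping22}. The objective of \eqref{eq_opt_stopping21} is the expectation over the conditional law of $Z_i$ of the deterministic function $F(z):=\mathbb{E}[q(y_i,z,Y_{i+1}) - (\bar p_{\text{opt}_1}+\alpha)(z+Y_{i+1})]$ appearing in \eqref{eq_opt_stopping22}; hence a conditional law attains the infimum if and only if it is supported almost surely on the minimizer set of $F$. By Lemma \ref{lem_optimal_eq_opt_stopping22} that set equals $[z_{\min}(y_i,\alpha), z_{\max}(y_i,\alpha)]$ with $z_{\min}, z_{\max}$ as in \eqref{eq_gamma}--\eqref{eq_gamma1}, so the optimal conditional laws for \eqref{eq_opt_stopping21} are exactly those with $Z_i\in[z_{\min}(y_i,\alpha), z_{\max}(y_i,\alpha)]$ a.s. Re-coupling across $i$, a policy $\pi=(Z_1,Z_2,\ldots)\in\Pi_1$ attains $g(\alpha)=\inf_{\pi\in\Pi_1}L(\pi;\alpha)$ if and only if $Z_i\in[z_{\min}(Y_i,\alpha), z_{\max}(Y_i,\alpha)]$ for almost all $i$, which is the set $\Gamma(\alpha)$ in \eqref{eq_lem_gamma}. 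I would close by checking $\Gamma(\alpha)$ is non-empty and contained in $\Pi_1$: taking $Z_i=z_{\min}(Y_i,\alpha)$, a deterministic function of $Y_i$, makes the inter-sampling times $T_i=Y_i+Z_i$ i.i.d.\ hence regenerative, and finiteness of $\bar p_{\text{opt}_1}$ together with $0<\mathbb{E}[Y_i]<\infty$ keeps $\mathbb{E}[T_i]$ finite, so the moment conditions defining $\Pi_1$ are met.

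The step I expect to require the most care is the decoupling of the long-run Cesàro-average objective over the structured class $\Pi_1$ into the separate per-sample problems: one must argue that restricting $Z_i$ to depend on the observed history only through $Y_i$ neither enlarges nor shrinks the achievable value of $L(\pi;\alpha)$, and that the limit defining $L(\pi;\alpha)$ exists along the candidate optimal policies. The sufficient-statistic reasoning of Lemma \ref{lem_SRoptimal} handles the first point, and the renewal-reward argument already used to derive \eqref{eq_problem_S11} (applicable because $\{Y_i\}$ are i.i.d.\ and the resulting $\{T_i\}$ are regenerative) handles the second; everything else is bookkeeping.
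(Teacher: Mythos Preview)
Your proposal is correct and follows essentially the same route as the paper: invoke Lemma~\ref{lem_SRoptimal} to decouple \eqref{eq_primal} into the per-sample problems \eqref{eq_opt_stopping21}, observe that a randomized solution to \eqref{eq_opt_stopping21} is optimal iff it is supported on the minimizer set of the deterministic problem \eqref{eq_opt_stopping22}, and then apply Lemma~\ref{lem_optimal_eq_opt_stopping22} to identify that set as $[z_{\min}(y_i,\alpha),z_{\max}(y_i,\alpha)]$. The paper states this combination in one line; your added admissibility check for $\Gamma(\alpha)\subset\Pi_1$ is a reasonable elaboration but not something the paper spells out.
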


\subsection{Zero Duality Gap and Optimal Solution to \eqref{eq_SD1}}\label{sec_proof_2}
Strong duality and an optimal solution to \eqref{eq_SD1} are obtained  in the following theorem:
\begin{theorem}\label{thm6_strong_duality}
If \emph{$c = \bar p_{\text{opt}_1}$},  $p(\cdot)$ is non-decreasing, $\mathbb{E} [p(t+Y_i)]<\infty$ for all  finite $t$, and the service times $Y_i$ are {i.i.d.} with $0<\mathbb{E}[Y_i]<\infty$, then the duality gap between \eqref{eq_SD1} and \eqref{eq_dual} is zero. Further, $Z_i = z_{\min}(Y_i,0)$ is an optimal solution to \eqref{eq_SD1} and \eqref{eq_dual},
if
\begin{align}\label{thm6_eq0}
\mathbb{E}[Y_i+z_{\min}(Y_i,0)]> \frac{1}{f_{\max}}.
\end{align}
Otherwise, $(Z_1,Z_2,\ldots)$ is an optimal solution to \eqref{eq_SD1} and \eqref{eq_dual}, where 
\begin{align}
Z_i = \left\{\begin{array}{l l} z_{\min}(Y_i,\alpha)\text{ with probability }\lambda,\\ z_{\max}(Y_i,\alpha) \text{ with probability }1-\lambda, \end{array}\right.
\end{align}
$\alpha\geq 0$ is determined by solving
\begin{align}
\!\!\mathbb{E}\left[Y_i+z_{\min}(Y_i,\alpha)\right] \leq  \frac{1}{f_{\max}}
\leq \mathbb{E}\left[Y_i+z_{\max}(Y_i,\alpha)\right],\!\!\nonumber
\end{align}
and  $\lambda$ is given by
\begin{align}\label{thm6_eq5}
\lambda = \frac{\mathbb{E} [Y_i+z_{\max}(Y_i,\alpha)] -\frac{1}{f_{\max}}  }{\mathbb{E} [z_{\max}(Y_i,\alpha)-z_{\min}(Y_i,\alpha)]}.
\end{align}
\end{theorem}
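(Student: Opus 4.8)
The plan is to prove Theorem~\ref{thm6_strong_duality} by the geometric-multiplier / complementary-slackness argument that underlies the tools imported from \cite{SunTIT2018}, while simultaneously producing the primal optimizers claimed in the statement. Recall that $c=\bar p_{\text{opt}_1}$ is in force, that $h(\bar p_{\text{opt}_1})=0$ by \eqref{eq_c} and Lemma~\ref{lem_ratio_to_minus}, and that weak duality (noted after \eqref{eq_dual}) already gives $d\le h(\bar p_{\text{opt}_1})$. It therefore suffices to exhibit a dual-feasible $\alpha^*\ge 0$ and a policy $\pi^*\in\Pi_1$ that (a) is feasible for \eqref{eq_SD1}, (b) lies in the set $\Gamma(\alpha^*)$ of Lemma~\ref{lem_gamma}, so that $L(\pi^*;\alpha^*)=g(\alpha^*)$, and (c) satisfies complementary slackness with $\alpha^*$. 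Given such a pair, $L(\pi^*;\alpha^*)$ coincides with the objective of \eqref{eq_SD1} evaluated at $\pi^*$; feasibility of $\pi^*$ makes that value $\ge h(\bar p_{\text{opt}_1})$, so the chain $h(\bar p_{\text{opt}_1})=0\ge d\ge g(\alpha^*)=L(\pi^*;\alpha^*)\ge h(\bar p_{\text{opt}_1})$ collapses, which yields zero duality gap and optimality of $\pi^*$ and $\alpha^*$ at once.

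\emph{Case 1: the rate constraint is slack.} If \eqref{thm6_eq0} holds, take $\alpha^*=0$ and $\pi^*$ the policy $Z_i=z_{\min}(Y_i,0)$. Since $z_{\min}(Y_i,0)\in[z_{\min}(Y_i,0),z_{\max}(Y_i,0)]$ a.s., $\pi^*\in\Gamma(0)$, and because $\{(Y_i,Z_i)\}$ is i.i.d.\ the limit in $L(\pi^*;0)$ exists and equals the objective of \eqref{eq_SD1} at $\pi^*$; by \eqref{thm6_eq0} that policy obeys \eqref{eq_SD2}, so it is feasible and complementary slackness is automatic (the multiplier is zero). The collapse above then applies verbatim.

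\emph{Case 2: the rate constraint binds.} When \eqref{thm6_eq0} fails, put $\phi(\alpha)=\mathbb{E}[Y_i+z_{\min}(Y_i,\alpha)]$ and $\psi(\alpha)=\mathbb{E}[Y_i+z_{\max}(Y_i,\alpha)]$. From \eqref{eq_gamma}--\eqref{eq_gamma1}, $\alpha\mapsto z_{\min}(y,\alpha)$ and $\alpha\mapsto z_{\max}(y,\alpha)$ are non-decreasing with $z_{\min}\le z_{\max}$, so $\phi\le\psi$ and both are non-decreasing; monotone convergence plus $\mathbb{E}[p(t+Y_i)]<\infty$ for all finite $t$ forces $\phi(\alpha)\uparrow\infty$, while the negation of \eqref{thm6_eq0} is $\phi(0)\le 1/f_{\max}$. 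Hence some $\alpha^*\ge 0$ satisfies $\phi(\alpha^*)\le 1/f_{\max}\le\psi(\alpha^*)$ (its uniqueness is deferred to Appendix~\ref{app_thm_strong_duality}). Let $\lambda$ be given by \eqref{thm6_eq5}; the sandwich yields $\lambda\in[0,1]$, and the randomized policy $\pi^*$ with $Z_i=z_{\min}(Y_i,\alpha^*)$ with probability $\lambda$ and $Z_i=z_{\max}(Y_i,\alpha^*)$ with probability $1-\lambda$ (independently across $i$) has $\mathbb{E}[Y_i+Z_i]=\lambda\phi(\alpha^*)+(1-\lambda)\psi(\alpha^*)=1/f_{\max}$, i.e.\ \eqref{eq_SD2} holds with equality, so the $\alpha^*$-terms in \eqref{eq_Lagrangian} cancel and complementary slackness holds. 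Since $Z_i\in[z_{\min}(Y_i,\alpha^*),z_{\max}(Y_i,\alpha^*)]$ a.s., $\pi^*\in\Gamma(\alpha^*)$ by Lemma~\ref{lem_gamma}, and the collapse above again applies; the choice \eqref{thm6_eq5} of $\lambda$ is exactly what tightens the averaged constraint.

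\emph{Where the difficulty lies.} The duality bookkeeping is routine. The delicate points are the regularity of $\phi$ and $\psi$ --- they are in general discontinuous, which is precisely why one cannot solve $\phi(\alpha)=1/f_{\max}$ directly and must instead land in a flat portion and randomize --- together with the finiteness of $z_{\min}(y,\alpha)$ and $z_{\max}(y,\alpha)$ and the existence and uniqueness of $\alpha^*$, which is the substance of Appendix~\ref{app_thm_strong_duality}; and the verification that the time averages defining $L(\pi^*;\alpha^*)$ and \eqref{eq_SD2} converge for the randomized $\pi^*$, which rests on the stationary, finitely dependent structure of $\{(Y_i,Z_i,Y_{i+1})\}$ and the integrability assumption $\mathbb{E}[p(\Delta+Y_i)]<\infty$.
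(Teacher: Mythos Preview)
Your proposal is correct and follows essentially the same route as the paper's proof: verify the geometric-multiplier (KKT-type) conditions of \cite[Prop.~6.2.5]{Bertsekas2003} by splitting into the slack case ($\alpha^\star=0$, $Z_i=z_{\min}(Y_i,0)$) and the binding case (find $\alpha^\star$ with $\phi(\alpha^\star)\le 1/f_{\max}\le\psi(\alpha^\star)$ and randomize), then conclude zero duality gap via \cite[Prop.~6.2.3(b)]{Bertsekas2003}; your explicit ``collapse'' chain is just the content of those propositions written out. The one point you leave slightly implicit---that the intervals $[\phi(\alpha),\psi(\alpha)]$ actually cover $[\phi(0),\infty)$---is exactly what the paper fills in via the limit relation $\lim_{\alpha\downarrow\alpha_0}\phi(\alpha)=\psi(\alpha_0)$ and $\lim_{\alpha\uparrow\alpha_0}\psi(\alpha)=\phi(\alpha_0)$ (their \eqref{eq_KKT_16}), which you correctly flag as the delicate step.
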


\begin{proof}
We use \cite[Prop. 6.2.5]{Bertsekas2003} to find a \emph{geometric multiplier} for  \eqref{eq_SD1}. This suggests  that the duality gap between \eqref{eq_SD1} and \eqref{eq_dual} must be zero, because otherwise there exists no geometric multiplier \cite[Prop. 6.2.3(b)]{Bertsekas2003}. The details are provided in Appendix \ref{app_thm_strong_duality}.
\end{proof}
By choosing 
\begin{align}\label{eq_threshold_insight}
\beta = \bar p_{\text{opt}_1}+\alpha,  
\end{align}
Theorem \ref{thm2} follows from Theorem \ref{thm6_strong_duality}.

We note that the extension of Dinkelbach's method in Lemma \ref{lem_ratio_to_minus} and the geometric multiplier technique used in Theorem \ref{thm6_strong_duality} are the key technical tools that make it possible to simplify \eqref{eq_problem2} as the convex optimization problem in \eqref{eq_opt_stopping22}.  These technical tools were also used in a recent study \cite{SunTIT2018}, where a quite different sampling problem is solved.  Further, \eqref{eq_threshold_insight} implies that 
the optimal threshold $\beta$ is equal to the optimum objective value of the MDP $\bar p_{\text{opt}_1}$ plus the optimal Lagrangian dual variable $\alpha$. By using these results, bisection search algorithms are developed in Section \ref{sec_main_results} to compute $\beta$, and the curse of dimensionality  is circumvented. 


\subsection{Proofs of Other Continuous-time Sampling Results}\label{sec_continuous}
Theorem \ref{thm1} follows immediately from Theorem \ref{thm2}, because it is a special case of Theorem \ref{thm2}. 
In particular, because the $Y_i$'s are \emph{i.i.d.}, the optimal objective value to \eqref{eq_problem2} is
\begin{align}\label{eq_KKT_13}
\bar p_{\text{opt}_1}=&\lim_{n\rightarrow \infty}\frac{\sum_{i=1}^{n}\mathbb{E}\left[q(Y_i,z_{\min}(Y_i,0),Y_{i+1})\right]}{\sum_{i=1}^{n} \mathbb{E}\left[Y_i+z_{\min}(Y_i,0)\right]} \nonumber\\
=& \frac{\mathbb{E}\left[q(Y_i,z_{\min}(Y_i,0),Y_{i+1})\right]}{\mathbb{E}\left[Y_i+z_{\min}(Y_i,0)\right]} \nonumber\\
=& \frac{\mathbb{E}\left[\int_{Y_i}^{Y_i+z_{\min}(Y_i,0)+Y_{i+1}} p(t) dt\right]}{\mathbb{E}\left[Y_i+z_{\min}(Y_i,0)\right]},
\end{align}
from which \eqref{thm1_eq22} follows. We note that the root of \eqref{thm1_eq22} must be  unique; otherwise, one can follow the arguments in Appendix \ref{app_thm_strong_duality} to show that the optimal objective value to \eqref{eq_problem2} is non-unique, which cannot be true. Further, as shown in Appendix \ref{app_thm_strong_duality}, the condition ``$\mathbb{E} [p(t+Y_i)]<\infty$ for all  finite $t$'' is not needed in the case of Theorem \ref{thm1}.

\subsection{Proofs of  Discrete-time Sampling Results}\label{sec_discrete_proofs}
The proofs of the discrete-time sampling results are quite similar to their continuous-time counterparts. One difference is that the convex optimization problem \eqref{eq_opt_stopping22} of the continuous-time case becomes 
the following integer optimization problem in the discrete-time case: 
\begin{align}\label{eq_opt_stopping23}
\!\!\min_{z\in\mathbb{N}}\mathbb{E}\left[q(y_i,z,Y_{i+1})\!-\! (\bar p_{\text{opt}_1}\!+\!\alpha) (z\!+\!Y_{i+1}) \right]\!,\!\!
\end{align}
where 
\begin{align}\label{eq_sum1}
q(y_i,z,y')= \sum_{t=y}^{y+z+y'-1}\!\!p(t).
\end{align}
By adopting an idea in  \cite[Problem 5.5.3]{Bertsekas}, we  obtain 
\begin{lemma}\label{lem_optimal_eq_opt_stopping23} 
If $p(\cdot)$ is non-decreasing, then the set of optimal solution to \eqref{eq_opt_stopping23} is $\{z_{\min}(y_i,\alpha),z_{\min}(y_i,\alpha)+1,z_{\min}(y_i,\alpha)+2,\ldots,z_{\max}(y_i,\alpha)\}$, where  
\emph{
\begin{align}\label{eq_gamma2}
\!\!z_{\min}(y,\alpha) &= \!\inf\{ t\in\mathbb{N}: \mathbb{E}[ p(y+t + Y_{i+1}]\! \geq\! \bar p_{\text{opt}_1}\!+\!\alpha \},\!\! \\
\!\!z_{\max} (y,\alpha) 
&=\!\inf\{ t\in\mathbb{N}: \mathbb{E}[ p(y+t + Y_{i+1}]\! >\! \bar p_{\text{opt}_1}\!+\!\alpha \}.\!\! \label{eq_gamma3}
\end{align}}
\end{lemma}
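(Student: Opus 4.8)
The plan is to follow the proof of Lemma~\ref{lem_optimal_eq_opt_stopping22}, replacing the first-order optimality condition for the continuous convex program \eqref{eq_opt_stopping22} by a discrete forward-difference (one-step look-ahead) argument, in the spirit of \cite[Problem 5.5.3]{Bertsekas}. Write the objective of \eqref{eq_opt_stopping23} as
\begin{align}
g(z) \triangleq \mathbb{E}\left[q(y_i,z,Y_{i+1}) - (\bar p_{\text{opt}_1}+\alpha)(z+Y_{i+1})\right], \qquad z\in\mathbb{N},
\end{align}
which is finite for every $z$ because $\mathbb{E}[p(t+Y_i)]<\infty$ for all finite $t$. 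The first step is to compute the forward difference of $g$: since $q(y_i,z+1,y')-q(y_i,z,y')=p(y_i+z+y')$ by the discrete definition of $q$ in \eqref{eq_sum1}, we obtain
\begin{align}
g(z+1)-g(z) = \mathbb{E}\left[p(y_i+z+Y_{i+1})\right] - (\bar p_{\text{opt}_1}+\alpha), \qquad z\in\mathbb{N}.
\end{align}

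The second step uses monotonicity. Because $p(\cdot)$ is non-decreasing, $z\mapsto\mathbb{E}[p(y_i+z+Y_{i+1})]$ is non-decreasing on $\mathbb{N}$, so the forward difference $g(z+1)-g(z)$ is non-decreasing in $z$; that is, $g$ is discretely convex. For a discretely convex function, an integer $z^\star\geq 1$ minimizes $g$ over $\mathbb{N}$ if and only if $g(z^\star)-g(z^\star-1)\leq 0$ and $g(z^\star+1)-g(z^\star)\geq 0$, while $z^\star=0$ minimizes $g$ if and only if $g(1)-g(0)\geq 0$. Substituting the forward-difference identity, $g(z^\star+1)-g(z^\star)\geq 0$ is equivalent to $\mathbb{E}[p(y_i+z^\star+Y_{i+1})]\geq\bar p_{\text{opt}_1}+\alpha$, hence, by monotonicity, to $z^\star\geq z_{\min}(y_i,\alpha)$; and $g(z^\star)-g(z^\star-1)\leq 0$ is equivalent to $\mathbb{E}[p(y_i+z^\star-1+Y_{i+1})]\leq\bar p_{\text{opt}_1}+\alpha$, which, using that $\{t\in\mathbb{N}:\mathbb{E}[p(y_i+t+Y_{i+1})]>\bar p_{\text{opt}_1}+\alpha\}=\{t\geq z_{\max}(y_i,\alpha)\}$ by monotonicity, is equivalent to $z^\star\leq z_{\max}(y_i,\alpha)$. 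Combining the two, $z^\star$ is optimal for \eqref{eq_opt_stopping23} if and only if $z_{\min}(y_i,\alpha)\leq z^\star\leq z_{\max}(y_i,\alpha)$, which is exactly the claimed solution set; the boundary case $z^\star=0$ is consistent, since $g(1)-g(0)\geq 0$ is precisely the statement $z_{\min}(y_i,\alpha)=0$.

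The remaining points are routine but need care: (i) nonemptiness of the minimizer set, i.e., $z_{\min}(y_i,\alpha)<\infty$, which is guaranteed by the standing finiteness hypotheses (finiteness of $\bar p_{\text{opt,2}}$ and $\mathbb{E}[p(\cdot+Y_i)]<\infty$), as these preclude $g$ from being strictly decreasing on all of $\mathbb{N}$; (ii) the degenerate case $z_{\max}(y_i,\alpha)=+\infty$, in which the solution set is the unbounded tail $\{z_{\min}(y_i,\alpha),z_{\min}(y_i,\alpha)+1,\ldots\}$ and the stated formula still applies; and (iii) the interchange of expectation with finite differences, immediate from finiteness. I expect the only genuinely delicate bookkeeping to be (i)--(ii) together with the $z^\star=0$ boundary; the core — discrete convexity plus the sign-change characterization of minimizers — is elementary. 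Once Lemma~\ref{lem_optimal_eq_opt_stopping23} is in hand, the discrete-time Theorems~\ref{thm3} and \ref{thm4} follow by repeating the sufficient-statistic decomposition, the extension of Dinkelbach's method, and the geometric-multiplier argument of Section~\ref{sec_proof_2} verbatim, with \eqref{eq_opt_stopping23} in place of \eqref{eq_opt_stopping22} and \eqref{eq_gamma2}--\eqref{eq_gamma3} in place of their continuous counterparts.
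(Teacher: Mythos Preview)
Your proposal is correct and follows essentially the same approach as the paper: both compute the forward difference $g(z+1)-g(z)=\mathbb{E}[p(y_i+z+Y_{i+1})]-(\bar p_{\text{opt}_1}+\alpha)$, observe that it is non-decreasing in $z$ by monotonicity of $p$, and read off the optimal set as the integers where this difference changes sign. Your write-up is somewhat more explicit about the discrete-convexity characterization and the boundary cases than the paper's terse argument, but the substance is identical.
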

\begin{proof}
See Appendix \ref{app_lem_optimal_eq_opt_stopping23}.
\end{proof}
By replacing Lemma \ref{lem_optimal_eq_opt_stopping22} with Lemma \ref{lem_optimal_eq_opt_stopping23} and following the proof arguments in Section \ref{sec_analysis}.A-\ref{sec_analysis}.F,  the discrete-time optimal sampling results can be proven readily. 

\begin{figure}
	\centering
	\includegraphics[width=0.3\textwidth]{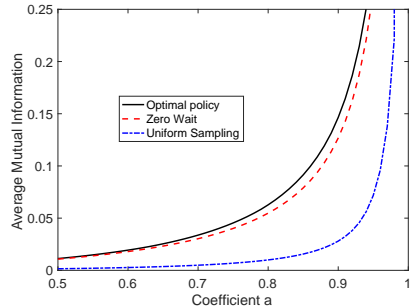}   
	\caption{Average mutual information of the Gauss-Markov source versus the coefficient $a$ in \eqref{eq_markov}, where the service times $Y_i$ are equal to either $1$ or $21$ with probability 0.5.}
	\label{figure4_unconstrained_gaussian_discrete}
\end{figure} 

\begin{figure}
	\centering
	\includegraphics[width=0.3\textwidth]{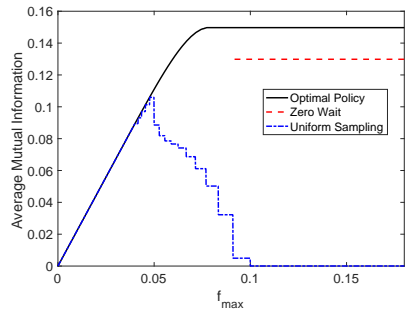}   
	\caption{Average mutual information of the Gauss-Markov source versus $f_{\max}$, where the service times $Y_i$ are equal to either $1$ or $21$ with probability 0.5.}
	\label{figure2_constrained_gaussian_discrete}
\end{figure}

\section{Numerical Results}\label{sec:numerical}


In this section, we compare the age performance of  the following three sampling policies:
\begin{itemize}
\item \textit{Uniform sampling}: Periodic sampling with a period given by $S_{i+1} - S_i = {1}/{f_{\max}}$ for  continuous-time sampling, or $S_{i+1} - S_i = \lceil{1}/{f_{\max}}\rceil$ for discrete-time sampling where $ \lceil x\rceil$ is the smallest integer larger  than or equal to $x$.
\item \textit{Zero-wait}: The sampling policy in \eqref{eq_Zero_wait}, which is infeasible when $f_{\max} <1/\mathbb{E}[Y_i].$ 
\item \textit{Optimal policy}: The sampling policy given by Theorem~\ref{thm2} for  continuous-time sampling, or Theorem~\ref{thm4} for discrete-time sampling. 
\end{itemize}
As the numerical results for continuous-time sampling have been reported in our earlier work \cite{AgeOfInfo2016}, we will focus on the case of discrete-time sampling.


In Fig. \ref{figure4_unconstrained_gaussian_discrete}, we plot  the time-average expected mutual information of the Gauss-Markov source versus the coefficient $a$ in \eqref{eq_markov}, where  $f_{\max}= 0.095$ and $Y_i$ is equal to either $1$ or $21$ with probability 0.5. Hence, $\mathbb{E}[Y_i] = 11$ and the zero-wait sampling policy is infeasible when 
 $f_{\max} <1/11.$
Figure \ref{figure2_constrained_gaussian_discrete} depicts the tradeoff between the time-average expected mutual information of the Gauss-Markov source $X_t$ in \eqref{eq_markov} and $f_{\max}$, where the mutual information is given by \eqref{eq_MI_Gauss_Markov} with $a = 0.9$. 
As the coefficient $a$ grows from 0 to 1, the source $X_t$ becomes more correlated over time. Therefore, the amount of mutual information grows with respect to $a$.  
In addition, the mutual information of the optimal sampling policy is higher than that of zero-waiting sampling and uniform sampling. When $f_{\max}$ is large, the queue length is high and the samples become stale during the long waiting time in the queue. As a result, uniform sampling is far from optimal for large $f_{\max}$. 
\begin{figure}
	\centering
	\includegraphics[width=0.3\textwidth]{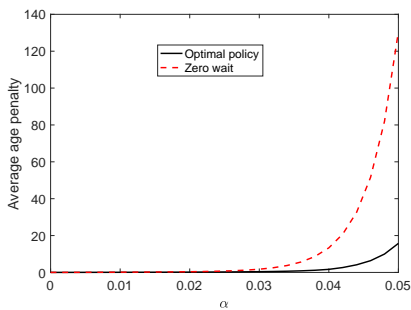}   
	\caption{Average age penalty of an exponential penalty function $p_{\exp}(\Delta_t) = e^{\alpha\Delta_t}-1$ versus the coefficient $\alpha$, where the service times $Y_i$ follow a discretized log-normal distribution.}
	\label{figure10_lognormal_differentpenalties_exp}
\end{figure} 

\begin{figure}
	\centering
	\includegraphics[width=0.3\textwidth]{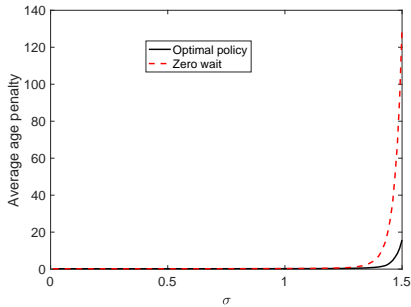}   
	\caption{Average age penalty of an exponential penalty function $p_{\exp}(\Delta_t) = e^{\alpha\Delta_t}-1$ versus the coefficient $\sigma$ of  discretized log-normal service time distribution.}
	\label{figure4_lognormal_differentdistributions}
\end{figure}

Figure \ref{figure10_lognormal_differentpenalties_exp} illustrates the time-average expectation of an exponential penalty function $p_{\exp}(\Delta_t) = e^{\alpha\Delta_t}-1$ versus the coefficient $\alpha$, where $Y_i$ follows a discretized log-normal distribution. In particular, $Y_i$ can be expressed as $Y_i = \lceil e^{\sigma X_i}/ \mathbb{E}[e^{\sigma X_i}]\rceil$, where the $X_i$'s are \emph{i.i.d.} Gaussian random variables with zero mean and unit variance, and $\sigma=1.5$. Figure \ref{figure4_lognormal_differentdistributions} shows the time-average expectation of $p_{\exp}(\Delta_t)$ versus the coefficient $\sigma$ of the discretized log-normal service time distribution. If $\alpha=0$, $p_{\exp}(\Delta_t)$ is a constant function. If $\sigma=0$, the service time $Y_i$ is constant. Corollary \ref{coro3} tells us that zero-waiting sampling is optimal in these two cases, which is in consistent  with Figs. \ref{figure10_lognormal_differentpenalties_exp}-\ref{figure4_lognormal_differentdistributions}. On the other hand, if $\alpha$ and $\sigma$ are large,  one can observe from Figs. \ref{figure10_lognormal_differentpenalties_exp}-\ref{figure4_lognormal_differentdistributions} that zero-waiting sampling is far from optimal. Hence, zero-wait sampling is far from optimal if the age penalty function grows quickly with the age (i.e., $\alpha$ relatively is large) or the service times $Y_i$ are highly random. 
\section{Conclusion}\label{sec_conclusion}
In this paper, we have studied a sampling problem, where  samples are taken from a data source and sent to a remote receiver that is in need of fresh data. We have developed the optimal sampling policies that maximize various data freshness metrics subject to a sampling rate constraint. These sampling policies have nice structures and are easy to compute. Their optimality is established under quite general conditions. Our numerical results show that the optimal sampling policies can be much better than zero-wait sampling and the classic uniform sampling.

\appendices
\section{Proof of Lemma \ref{lem1}}\label{app_lem1}
If the $S_i$'s are independent of $\{X_t, t\geq0\}$, the sampling times $\{S_i: D_i \leq t\}$ of delivered packet contain no information about $X_t$. In addition, because $X_t$ is a Markov chain, $X_{\max\{S_{i} :  D_{i} \leq t\}}=X_{t-\age_t}$ contains all the information in $\bm{W}_t = \{(X_{S_i},S_i): D_i \leq t\}$ about $X_t$. In other words, $X_{t-\age_t}$ is a sufficient statistic of $\bm{W}_t$ for inferring $X_t$. 
Then,  \eqref{eq_lem1} follows from  \cite[Eq. (2.124)]{Cover}.

Next, because $X_t$ is stationary, $I(X_t; X_{t-\age}) = I(X_{\age}; X_{0})$ for all $t$, which is a function of  $\age$. Further, because $X_t$ is a Markov chain, owing to the data processing inequality \cite[Theorem 2.8.1]{Cover}, $I(X_{\age}; X_{0})$ is non-increasing in $\age$. Finally,  mutual information  is non-negative. This completes the proof.

\section{Proof of Lemma \ref{lem_optimal_eq_opt_stopping22}}\label{app_lem_optimal_eq_opt_stopping22}
The one-sided derivative of a function $h$ in the direction of $w$ at $z$ is denoted as 
\begin{align}\label{derivative}
\delta h(z;w)\triangleq& \lim_{\epsilon\rightarrow 0^+} \frac{h(z+\epsilon w)-h(z)}{\epsilon}.
\end{align}
Because the function $h(z)=\mathbb{E}\left[q(y_i,z,Y_{i+1})\right]$ is convex, the one-sided derivative $\delta h(z;w)$  of $h(z)$ exist \cite[p. 709]{Bertsekas}.
Because $z \rightarrow q(y_i,z,y')$ is convex, the function
$\epsilon\rightarrow [q(y_i,z+\epsilon w,y')-q(y_i,z,y')]/\epsilon$ is non-decreasing and bounded from above on $(0,a]$ for some $a>0$ \cite[Proposition 1.1.2(i)]{infinite_dimensional}. 
By monotone convergence theorem \cite[Theorem 1.5.6]{Durrettbook10}, we can interchange the limit and integral operators in $\delta h(z;w)$ such that
\begin{align}
\delta h(z;w)=&\lim_{\epsilon\rightarrow 0^+}\frac{1}{\epsilon}\mathbb{E}\left[ q(y_i,z+\epsilon w,Y_{i+1})- q(y_i,z,Y_{i+1})\right]\nonumber\\
=&\mathbb{E}\!\left[ \lim_{\epsilon\rightarrow 0^+}\frac{1}{\epsilon}\big\{q(y_i,z+\epsilon w,Y_{i+1})- q(y_i,z,Y_{i+1})\big\}\right]\nonumber\\
=& \mathbb{E}\left[\lim\limits_{t\rightarrow z^+}p(y_i+t+Y_{i+1})w 1_{\{w>0\}}\right.\nonumber\\
&~\left.+\lim\limits_{t\rightarrow z^-}p(y_i+t+Y_{i+1})w 1_{\{w<0\}}\right]\nonumber\\
=& \lim\limits_{t\rightarrow z^+}\mathbb{E}\left[p(y_i+t+Y_{i+1})w 1_{\{w>0\}}\right]\nonumber\\
&~+\lim\limits_{t\rightarrow z^-}\mathbb{E}\left[p(y_i+t+Y_{i+1})w 1_{\{w<0\}}\right]\!\!,\!\!\!\label{eq_inequal3}
\end{align}
where $1_{A}$ is the indicator function of event $A$.
According to \cite[p. 710]{Bertsekas} and the convexity of $h(z)$, $z$ is an optimal solution to \eqref{eq_opt_stopping22} if and only if the following assertion is true:
If $z>0$, then 
\begin{align}\label{eq_inequal0}
\delta h(z;w) -(\bar p_{\text{opt}_1}+\!\alpha)w \geq 0,~\forall~w\in \mathbb{R},
\end{align}
otherwise, $z=0$. 
Because $w$ in  \eqref{eq_inequal0} is an arbitrary real number, if we choose $w=1$, then \eqref{eq_inequal0} becomes  
\begin{align}\label{eq_inequal1}
\lim\limits_{t\rightarrow z^+}\mathbb{E}\left[ p(y_i+t+Y_{i+1})\right]\! - \!(\bar p_{\text{opt}_1}+\!\alpha)\geq0.\!\!
\end{align}
Similarly, if we choose $w=-1$, then \eqref{eq_inequal0} implies 
\begin{align}
\lim_{t\rightarrow z^-} \mathbb{E}\left[ p(y_i+t+Y_{i+1})\right]\! -\! (\bar p_{\text{opt}_1}+\!\alpha)\leq0.\!\!\label{eq_inequal2}
\end{align}
Because $p(\cdot)$ is non-decreasing, we can obtain from \eqref{eq_inequal0}-\eqref{eq_inequal2} that if $z>0$, then $z$ satisfies \eqref{eq_inequal3_1}-\eqref{eq_inequal4}: 
\begin{align}\label{eq_inequal3_1}
&\mathbb{E}\left[ p(y_i+t+Y_{i+1})\right]\! - \!(\bar p_{\text{opt}_1}+\!\alpha)\geq0, \text{ if } t>z, \\
&\mathbb{E}\left[ p(y_i+t+Y_{i+1})\right]\! -\! (\bar p_{\text{opt}_1}+\!\alpha)\leq0, \text{ if } t<z, \label{eq_inequal4}
\end{align}
otherwise, $z=0$. The smallest $z$  satisfying \eqref{eq_inequal3_1}-\eqref{eq_inequal4} is 
\begin{align}
z_{\min}(y_i,\alpha) = \inf\{t\geq 0: \mathbb{E}\left[ p(y_i+t+Y_{i+1})\right] \geq \bar p_{\text{opt}_1}+\!\alpha\},\nonumber
\end{align}
and the largest $z$  satisfying \eqref{eq_inequal3_1}-\eqref{eq_inequal4} is 
\begin{align}
z_{\max}(y,\alpha)=&\sup\{t\geq 0: \mathbb{E}\left[ p(y_i+t+Y_{i+1})\right] \leq \bar p_{\text{opt}_1}+\!\alpha\} \nonumber\\
=&\inf\{t\geq 0: \mathbb{E}\left[ p(y_i+t+Y_{i+1})\right] > \bar p_{\text{opt}_1}+\!\alpha\}.\nonumber
\end{align}
Hence, the set of optimal solutions to \eqref{eq_opt_stopping22}  is given by Lemma \ref{lem_optimal_eq_opt_stopping22}. This completes the proof. 
\section{Proof of Theorem \ref{thm6_strong_duality}}\label{app_thm_strong_duality}

According to \cite[Prop. 6.2.5]{Bertsekas2003}, if we can find $\pi^{\star} = (Z_1^{\star},Z_2^{\star},\ldots)$ and $\alpha^{\star}$ that satisfy
 the following conditions:
\begin{align}
&\pi^{\star}\in\Pi_1, \lim_{n\rightarrow \infty} \frac{1}{n} \sum_{i=0}^{n-1} \mathbb{E}\left[Y_i+Z_i^{\star}\right] - \frac{1}{f_{\max}} \geq  0,\label{eq_mix_or_not0}\\
&\alpha^{\star}\geq 0,\label{eq_mix_or_not1}\\
&L(\pi^{\star};\alpha^{\star}) = \inf_{\pi\in\Pi_1}  L(\pi;\alpha^{\star}),\label{eq_mix_or_not}\\
&\alpha^\star \left\{\lim_{n\rightarrow \infty} \frac{1}{n} \sum_{i=0}^{n-1} \mathbb{E}\left[Y_i+Z_i^{\star}\right] - \frac{1}{f_{\max}}\right\} = 0,\label{eq_KKT_last}
\end{align}
then $\pi^{\star}$ is an optimal solution to \eqref{eq_SD1} and $\alpha^\star$ is a geometric multiplier \cite{Bertsekas2003} for  \eqref{eq_SD1}. 
Further, if we can find such $\pi^{\star}$ and $\alpha^{\star}$, then the duality gap between \eqref{eq_SD1} and \eqref{eq_dual} must be zero, because otherwise there is no geometric multiplier \cite[Prop. 6.2.3(b)]{Bertsekas2003}. 
The remaining task is to find $\pi^{\star}$ and $\alpha^{\star}$ that satisfy \eqref{eq_mix_or_not0}-\eqref{eq_KKT_last}.

According to Lemma \ref{lem_gamma}, the set of optimal solutions to \eqref{eq_mix_or_not} is given by $\Gamma (\alpha^\star)$. Hence, we only need to find $\alpha^\star$ and $\pi^\star\in\Gamma (\alpha^\star)$  that 
satisfy \eqref{eq_mix_or_not0}, \eqref{eq_mix_or_not1}, and \eqref{eq_KKT_last}. The search for such  $\alpha^\star$ and $\pi^\star$ falls into the following two cases:


\textbf{Case 1:} If \eqref{thm6_eq0} is satisfied, then $\alpha^\star_1  = 0$ and $\pi^\star_1 = (z_{\min}(Y_1,0), z_{\min}(Y_2,0), \ldots)$ satisfy the conditions \eqref{eq_mix_or_not0}-\eqref{eq_KKT_last}. 

\textbf{Case 2:} If \eqref{thm6_eq0} is not satisfied, we seek $\alpha^\star_2\geq0$ and $\pi^\star_2= (Z_1^{\star},Z_2^{\star},\ldots)\in\Gamma (\alpha^\star_2)$ that satisfy
\begin{align}\label{eq_KKT_2}
\lim_{n\rightarrow \infty} \frac{1}{n} 
\sum_{i=0}^{n-1} \mathbb{E}\left[Y_i+Z_i^{\star}\right] = \frac{1}{f_{\max}}.
\end{align}
By  Lemma \ref{lem_gamma}, we can get  from \eqref{eq_KKT_2} that
\begin{align}\label{eq_KKT_10}
\lim_{n\rightarrow \infty} \frac{1}{n} 
\sum_{i=0}^{n-1} \mathbb{E}\left[Y_i+z_{\min}(Y_i,\alpha^\star_2)\right] \leq  \frac{1}{f_{\max}}~~~\nonumber\\
\leq\lim_{n\rightarrow \infty} \frac{1}{n} 
\sum_{i=0}^{n-1} \mathbb{E}\left[Y_i+z_{\max}(Y_i,\alpha^\star_2)\right].
\end{align}
Because the $Y_i$'s are \emph{i.i.d.}, \eqref{eq_KKT_10} is equivalent to 
\begin{align}\label{eq_KKT_11}
\!\!\mathbb{E}\left[Y_i+z_{\min}(Y_i,\alpha^\star_2)\right] \leq  \frac{1}{f_{\max}}
\leq \mathbb{E}\left[Y_i+z_{\max}(Y_i,\alpha^\star_2)\right].\!\!
\end{align}

Next, we will find $\alpha^\star_2\geq 0$ that satisfies \eqref{eq_KKT_11}. According to \eqref{eq_gamma}-\eqref{eq_gamma1}, $z_{\min}(y,\alpha)$ and $z_{\max}(y,\alpha)$ are non-decreasing in $\alpha$. Hence, $\mathbb{E} [z_{\min}(Y_i,\alpha)]$ and $\mathbb{E} [z_{\max}(Y_i,\alpha)]$ are also non-decreasing in $\alpha$. 
In addition, it holds that for all $\alpha_0> 0$
\begin{align}
&\lim_{\alpha \rightarrow \alpha_0^-} z_{\max}(y,\alpha) = z_{\min}(y,\alpha_0)\!\nonumber\\
&~~~~~~~~~~~~~~ \leq z_{\max}(y,\alpha_0) =\lim_{\alpha \rightarrow \alpha_0^+} z_{\min}(y,\alpha).
\end{align}
By invoking the monotone convergence theorem \cite[Theorem 1.5.6]{Durrettbook10}, we obtain that for all $\alpha_0> 0$
\begin{align}\label{eq_KKT_16}
&\lim_{\alpha \rightarrow \alpha_0^-} \mathbb{E} [z_{\max}(Y_i,\alpha)] = \mathbb{E} [z_{\min}(Y_i,\alpha_0)]\nonumber\\
&~~~~~~~~~~~~~\leq \mathbb{E} [z_{\max}(Y_i,\alpha_0)]=\lim_{\alpha \rightarrow \alpha_0^+} \mathbb{E} [z_{\min}(Y_i,\alpha)] . 
\end{align}
Because $\mathbb{E} [p(t+Y_i)]<\infty$ for all  finite $t$, it holds  for all $y\geq0$ that $z_{\max}(y,\alpha)$ will increase to $\infty$ as $\alpha$ grows from 0 to $\infty$. By invoking the monotone convergence theorem again, we obtain that $\mathbb{E} [z_{\max}(Y_i,\alpha)]$ will increase to $\infty$ as $\alpha$ grows  from 0 to $\infty$. Hence,
\begin{align}\label{eq_KKT_12}
[\mathbb{E} [z_{\min}(Y_i,0)],\infty) = \bigcup\limits_{\alpha\geq0} \Big[\mathbb{E} [z_{\min}(Y_i,\alpha)],\mathbb{E} [z_{\max}(Y_i,\alpha)]\Big]. 
\end{align}
In Case 2, \eqref{thm6_eq0} is not satisfied, which implies
\begin{align}\label{eq_KKT_14}
 \frac{1}{f_{\max}} \in [\mathbb{E} [z_{\min}(Y_i,0)],\infty).
\end{align}
Hence, \eqref{eq_KKT_16}-\eqref{eq_KKT_14} tell us that there exists a unique $\alpha^\star_2 \geq 0$ satisfying \eqref{eq_KKT_11}.   
Further, policy $\pi^\star\in\Gamma (\alpha^\star_2)$ is chosen as
\begin{align}\label{eq_KKT_15}
Z_i^{\star} = \left\{\begin{array}{l l} z_{\min}(Y_i,\alpha^\star_2)\text{ with probability }\lambda,\\ z_{\max}(Y_i,\alpha^\star_2) \text{ with probability }1-\lambda, \end{array}\right.
\end{align}
where $\lambda$ is given by
\begin{align}
\lambda = \frac{\mathbb{E} [Y_i+z_{\max}(Y_i,\alpha^\star_2)] -\frac{1}{f_{\max}}  }{\mathbb{E} [z_{\max}(Y_i,\alpha^\star_2)-z_{\min}(Y_i,\alpha^\star_2)]}.
\end{align}
By combining \eqref{eq_KKT_11}, \eqref{eq_KKT_14}, and \eqref{eq_KKT_15}, \eqref{eq_KKT_2} follows. Hence, the   $\alpha^\star_2$ and $\pi^\star_2$ selected above satisfy the conditions \eqref{eq_mix_or_not0}-\eqref{eq_KKT_last}.

In both cases, \eqref{eq_mix_or_not0}-\eqref{eq_KKT_last} are satisfied. By \cite[Prop. 6.2.3(b)]{Bertsekas2003},  the duality gap between \eqref{eq_SD1} and \eqref{eq_dual} is zero. A solution to \eqref{eq_SD1} and \eqref{eq_dual} is provided in the arguments above. This completes the proof. 

\section{Proof of Corollary \ref{coro}}\label{app_coro}

We note that the zero-wait sampling policy can be expressed as \eqref{thm1_eq12} with $\mathbb{E}\left[p(\text{ess}\inf Y_i +Y_{i+1})\right] \geq \beta $. 

In the one direction, if the zero-wait sampling
policy is optimal, then the root of \eqref{thm1_eq22} must satisfy $\mathbb{E}\left[p(\text{ess}\inf Y_i +Y_{i+1})\right] \geq \beta $. Substituting this into \eqref{thm1_eq12}, yields $D_{i+1}(\beta)= D_i(\beta)+Y_{i+1} = S_i(\beta) + Y_i + Y_{i+1}$. Combining this with \eqref{thm1_eq22}, we get
\begin{align}
\mathbb{E}\left[p(\text{ess}\inf Y_i +Y_{i+1})\right]\geq \beta = \frac{\mathbb{E}\left[\int_{Y_i}^{Y_i+Y_{i+1}}\! p(t)dt\right]}{\mathbb{E}[Y_{i+1}]},
\end{align}
which  implies \eqref{eq_coro1}.

In the other direction, if \eqref{eq_coro1} holds, then by choosing
\begin{align}\label{eq_coro1_2}
\beta = \frac{\mathbb{E}\left[\int_{Y_i}^{Y_i+Y_{i+1}}\! p(t)dt\right]}{\mathbb{E}[Y_{i+1}]},
\end{align}
we get $\mathbb{E}\left[p(\text{ess}\inf Y_i +Y_{i+1})\right] \geq \beta $. According to \eqref{eq_coro1_2}, such a $\beta$ 
is a root of \eqref{thm1_eq22}. Therefore, the zero-wait sampling
policy is optimal. This completes the proof.

\section{Proof of Corollary \ref{coro4}}\label{app_coro4}
We first prove Part (a). 
If $Y_i=y$ almost surely, then 
\begin{align}
\mathbb{E}\left[p(\text{ess}\inf Y_i +Y_{i+1})\right] = p(2y)\geq \frac{\int_{y}^{2y}\! p(t)dt}{y}
\end{align}
holds for all non-decreasing $p(\cdot)$.
Hence, \eqref{eq_coro1} is satisfied and the zero-wait sampling policy is optimal. 

Next, we consider Part (b).
If {$\text{ess}\inf Y_i = 0$}, then
\begin{align}\label{eq_app_coro4}
&\mathbb{E}\left[p(\text{ess}\inf Y_i +Y_{i+1})\right] = \mathbb{E}\left[p(Y_{i+1})\right] = \mathbb{E}\left[p(Y_{i})\right].
\end{align}
Because $\mathbb{E}[Y_{i+1}] = \mathbb{E}[Y_{i}]>0$, then  the event  $Y_{i+1}>0$ has a non-zero probability. Further, because $p(\cdot)$ is strictly increasing, the event 
$p(t) > p(Y_i)$ for $t\in(Y_i, Y_i+Y_{i+1})$ has a non-zero probability. Hence, 
\begin{align}\label{eq_app_coro4_1}
\mathbb{E}\left[\int_{Y_i}^{Y_i+Y_{i+1}}\! p(t)dt\right]>& \mathbb{E}\left[\int_{Y_i}^{Y_i+Y_{i+1}}\! p(Y_i)dt\right] \nonumber\\
=& \mathbb{E}[Y_{i+1}] \mathbb{E}\left[p(Y_{i})\right].
\end{align}
By combining \eqref{eq_app_coro4} and \eqref{eq_app_coro4_1}, \eqref{eq_coro1} is not true and the zero-wait sampling policy is not optimal.
This completes the proof.

\section{Proof of Lemma \ref{lem_optimal_eq_opt_stopping23}}\label{app_lem_optimal_eq_opt_stopping23}

Using \eqref{eq_sum1}, \eqref{eq_opt_stopping23} can be expressed as
\begin{align}\label{eq_opt_stopping4}
\min_{z\in \mathbb{N}} \mathbb{E}\left[\sum_{t=0}^{z+Y_{i+1}-1}\!\![p(t+y_i)- (\bar p_{\text{opt}_1}\!+\!\alpha)] \right].
\end{align}
It holds that  for $m =1, 2,3,\ldots$
\begin{align}\label{eq_opt_stopping5}
&\mathbb{E}\!\!\left[\sum_{t=0}^{m+Y_{i+1}}\!\!\!\![p(t+y_i)- (\bar p_{\text{opt}_1}\!+\!\alpha)] \right. \nonumber\\
&~~~~~~~~~~~~~\left.-\!\!\sum_{t=0}^{m+Y_{i+1}-1}\!\!\!\![p(t+y_i)- (\bar p_{\text{opt}_1}\!+\!\alpha)] \right]\nonumber\\
=& \mathbb{E}\!\left[p(y_i+m+Y_{i+1})- (\bar p_{\text{opt}_1}\!+\!\alpha)\right].
\end{align}
Because $p(\cdot)$ is non-decreasing, if $z$ is chosen according to Lemma \ref{lem_optimal_eq_opt_stopping23}, we can obtain
\begin{align}
&\!\!\mathbb{E}\!\left[p(y_i+t+Y_{i+1})-(\bar p_{\text{opt}_1}\!+\!\alpha) \right] \leq 0,~ t= 0, \ldots,  z-1, \\
&\!\!\mathbb{E}\!\left[p(y_i+t+Y_{i+1})-(\bar p_{\text{opt}_1}\!+\!\alpha)\right] \geq 0,~ t = z, z+1,\ldots\label{eq_opt_stopping6}
\end{align}
Using \eqref{eq_opt_stopping5}-\eqref{eq_opt_stopping6}, one can see that $\{z_{\min}(y_i,\alpha),z_{\min}(y_i,\alpha)+1, $ $z_{\min}(y_i,\alpha)+2,\ldots,z_{\max}(y_i,\alpha)\}$
is the set of optimal solutions to \eqref{eq_opt_stopping23}. 
This completes the proof.

\bibliographystyle{IEEEtran}
\bibliography{ref}

\end{document}